\newcommand{\myqed}{$\blacksquare$}
\newcommand{\myproof}{\noindent\textit{Proof. }}
\tikzstyle{vertex} = [circle, draw=black, scale=0.7]
\tikzstyle{edgelabel} = [circle, fill=white, scale=0.8]
\newcommand{\ktwotwo}[2]{
\coordinate (p) at #1;
\begin{scope}[rotate around={#2:(p)}]
	\node[vertex, ultra thick, gray] (p) at (p) {$$};
	\node[vertex] (p') at ($(1,0)+(p)$) {$$};
	\node[vertex] (q) at ($(0, -1)+(p)$) {$$};
	\node[vertex] (q') at ($(1, -1)+(p)$) {$$};
	
	\draw [ultra thick, gray, dotted] (p) -- (p');
	\draw [] (p) -- (q');
	\draw [] (q) -- (q');
	\draw [] (q) -- (p');
\end{scope}
}
\newtheorem{theorem}{Theorem}[section]
\newtheorem{claim}[theorem]{Claim}
\newtheorem{pr}[theorem]{Problem}
\newtheorem{conj}[theorem]{Conjecture}
\newtheorem{remark}[theorem]{Remark}
\newcommand{\true}{\mathsf{true}}
\newcommand{\false}{\mathsf{false}}
\journal{Discrete Optimization}
\begin{document}
\begin{frontmatter}

\title{Stable Marriage and Roommates problems with restricted edges: Complexity and approximability\tnoteref{title}}

\tnotetext[title]{A preliminary version of this paper appeared in the Proceedings of SAGT 2015: the 8th International Symposium on Algorithmic Game Theory.}

\author[cseh]{\'Agnes Cseh}
\ead{cseh@math.tu-berlin.de}
\fntext[cseh]{Supported by COST Action IC1205 on Computational Social Choice and by the Deutsche Telekom Stiftung. Part of this work was carried out whilst visiting the University of Glasgow. Present address:  School of Computer Science, Reykjavik University, Menntavegur 1, 101 Reykjavik, Iceland.}

\author[manlove]{David F. Manlove}
\fntext[manlove]{Supported by Engineering and Physical Sciences Research Council grant EP/K010042/1.}
\ead{David.Manlove@glasgow.ac.uk}

\address[cseh]{Institute for Mathematics, Technische Universit\"at Berlin, Sekr. MA 5-2, Stra{\ss}e des 17. Juni 136, 10623 Berlin, Germany}

\address[manlove]{School of Computing Science, Sir Alwyn Williams Building, University of Glasgow, Glasgow G12 8QQ, UK}

\begin{abstract}
In the Stable Marriage and Roommates problems, a set of agents is given, each of them having a strictly ordered preference list over some or all of the other agents. A matching is a set of disjoint~pairs of mutually acceptable agents. If any two agents mutually prefer each other to their partner, then they block the matching, otherwise, the matching is said to be stable. We investigate the complexity of finding a solution satisfying additional constraints on restricted pairs of agents. Restricted pairs can be either \emph{forced} or \emph{forbidden}. A stable solution must contain all of the forced pairs, while it must contain none of the forbidden pairs.

Dias et al.~\cite{DFFS03} gave a polynomial-time algorithm  to decide whether such a solution exists in the presence of restricted edges. If the answer is no, one might look for a solution close to optimal. Since optimality in this context means that the matching is stable and satisfies all constraints on restricted pairs, there are two ways of relaxing the constraints by permitting a solution to: (1)~be blocked by as few as possible pairs, or (2)~violate as few as possible constraints on restricted pairs.

Our main theorems prove that for the (bipartite) Stable Marriage problem, case~(1) leads to $\NP$-hardness and inapproximability results, whilst case~(2) can be solved in polynomial time.  For non-bipartite Stable Roommates instances, case~(2) yields an $\NP$-hard 
problem.  In the case of $\NP$-hard problems, we also discuss polynomially solvable special cases, arising from restrictions on the lengths of the preference lists, or upper bounds on the numbers of restricted pairs.
\end{abstract}

\begin{keyword}
stable matching \sep restricted edge \sep approximation algorithm 
 \MSC 05C70 \sep 68W40 \sep 05C85
\end{keyword}
\end{frontmatter}

\section{Introduction}
\label{se:intro}

In the classical \emph{Stable Marriage problem} ({\sc sm})~\cite{GS62}, a bipartite graph is given, where one colour class symbolises a set of men $U$ and the other colour class stands for a set of women~$W$. Man $u$ and woman $w$ are connected by edge $uw$ if they find one another mutually acceptable. Each participant provides a strictly ordered preference list of the acceptable agents of the opposite gender. An edge $uw$ \emph{blocks} matching $M$ if it is not in $M$, but each of $u$ and $w$ is either unmatched or prefers the other to their partner. A \emph{stable matching} is a matching not blocked by any edge. From the seminal paper of Gale and Shapley~\cite{GS62}, we know that the existence of such a stable solution is guaranteed and one can be found in linear time. Moreover, the solutions form a distributive lattice~\cite{Knu76}. The two extreme points of this lattice are called the \emph{man-} and \emph{woman-optimal stable matchings}~\cite{GS62}. These assign each man/woman their best partner reachable in any stable matching. Another interesting and useful property of stable solutions is the so-called Rural Hospitals Theorem. Part of this theorem states that if an agent is unmatched in one stable matching, then all stable solutions leave him unmatched~\cite{GS85}.

One of the most widely studied  extensions of {\sc sm} is the \emph{Stable Roommates problem} ({\sc sr})~\cite{GS62,Irv85}, defined on general graphs instead of bipartite graphs. The notion of a blocking edge is as defined above (except that it can now involve any two agents in general), but several results do not carry over to this setting. For instance, the existence of a stable solution is not guaranteed any more. On the other hand, there is a linear-time algorithm to find a stable matching or report that none exists~\cite{Irv85}. Moreover, the corresponding variant of the Rural Hospitals Theorem holds in the roommates case as well: the set of matched agents is the same for all stable solutions~\cite{GI89}. We summarise this observation as follows:
\begin{theorem}[Gusfield and Irving~\cite{GI89}]
Given an instance of {\sc sr}, the same set of agents is matched in all stable matchings.
\label{th:rural}
\end{theorem}

Both {\sc sm} and {\sc sr} are widely used in various applications. In markets where the goal is to maximise social welfare instead of profit, the notion of stability is especially suitable as an optimality criterion~\cite{Rot84}. For {\sc sm}, the oldest and most common area of applications is employer allocation markets~\cite{RS90}.  On one side, job applicants are represented, while the job openings form the other side. Each application corresponds to an edge in the bipartite graph. The employers rank all applicants to a specific job offer and similarly, each applicant sets up a preference list of jobs. Given a proposed matching $M$ of applicants to jobs, if an employer--applicant pair exists such that the position is not filled or a worse applicant is assigned to it, and the applicant received no contract or a worse contract, then this pair blocks~$M$. In this case the employer and applicant find it mutually beneficial to enter into a contract outside of $M$, undermining its integrity. If no such blocking pair exists, then $M$ is stable. Stability as an underlying concept is also used to allocate graduating medical students to hospitals in many countries~\cite{Rot08}. {\sc sr} on the other hand has applications in the area of P2P networks~\cite{GLMMRV07}.

Forced and forbidden edges in {\sc sm} and {\sc sr} open the way to formulate various special requirements on the sought solution. Such edges now form part of the extended problem instance: if an edge is \emph{forced}, it must belong to a constructed stable matching, whilst if an edge is \emph{forbidden}, it must not.
In certain market situations, a contract is for some reason particularly important, or to the contrary, not wished by the majority of the community or by the central authority in control. In such cases, forcing or forbidding the edge and then seeking a stable solution ensures that the wishes on these specific contracts are fulfilled while stability is guaranteed. Henceforth, the term \emph{restricted edge} will be used to refer either to a forbidden edge or a forced edge. The remaining edges of the graph are referred as \emph{unrestricted edges}.

Note that simply deleting forbidden edges or fixing forced edges and searching for a stable matching on the remaining instance does not solve the problem of finding a stable matching with restricted edges. Deleted edges (corresponding to forbidden edges, or those adjacent to forced edges) can block that matching. Therefore, to meet both requirements on restricted edges and stability, more sophisticated methods are needed.

The attention of the community was drawn very early to the characterization of stable matchings that must contain a prescribed set of edges. In the seminal book of Knuth~\cite{Knu76}, forced edges first appeared under the term \emph{arranged marriages}. Knuth presented an algorithm that finds a stable matching with a given set of forced edges or reports that none exists, given an instance of {\sc sm}. This method runs in $O(n^2)$ time, where $n$ denotes the number of vertices in the graph. Gusfield and Irving~\cite{GI89} provided an algorithm for {\sc sm} based on rotations that terminates in $O(|Q|^2)$ time, following $O(n^4)$ pre-processing time, where $Q$ is the set of forced edges. This latter method is favoured over Knuth's if multiple forced sets of small cardinality are proposed.

Forbidden edges appeared only in~2003 in the literature, and were first studied by Dias et al.~\cite{DFFS03}. In their paper, complete bipartite graphs were considered, but the methods can easily be extended to incomplete preference lists. Their main result was the following (in the following theorem, and henceforth, $m$ is the total number of edges in the graph).

\begin{theorem}[Dias et al.~\cite{DFFS03}] 
\label{th:bip_decision} 
The problem of finding a stable matching in a  {\sc sm} instance with forced and forbidden edges or reporting that none exists is solvable in $O(m)$ time.
\end{theorem}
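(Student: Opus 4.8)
The plan is to reduce everything to preference-list shortenings together with \emph{resumable} runs of the Gale--Shapley algorithm. The starting observation is that, since stability is defined with respect to the full acceptability graph, a matching that is stable in the given instance $I$ cannot be blocked by any edge, in particular not by a forbidden one; so the problem is exactly to find a matching $M$ that is stable in $I$, contains every forced edge, and contains no forbidden edge. I would compute the man-optimal such matching (when one exists), maintaining throughout a sub-instance $J$ --- obtained from $I$ only by shortening preference lists --- whose man-optimal stable matching, subject to a prescribed set of agents being matched, is the sought solution.

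For a forced edge $uw$ I would delete from $u$'s list every woman he prefers to $w$ and from $w$'s list every man she prefers to $u$; then $u$'s first proposal is to $w$ and the two remain engaged, so every stable matching of $J$ uses $uw$. The only pairs of $I$ that could then block such a matching are the deleted ones, i.e.\ $uw'$ with $w'$ preferred to $w$ by $u$ (and symmetrically $u'w$); such a pair $uw'$ cannot block provided $w'$ is matched to a man she prefers to $u$, so I would also truncate $w'$'s list to the men she ranks above $u$. Conflicting truncations imposed by different forced edges (a forced partner deleted, or a list emptied) signal that there is no solution. Using the Rural Hospitals Theorem to pin down which agents are matched, one proves that the stable matchings of the new instance in which the truncated agents are matched are precisely the stable matchings of $I$ that contain $uw$ and respect the earlier constraints.

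For the forbidden edges I would delete them all from $J$ and run man-proposing Gale--Shapley. A forbidden edge $uw$ can still block the output $M$ in $I$ --- exactly when $u$ and $w$ each prefer the other to their $M$-partner. Since $M$ is man-optimal in $J$, the agent $u$ is matched to his best attainable partner, which is worse than $w$; hence no stable matching of $J$, and (one argues) no solution at all, can give $u$ a partner as good as $w$, so every solution must match $w$ to a man she prefers to $u$. I would therefore truncate $w$'s list to the men above $u$, declare infeasibility if this empties $w$'s list or contradicts a forced edge, resume Gale--Shapley, and repeat until no forbidden edge blocks the current matching; that matching is the answer.

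Because each preference-list entry is deleted at most once and each (man,\,woman) pair is proposed at most once over all the resumptions, the whole procedure runs in $O(m)$ time. I expect the main difficulty to be the correctness proof: one must show that every list-shortening step removes exactly the stable matchings of $I$ that are inconsistent with the constraint being processed --- creating no new blocking pair and discarding no solution --- and, for completeness, that each declaration of infeasibility is genuinely forced. The Rural Hospitals Theorem (to control which agents are matched) and the lattice structure of stable matchings (to see that pushing $w$ above $u$ costs nothing) are the tools that should make these arguments go through.
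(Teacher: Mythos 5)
The paper does not actually prove this statement: Theorem~\ref{th:bip_decision} is imported from Dias et al.~\cite{DFFS03} with only a citation, so there is no in-paper argument to measure yours against. That said, your plan is essentially the standard (and, as far as the cited work goes, the original) route: Gale--Shapley with preference-list truncations, the Rural Hospitals Theorem to decide which agents must be matched, and man-optimality to license each truncation. The individual deductions check out. A deleted pair $uw'$ with $w'$ above the forced partner $w$ fails to block exactly when $w'$ is matched strictly above $u$; since every remaining candidate solution is a stable matching of the shrinking sub-instance $J$, man-optimality of the current matching really does imply that $u$ can never do as well as $w$ in any solution, so a blocking forbidden pair $uw$ forces $w$'s list to be cut below $u$; and the amortised $O(m)$ count (each list entry deleted at most once, each proposal made at most once across resumptions) is the right accounting.

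One concrete point would make the algorithm, as literally stated, return a wrong answer. Your loop terminates when no \emph{forbidden} edge blocks the current matching, but the pairs you deleted while installing the forced edges --- $uw'$ with $u$ preferring $w'$ to his forced partner $w$, and symmetrically $u'w$ --- are not forbidden edges and are never re-examined by that test. If such a $w'$ ends up unmatched in the final matching, then $uw'$ blocks it in the original instance and in fact no solution exists, yet your procedure would output the matching. You already hold the tool to close this (by the Rural Hospitals Theorem, ``is $w'$ matched?'' is a property of the final sub-instance, checkable on the single matching you produce), but the check must be added to the termination condition. More generally, the ``precisely'' in your claimed equivalence between the stable matchings of $J$ satisfying the side conditions and the solutions of $I$ is the entire content of the correctness proof and is currently asserted rather than established; you have correctly identified this as the remaining work, and with the matched-agents check folded into the output step the argument does go through.
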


While Knuth's method relies on basic combinatorial properties of stable matchings, the other two algorithms make use of \emph{rotations}. We refer the reader to~\cite{GI89} for background on these. The problem of finding a stable matching with forced and forbidden edges in an {\sc sm} instance can easily be formulated as a weighted stable matching problem (that is, we seek a stable matching with minimum weight, where the weight of a matching $M$ is the sum of the weights of the edges in~$M$). Let us assign all forced edges weight~$-1$, all forbidden edges weight~$1$, and all remaining edges weight~0. A stable matching satisfying all constraints on restricted edges exists if and only if there is a stable matching of weight~$-|Q|$ in the weighted instance, where $Q$ is the set of forced edges. With the help of rotations, minimum weight stable matchings can be found in polynomial time~\cite{ILG87,Fed92,Rot92,Fed94} (see the final paragraph of Section~\ref{se:preliminaries} for more detail on the role played by each of these references).

Since finding a weight-minimal stable matching in {\sc sr} instances is an $\NP$-hard task~\cite{Fed92}, it follows that solving the problem with forced and forbidden edges requires different methods from the aforementioned weighted transformation. Fleiner et al.~\cite{FIM07} showed that any {\sc sr} instance with forbidden edges can be converted into another stable matching problem involving ties that can be solved in $O(m)$ time~\cite{IM02} and the transformation has the same time complexity as well. Forced edges can easily be eliminated by forbidding all edges adjacent to them, therefore we can state the following result.

\begin{theorem}[Fleiner et al.~\cite{FIM07}]  
\label{th:sr_decision}
The problem of finding a stable matching in an {\sc sr}  instance with forced and forbidden edges or reporting that none exists is solvable in $O(m)$ time.
\end{theorem}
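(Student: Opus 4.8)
The plan is to reduce the problem in two stages: first eliminate forced edges at the price of extra forbidden edges, and then apply the transformation of Fleiner et al.~\cite{FIM07}, which turns an \textsc{sr} instance with forbidden edges into an ordinary stable matching problem with ties that~\cite{IM02} solves in $O(m)$ time. Since none of these steps changes the underlying graph by more than a constant factor, the overall running time will be $O(m)$.

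For the first stage, let $Q$ be the set of forced edges. If $Q$ is not a matching, or some edge is both forced and forbidden, then no stable matching can respect the constraints and we stop; this is checked in $O(m)$ time. Otherwise form an instance $I^{*}$ on the same graph in which every originally forbidden edge stays forbidden and, for each forced edge $uv\in Q$, every other edge incident with $u$ or with $v$ is also declared forbidden. I claim the stable matchings of $I^{*}$ that avoid all of $I^{*}$'s forbidden edges are precisely the stable matchings of $I$ that respect all restricted edges. Indeed, if $M$ is stable in $I^{*}$ and avoids its forbidden edges, then at each forced pair $uv$ both $u$ and $v$ are matched only along $uv$ or are unmatched; were $u$ unmatched, then $v$ would be unmatched too (its only non-forbidden edge is $uv$), so the acceptable edge $uv$ would block $M$ — a contradiction. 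Hence $uv\in M$, so $M\supseteq Q$, and $M$ also avoids the original forbidden edges, so it solves $I$. Conversely, a solution $M$ of $I$ contains each forced edge $uv$, which leaves $u$ and $v$ matched only to each other, so $M$ avoids every newly forbidden edge; being stable on the same graph, $M$ solves $I^{*}$.

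It then remains to solve the case of forbidden edges only, which I would handle by invoking~\cite{FIM07}. The idea there is to attach to each forbidden edge $\{u,v\}$ a constant-size gadget of fresh agents with short preference lists and to splice gadget agents into the lists of $u$ and $v$ using ties, in such a way that: (i) any matching using a forbidden edge admits a super-blocking pair (a pair whose two members each weakly prefer the other to their assigned partner) involving a gadget agent, so no super-stable matching of the resulting instance $I'$ with ties uses a forbidden edge; and (ii) super-stable matchings of $I'$, restricted to the original agents, are exactly the stable matchings of the forbidden-edge instance, each such matching extending to a super-stable matching of $I'$ by pairing up gadget agents. As $I'$ has size $O(m)$ and is built in $O(m)$ time, and a super-stable matching of an instance with ties can be found, or shown not to exist, in $O(m)$ time~\cite{IM02}, translating the answer back (or reporting infeasibility when $I'$ has no super-stable matching) completes the argument within the stated bound.

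The routine parts are the infeasibility pre-tests, the complexity bookkeeping, and the short equivalence argument for the forced-edge reduction. The real obstacle — and the contribution genuinely attributable to Fleiner et al. — is the design of the forbidden-edge gadget and the verification of the super-stability correspondence in both directions: one must be sure that the inserted ties neither spuriously destroy super-stability of the matchings to be kept, nor create new super-stable matchings without a stable counterpart in the original instance. For this core step I would rely on~\cite{FIM07}.
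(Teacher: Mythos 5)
Your proposal is correct and follows essentially the same route as the paper: the paper likewise eliminates forced edges by forbidding all edges adjacent to them and then delegates the forbidden-edges-only case to the Fleiner et al.\ transformation into a stable matching problem with ties, solvable in $O(m)$ time via~\cite{IM02}. Your explicit equivalence argument for the forced-to-forbidden reduction is a welcome elaboration of a step the paper only asserts, and you correctly locate the genuinely hard content in the cited gadget construction.
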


As we have seen so far, answering the question as to whether a stable solution containing all forced and avoiding all forbidden edges exists can be solved efficiently in the case of both {\sc sm} and {\sc sr}. We thus concentrate on cases where the answer to this question is no. What kind of approximate solutions exist then and how can we find them?

\paragraph{Our contribution} Since optimality is defined by two criteria, it is straightforward to define approximation from those two points of view. In case~BP, all constraints on restricted edges must be satisfied, and we seek a matching with the minimum number of blocking edges. In case~CV, we seek a stable matching that violates the fewest constraints on restricted edges.  The optimisation problems that arise from each of these cases are defined formally in Section~\ref{se:preliminaries}.

In Section~\ref{se:almost_stable}, we consider case~BP: that is, all constraints on restricted edges must be fulfilled, while the number of blocking edges is minimised. We show that in the {\sc sm} case, this problem is computationally hard and not approximable within $n^{1-\varepsilon}$ for any $\varepsilon > 0$, unless $\P = \NP$. We also discuss special cases for which this problem becomes tractable. This occurs if the maximum degree of the graph is at most~2 or if the number of blocking edges in the optimal solution is a constant. We point out a striking difference in the complexity of the two cases with only forbidden and only forced edges: the problem is polynomially solvable if the number of forbidden edges is a constant, but by contrast it is $\NP$-hard even if the instance contains a single forced edge. We also prove that when the restricted edges are either all forced or all forbidden, the optimisation problem remains $\NP$-hard even on very sparse instances, where the maximum degree of a vertex is~3.

Case~CV, where the number of violated constraints on restricted edges is minimised while stability is preserved, is studied in Section~\ref{se:viol_const}. It is a rather straightforward observation that in {\sc sm}, the setting can be modelled and efficiently solved with the help of edge weights. Here we show that on non-bipartite graphs, the problem becomes $\NP$-hard
. As in case~BP, we also discuss the complexity of degree-constrained restrictions and establish that the $\NP$-hardness results remain intact even for graphs with degree at most~3, while the case with degree at most~2 is polynomially solvable.

A structured overview of our results for general {\sc sm} and {\sc sr} instances is contained in Table~\ref{ta:results}. 
\begin{table}[ht]
	\centering
    \resizebox{\columnwidth}{!}{
		\begin{tabular}{|c|c|c|}
		\hline
			\phantom{n}& Stable Marriage & Stable Roommates \\ \hline
			\begin{tabular}{c}case BP: \\ min \# blocking edges \end{tabular} & \begin{tabular}{c}$\NP$-hard to approximate \\ within $n^{1-\varepsilon}$ \end{tabular}&  \begin{tabular}{c}$\NP$-hard to approximate \\ within $n^{1-\varepsilon}$ \end{tabular}\\  \hline
			 \begin{tabular}{c}case CV: min \# violated \\ restricted edge constraints \end{tabular} &\begin{tabular}{c}solvable \\ in polynomial time \end{tabular} & \begin{tabular}{c}$\NP$-hard
             \end{tabular}\\
		\hline
		\end{tabular}}\newline
\caption{Summary of results}
\label{ta:results}
\end{table}

\section{Preliminaries and techniques}
\label{se:preliminaries}

In this section, we introduce the notation used in the remainder of the paper and also define the key problems that we investigate later. An instance $\mathcal{I} =(G, O)$ of the \emph{Stable Marriage problem} ({\sc sm}) consists of a bipartite graph $G =(U \cup W, E)$ with $n$ vertices and $m$ edges, and a set $O$: the set of strictly ordered, but not necessarily complete preference lists. These lists are provided on the set of adjacent vertices at each vertex. The \emph{Stable Roommates problem} ({\sc sr}) differs from {\sc sm} in one sense: the underlying graph $G$ need not be bipartite. In both {\sc sm} and {\sc sr}, a matching $M$ in $G$ is sought, assigning each agent to at most one partner. If a vertex $v\in V(G)$ is matched in $M$, we denote by $M(v)$ the partner of $v$ in $M$.  An edge $uw \in E \setminus M$ \emph{blocks} $M$, or forms a \emph{blocking pair} of $M$ if either $u$ is unmatched or prefers $w$ to $M(u)$, and either $w$ is unmatched or prefers $u$ to $M(w)$. A matching that is not blocked by any edge is called \emph{stable}.

As already mentioned in the Introduction, an {\sc sr} instance need not admit a stable solution. The number of blocking edges is a characteristic property of every matching. The set of edges blocking $M$ is denoted by~$bp(M)$. A natural goal is to find a matching minimising~$|bp(M)|$;
following the consensus in the literature, such a matching is called \emph{almost stable}. This approach has a broad literature: almost stable matchings have been investigated in {\sc sm}~\cite{KMV94,HIM09,BMM10} and {\sc sr}~\cite{ABM06,BMM12} instances.

All problems investigated in this paper deal with at least one set of restricted edges. The set of forbidden edges is denoted by~$P$, while $Q$ stands for the set of forced edges. We assume throughout the paper that $P \cap Q = \emptyset$. A matching $M$ satisfies all constraints on restricted edges if $M \cap P = \emptyset$ and $Q \subseteq M$.

In Figure~\ref{fi:base}, a sample {\sc sm} instance on four men and four women can be seen. The preference ordering is shown on the edges. For instance, vertex $u_2$ ranks $w_1$ best, then $w_4$, and $w_2$ last. The set of forbidden edges $P = \{ u_2w_2, u_3w_3\}$ is marked by dotted grey edges. The unique stable matching $M = \{u_1w_1, u_2w_2, \linebreak[0] u_3w_3, u_4w_4\}$ contains both forbidden edges. Later on, we will return to this sample instance to demonstrate approximation concepts on it.

\begin{figure}[h]
\centering
\begin{tikzpicture}[scale=0.9, transform shape]
\node[vertex, label=above:$u_1$] (u_1) at (0, 3) {};
\node[vertex, label=above:$u_2$] (u_2) at (4, 3) {$$};
\node[vertex, label=above:$u_3$] (u_3) at (8, 3) {$$};
\node[vertex, label=above:$u_4$] (u_4) at (12, 3) {$$};

\node[vertex, label=below:$w_1$] (w_1) at (0, 0) {};
\node[vertex, label=below:$w_2$] (w_2) at (4, 0) {};
\node[vertex, label=below:$w_3$] (w_3) at (8,0) {};
\node[vertex, label=below:$w_4$] (w_4) at (12,0) {};

\draw [ultra thick, dotted, gray] (u_2) -- node[edgelabel, near start] {3} node[edgelabel, near end] {1} (w_2);
\draw [ultra thick] (u_2) -- node[edgelabel, very near start] {2} node[edgelabel, very near end] {2}  (w_4);
\draw [ultra thick, dotted, gray] (u_3) -- node[edgelabel, near start] {1} node[edgelabel, near end] {3} (w_3);
\draw [ultra thick, gray] (u_4) -- node[edgelabel, near start] {1} node[edgelabel, near end] {1} (w_4);
\draw [ultra thick, gray] (u_1) -- node[edgelabel, near start] {1} node[edgelabel, near end] {1} (w_1);
\draw [ultra thick] (u_4) -- node[edgelabel, near start] {2} node[edgelabel, very near end] {1} (w_3);
\draw [ultra thick] (u_1) -- node[edgelabel, very near start] {2} node[edgelabel, very near end] {2} (w_3);
\draw [ultra thick] (u_2) --  node[edgelabel, very near start] {1} node[edgelabel, near end] {2}(w_1);

\end{tikzpicture}
\caption{A sample stable marriage instance with forbidden edges}
\label{fi:base}
\end{figure}

The first approximation concept (case~BP described in Section~\ref{se:intro}) is to seek a matching $M$ that satisfies all constraints on restricted edges, but among these matchings, it admits the  minimum number of blocking edges. This leads to the following problem definition.

\begin{pr}{\sc min bp sr restricted}\ \\
	Input: $\mathcal{I} = (G, O, P, Q)$ comprising an {\sc sr} instance $(G,O)$, a set of forbidden edges $P$ and a set of forced edges~$Q$. \\
	Output: A matching $M$ such that $M \cap P = \emptyset$, $Q \subseteq M$ and~$|bp(M)| \leq |bp(M')|$ for every matching $M'$ in~$G$ satisfying $M'\cap P = \emptyset$, $Q \subseteq M'$.
\end{pr}

Special attention is given to two special cases of {\sc min bp sr restricted}: in {\sc min bp sr forbidden}, $Q = \emptyset$, while in {\sc min bp sr forced}, $P = \emptyset$. Note that an instance of {\sc min bp sr forced} or {\sc min bp sr restricted} can always be transformed into an instance of {\sc min bp sr forbidden} by forbidding all edges that are adjacent to a forced edge. This transformation does not affect the number of blocking edges.

According to the other intuitive approximation concept (case~CV described in Section~\ref{se:intro}), stability constraints need to be fulfilled, while some of the constraints on restricted edges are relaxed. The goal is to find a stable matching that violates as few constraints on restricted edges as possible.

\begin{pr}{\sc sr min restricted violations} \ \\
	Input: $\mathcal{I} = (G, O, P, Q)$ comprising an {\sc sr} instance $(G,O)$, a set of forbidden edges $P$ and a set of forced edges~$Q$. \\
	Output: A stable matching $M$ such that~$|M \cap P| +|Q \setminus M| \leq |M' \cap P| + |Q \setminus M'|$ for every stable matching $M'$ in~$G$.
\end{pr}

Just as in the previous approximation concept (referred to as case~BP in Section~\ref{se:intro}), we separate the two subcases with only forbidden and only forced edges. If $Q = \emptyset$, {\sc sr min restricted violations} is referred as {\sc sr min forbidden}, while if $P = \emptyset$, the problem becomes {\sc sr max forced}.  If $P=\emptyset$ or $Q=\emptyset$ then that set is omitted from an instance of {\sc min bp sr restricted} or {\sc sr min restricted violations} as appropriate.

When considering the decision versions of the problems defined in this section, we append {\sc dec} to the problem name and add a positive integer $K$ to the problem instance.  The problem is then to decide whether a feasible solution exists with measure at most $K$.  For example, in the case of the optimisation problem {\sc min bp sr forbidden}, an instance of the decision problem {\sc min bp sr forbidden dec} comprises a tuple $(G,O,P,K)$, where $(G,O,P)$ is as per the definition of {\sc min bp sr forbidden} and $K$ is a positive integer.  The question is whether there is a matching $M$ such that $|M\cap P|=\emptyset$ and $|bp(M)|\leq K$. 

In all discussed problems, $n$ is the number of vertices and $m$ is the number of edges in the graph underlying the particular problem instance. When considering the restriction of any of the above problems to the case of a bipartite graph {\sc sr} is replaced by {\sc sm} in the problem name.

In case~BP, the subcase with only forced edges can be transformed into the other subcase, simply by forbidding edges adjacent to forced edges. This straightforward transformation is not valid for case~CV. Suppose a forced edge was replaced by an unrestricted edge, but all of its adjacent edges were forbidden. A solution that does not contain the original forced edge might contain two of the forbidden edges, violating more constraints than the original solution. Yet most of our proofs are presented for the problem with only forbidden edges, and they require only slight modifications for the case with forced edges.

A powerful tool used in several proofs in our paper is to convert some of these problems into a weighted {\sc sm} or {\sc sr} problem, where the goal is to find a stable matching with the lowest total edge weight, taken over all stable matchings. Irving et al.~\cite{ILG87} were the first to show that weighted {\sc sm} can be solved in polynomial time, giving an $O(n^4 \log n)$ algorithm if the weight function is monotone in the preference ordering, non-negative and integral. Feder~\cite{Fed92,Fed94} showed a method to drop the monotonicity requirement. He also presented the best known bound for the running time of an algorithm for finding a minimum weight stable matching in {\sc sm}: $O(n^{2} \cdot \log({\frac{K}{n^2}}+2)\cdot \min{\{n, \sqrt{K}\}})$, where $K$ is the weight of an optimal solution. Redesigning the weight function to avoid the monotonicity requirement using Feder's method can radically increase~$K$. Fortunately, linear programming techniques allow the conditions to be dropped while retaining polynomial-time solvability. A simple and elegant formulation of the {\sc sm} polytope is known~\cite{Rot92} and using this, a minimum weight stable matching can be computed for all real-valued weight functions in polynomial time via linear programming. For weighted {\sc sr}, finding an optimal matching is $\NP$-hard, but 2-approximable with combinatorial methods, under the assumption of monotone, non-negative and integral weights~\cite{Fed92}. With the help of LP methods, a 2-approximation can be found for every non-negative weight function that satisfies a special monotonicity constraint~\cite{TS97,TS98}.

\section{Almost stable matchings with restricted edges}
\label{se:almost_stable}

In this section, constraints on restricted edges must be fulfilled strictly, while the number of blocking edges is minimised. Our results are presented in three subsections, and most of the results are given for {\sc min bp sm restricted}. Firstly, in Section~\ref{se:gencomp}, basic complexity results are discussed. In particular, we prove that the studied problem {\sc min bp sm restricted} is in general $\NP$-hard and very difficult to approximate. Thus, restricted cases are analysed in Section~\ref{se:bounded_par}. First we assume that the number of forbidden, forced or blocking edges can be considered as a constant. Due to this assumption, two of the three problems that naturally follow from imposing these restrictions become tractable, but surprisingly, not all of them. Then, degree-constrained cases are discussed. We show that the $\NP$-hardness result for {\sc min bp sm restricted} holds even for instances where each preference list is of length at most~3, while on graphs with maximum degree~2, the problem becomes tractable. Finally, in Section~\ref{se:sr} we consider the problem {\sc min bp sr restricted} and briefly elaborate on whether the results established for the bipartite case carry over to the {\sc sr} case.

\subsection{General complexity and approximability results}
\label{se:gencomp}

When minimising the number of blocking edges, one might think that removing the forbidden edges temporarily and then searching for a stable solution in the remaining instance leads to an optimal solution. Such a matching can only be blocked by forbidden edges, but as the upcoming example demonstrates, optimal solutions are sometimes blocked by unrestricted edges exclusively. In some instances, all almost stable solutions admit only non-forbidden blocking edges. Moreover, a man- or woman-optimal almost stable matching with forbidden edges does not always exist.

Let us recall the {\sc sm} instance in Figure~\ref{fi:base}. In the graph with edge set $E(G) \setminus P$, a unique stable matching exists: $M = \{u_1w_1, u_4w_4 \}$. Matching $M$ is blocked by both forbidden edges in the original instance. On the other hand, matching $M_{1} = \{u_1w_1, u_2w_4, u_4w_3\}$ is blocked by exactly one edge: $bp(M_{1}) = u_4w_4$. Similarly, matching  $M_{2} = \{u_1w_3, u_2w_1, u_4w_4\}$ is blocked only by~$u_1w_1$. Therefore, $M_1$ and $M_2$ are both solutions to {\sc min bp sm forbidden} on this instance. One can easily check that $M_{1}$ and $M_{2}$ are the only matchings with the minimum number of blocking edges. They both are blocked only by unrestricted edges. Moreover, $M_{1}$ is better for $u_1, w_1$ and $w_3$, whereas $M_{2}$ is preferred by $u_2, u_4$ and~$w_4$.

In Theorems~\ref{th:minbpsmforbidden} and \ref{th:inappr_minbp} we present two results demonstrating the $\NP$-hardness and inapproximability of special cases of {\sc min bp sm restricted}.

\begin{theorem}
	\label{th:minbpsmforbidden}
	{\sc min bp sm forbidden dec} and {\sc min bp sm forced dec} are $\NP$-complete.  The result holds even if all preference lists are complete.
\end{theorem}

\begin{proof}
Clearly both problems belong to $\NP$.  We show the $\NP$-hardness of both problems by giving a reduction from the following problem:

\begin{pr} {\sc min bp psmi dec}\ \\
	Input: $\mathcal{I} = (G, O, K)$ comprising an {\sc sm} instance $(G,O)$ and a positive integer~$K$. \\
	Output: A perfect matching $M$ such that~$|bp(M)| \leq K$.
\end{pr}

{\sc min bp psmi} denotes the minimisation version of {\sc min bp psmi dec}, in which we seek a perfect matching with the minimum number of blocking pairs, taken over all perfect matchings in $G$.  {\sc min bp psmi-dec} is $\NP$-complete, and unless $\P = \NP$, {\sc min bp psmi} is not approximable within a factor of $n^{1 - \varepsilon}$, for any~$\varepsilon > 0$~\cite{BMM10}.

We firstly show $\NP$-hardness of {\sc min bp sm forbidden dec} and then indicate how to adapt the proof to show a similar result for {\sc min bp sm forced dec}.  We reduce from {\sc min bp psmi dec} as mentioned above.  Given an instance $\mathcal{I} = (G, O, K)$ of this problem we define an instance $\mathcal{I'} = (G', O', P, K)$ of {\sc min bp sm forbidden dec} as follows.  Let $G=(V,E)$ where $U$ and $W$ are the two colour classes of~$G$.  Let $n=|V|$; then $|U|=|W|=n/2$.  Let $U=\{u_1,u_2,\dots,u_{n/2}\}$
and let $W=\{w_1,w_2,\dots,w_{n/2}\}$.  Add the vertices in $V$ to $G'$.
In addition, $K + 1$ new vertices representing women are added to $G'$. They are denoted by $Y=\{y_1,y_2,\dots,y_{K+1}\}$. Similarly, $K+1$ new men $X=\{x_1.x_2.\dots,x_{K+1}\}$ are added to $G'$. Thus, each colour class of $G'$ consists of $n/2 + K + 1$ vertices. 
	
In $O'$ the preference lists of vertices already in $V(G)$ are structured in three blocks. Each man $u_i$ in the original instance $\mathcal{I}$ keeps his preference list in $O$ at the top of his new list in~$O'$. After these vertices, the entire set of newly-introduced women in $Y$ follows, in arbitrary order. Finally, the rest of the women in $W$, not already in $u_i$'s list follow, in arbitrary order. A similar ordering is used when defining the preference list of each~$w_j$. The original list in $O$ is followed by the vertices in $X$, and then the rest of the men in $U$ follow.

The added newly-added vertices in $X\cup Y$ have different preference orderings. Man $x_i$'s list consists of the women in $W$ in arbitrary order, followed by $y_i$, and then the women in $Y\backslash \{y_i\}$ in arbitrary order.  Similarly $y_j$ ranks all men in $U$ first in arbitrary order, followed by $x_j$, and then the men in $X\backslash \{x_j\}$ in arbitrary order.  The preference lists of the vertices in $\mathcal I'$ are shown in Figure~\ref{fig:preflists}.

\begin{figure}[h]
\begin{center}
\resizebox{\columnwidth}{!}{
\begin{tabular}{ r l l l l}
   $u_i$:& $u_i$'s list in $O$& $y_1,y_2,\dots,y_{K+1}$ & rest of women in $W$ & $(1\leq i\leq n/2)$\\
   $w_j$:& $w_j$'s list in $O$& $x_1,x_2,\dots,x_{K+1}$ & rest of men in $U$ & $(1\leq j\leq n/2)$\\
   $x_i$ :& $w_1, w_2, ..., w_{n/2}$ & $y_i$ & women in $Y\backslash \{y_i\}$ & ($1\leq i\leq K+1$)\\
   $y_j$ :& $u_1, u_2, ..., u_{n/2}$ & $x_j$ &  men in $X\backslash \{x_j\}$ &  ($1\leq j\leq K+1$)
 \end{tabular}
 }
\end{center}
\caption{Preference lists in the constructed instance of {\sc min bp sm forbidden dec}.}
\label{fig:preflists}
\end{figure}

Having described $G'$ and $O'$ completely, all that remains is to specify the set of forbidden edges~$P$. Each man $u_i$ has $K+1$ forbidden edges adjacent to him, namely, all edges to the newly-introduced $y_1,y_2,\dots,y_{K+1}$ vertices. Similarly, edges between every $w_j$ and all $x_1,x_2,\dots,x_{K+1}$ vertices are also forbidden. In total, $\mathcal{I'}$ has $n(K+1)$ forbidden edges.
	
	\begin{claim}
If $M$ is a perfect matching in $\mathcal{I}$ admitting at most $K$ blocking edges, then there is a matching $M'$ in $\mathcal{I'}$ with $M' \cap P = \emptyset$ admitting also at most $K$ blocking edges.
	\end{claim}
	
		The construction of $M'$ begins with copying $M$ to~$G'$. Since $M$ is a perfect matching, all vertices in $V(G)$ are matched to vertices in $V(G)$ and thus, no forbidden edge can be in~$M'$. The remaining vertices $x_1,x_2,\dots,x_{K+1}$ and $y_1,y_2,\dots,y_{K+1}$ are paired with each other: each $x_iy_i$ is added to~$M'$.
		
		$M'$ is a perfect matching in $G'$, not containing any of the forbidden edges. Next, we show that no edge in $E(G') \setminus M'$ blocks~$M'$ that did not block $M$ already. First of all, the forbidden edges do not block $M'$, because the preference lists of the vertices already in $V(G)$ were constructed in such a way that the vertices on preference lists in $O$ are better than the vertices in $X\cup Y$, and all $u_i, w_j$ vertices were matched in the perfect matching~$M$. The first $n/2$ choices of any newly-added vertex in $X\cup Y$ are thus not blocking edges. At the same time, all these new vertices are matched to their first-choice partners among the newly-added vertices. Therefore no edge incident to them can block~$M'$. All that remains is to observe that $u_i w_j$ edges blocking $M'$ in $\mathcal{I'}$ already blocked $M$ in~$\mathcal{I}$, because $M$ is the restriction of $M'$ to~$G$. Therefore, the edges blocking $M$ and $M'$ are identical.
	
	\begin{claim}
		If $M'$ is a matching in $\mathcal{I'}$ with $M' \cap P = \emptyset$ admitting at most $K$ blocking edges, then its restriction to $G$ is a perfect matching $M$ in $\mathcal{I}$ admitting at most $K$ blocking edges.
	\end{claim}
	
		First, we discuss some essential structural properties of~$M'$. The forbidden edges are not in $M'$, and at most $K$ of them can block it. Suppose that there is a man $u_i$ not married to any woman $w_j$ in matching $M$. Since $w_j$ ranks exactly $K + 1$ forbidden edges after its listed partners in~$G$, and forbidden edges are the first $n/2$ choices of their other end vertex, all $K + 1$ of them block~$M'$, regardless of the remaining edges in~$M'$. Having derived a contradiction to our assumption that at most $K$ edges block $M'$ in total, we can state that each man $u_i$ is matched in $M'$ to a vertex $w_j$ in $O$. Thus, the restriction of $M'$ to $G$ is a perfect matching with at most $K$ blocking edges.
		
		$\NP$-hardness can be obtained for {\sc min bp sm forced dec} by simply forcing all edges of the form $x_i y_i$ in the above reduction.
\end{proof}

We now strengthen Theorem~\ref{th:minbpsmforbidden} by giving strong lower bounds for the approximability of {\sc min bp sm forbidden} and {\sc min bp sm forced}.  The reduction given in the proof of the next theorem builds on the reduction given in the proof of Theorem~\ref{th:minbpsmforbidden}.
	
\begin{theorem}
\label{th:inappr_minbp}
Each of {\sc min bp sm forbidden} and {\sc min bp sm forced} is not approximable within a factor of $n^{1-\varepsilon}$, for any~$\varepsilon > 0$, unless $\P = \NP$.  The result holds even if all preference lists are complete.
\end{theorem}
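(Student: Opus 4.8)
The plan is to reuse the gap produced by the reduction in the proof of Theorem~\ref{th:minbpsmforbidden}, but to feed it the \emph{gap} version of the \textsc{min bp psmi} inapproximability and --- the one real novelty --- to take the number of gadget copies large enough to transport that gap. Fix $\varepsilon>0$. By~\cite{BMM10}, applied with parameter $\varepsilon/2$, it is $\NP$-hard to decide, for an instance $\mathcal{I}=(G,O)$ of \textsc{min bp psmi} with $n$ men and $n$ women, whether $bp(\mathcal{I})\le a$ or $bp(\mathcal{I})\ge b$, where $b\ge a\cdot n^{1-\varepsilon/2}$ and --- the hard instances having optimum of size $O(n)$ --- we may assume $b\le n$. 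Given such an instance I would run the construction of Theorem~\ref{th:minbpsmforbidden} verbatim, except that exactly $b$ new women $q_1,\dots,q_b$ and $b$ new men $p_1,\dots,p_b$ are introduced (i.e.\ the threshold in that construction is set to $K=b-1$): the edges $u_iq_k$ and $w_jp_k$ are forbidden, the preference lists are built as there, and each $p_k$ is matched to $q_k$ in the constructed solution.

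Claims~1 and~2 inside the proof of Theorem~\ref{th:minbpsmforbidden} already supply everything needed about the behaviour of the reduction; I would only reformulate them as two inequalities for the resulting instance~$\mathcal{I'}$. First, $bp(\mathcal{I'})\le bp(\mathcal{I})$, obtained by extending an optimal perfect matching of $\mathcal{I}$ by the $b$ pairs $q_kp_k$: by the argument of Claim~1 this extension avoids~$P$ and is blocked by exactly the same edges. Second, $bp(\mathcal{I'})\ge\min\{bp(\mathcal{I}),\,b\}$: for any $P$-avoiding matching $M'$ of $\mathcal{I'}$, either the restriction of $M'$ to the $2n$ vertices $u_1,\dots,u_n,w_1,\dots,w_n$ is a perfect matching all of whose edges lie in $G$ --- hence a perfect matching of $\mathcal{I}$, which is blocked by at least $bp(\mathcal{I})$ edges (these also block $M'$, since the $O$-lists are the top blocks of the $O'$-lists) --- or else, as established by the counting argument of Claim~2, some $w_j$ is such that all $b$ forbidden edges $w_jp_1,\dots,w_jp_b$ block $M'$. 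Putting the two together, $bp(\mathcal{I})\le a\Rightarrow bp(\mathcal{I'})\le a$ and $bp(\mathcal{I})\ge b\Rightarrow bp(\mathcal{I'})\ge b$, so $\mathcal{I'}$ inherits the multiplicative gap $b/a\ge n^{1-\varepsilon/2}$.

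To finish I would count vertices and invoke the hypothetical algorithm. The instance $\mathcal{I'}$ has $n'=2(n+b)\le 4n$ vertices, so a polynomial-time $(n')^{1-\varepsilon}$-approximation algorithm for \textsc{min bp sm forbidden}, applied to $\mathcal{I'}$, outputs a feasible matching blocked by at most $(n')^{1-\varepsilon}\cdot bp(\mathcal{I'})\le(4n)^{1-\varepsilon}\,a$ edges in the case $bp(\mathcal{I})\le a$, and by at least $bp(\mathcal{I'})\ge b\ge a\,n^{1-\varepsilon/2}$ edges in the case $bp(\mathcal{I})\ge b$; since $(4n)^{1-\varepsilon}<n^{1-\varepsilon/2}$ for all sufficiently large $n$, the number reported separates the two cases, so the algorithm decides the $\NP$-hard gap problem for all large $n$ (finitely many small instances being settled by exhaustive search) --- contradiction, unless $\P=\NP$. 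For \textsc{min bp sm forced} I would, exactly as at the end of the proof of Theorem~\ref{th:minbpsmforbidden}, instead force every edge $p_kq_k$; this automatically excludes all the edges forbidden above (each is incident to some $p_k$ or $q_k$), so neither inequality nor the vertex bound changes and the same contradiction follows. The step I expect to need the most care is precisely the sizing choice $K=b-1$: with fewer gadget copies the NO case only yields $bp(\mathcal{I'})\ge K+1$, which does not preserve a growing gap, and it is for this reason that one must start from the quantitative gap form of the \textsc{min bp psmi} result (and use that its hard instances have optima of size $O(n)$, keeping $n'=O(n)$) rather than from mere $\NP$-hardness.
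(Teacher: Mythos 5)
Your proof is correct and follows essentially the same route as the paper's: both reduce from the gap instances of \textsc{min bp psmi} constructed in~\cite{BMM10} (YES-instances with $bp=n_1+n_2$, NO-instances with $bp\ge n_1+n_2+C$), pad with enough dummy agents joined by forbidden (resp.\ forced) edges so that any feasible matching failing to restrict to a perfect matching of the original graph incurs at least the NO-threshold many blocking edges, and then check that the padding keeps the vertex count linear so the multiplicative gap survives. Your parametrisation via $(a,b)$ with $K=b-1$ dummies is just a cleaner packaging of the paper's explicit arithmetic with $C$ copies, so no further comment is needed.
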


\begin{proof}
We will give a reduction from the following $\NP$-complete problem: 

\begin{pr}{\sc exact maximal matching}\ \\
	Input: $\mathcal{I} = (G, K)$ comprising a bipartite graph $G$ and a positive integer~$K$.\\
	Question: Is there a maximal matching $M$ in $G$ such that $|M| = K$?
\end{pr}

{\sc exact maximal matching} is $\NP$-complete even for graphs where all vertices representing men have degree two, while all vertices of the other colour class have degree three~\cite{OMa07}. We show that if there were a polynomial approximation algorithm within a factor of $n^{1-\varepsilon}$ to {\sc min bp sm forbidden}, then it would also find an exact maximal matching in~$\mathcal{I}$.

In our proof, every instance $\mathcal{I}=(G,K)$ of {\sc exact maximal matching} is transformed into an instance $\mathcal{I''}=(G',O',P)$ of {\sc min bp sm forbidden}.  We later show how to adapt the proof for {\sc min bp sm forced}.  Let $n_1$ and $n_2$ denote the size of each colour class in~$\mathcal{I}$, such that $m = 2 n_1 = 3 n_2$.

We show that if there were a polynomial approximation algorithm with a performance guarantee of $n^{1-\varepsilon}$ for {\sc min bp sm forbidden} (where $n$ is the number of vertices in $G'$), then it would solve {\sc exact maximal matching} in polynomial time.  To do so, another transformation is used, involving $\mathcal{I'}$, an instance of {\sc min bp psmi}. In~\cite{BMM10}, an instance $\mathcal{I'}=(G,O)$ of {\sc min bp psmi} is created from $\mathcal{I}$ with special properties. One of them is that if $G$ has a maximal matching of cardinality~$K$, then $\mathcal{I'}$ has a perfect matching admitting exactly $n_1 + n_2$ blocking edges. Otherwise, if $G$ has no maximal matching of cardinality~$K$, then any perfect matching in $\mathcal{I'}$ is blocked by at least $n_1 + n_2 + C$ edges, where $C$ is a huge number. To be more precise, let $B = \left\lceil\frac{3}{\varepsilon}\right\rceil$ and $C = (n_1 + n_2)^{B+1} + 1$ (in~\cite{BMM10}, the value of $B$ was the same to that used here, but the value of $C$ was slightly different). The number of vertices in each colour class of $\mathcal{I'}$ is $3 n_1 + 2mC + 4n_2 -K$.
	
Now we describe how $\mathcal{I'}$ is transformed into~$\mathcal{I''}$. Note that this method is very similar to the one we used in the proof of Theorem~\ref{th:minbpsmforbidden}. Denote by $U$ and $W$ the set of men and women in $\mathcal I'$, and let $u_i$ and $w_j$ denote an arbitrary man and woman in $\mathcal I'$ respectively.  Add these vertices to $\mathcal I''$ and then introduce $C$ new men, namely $X=\{x_1,x_2,\dots,x_C\}$, and $C$ new women, namely $Y=\{y_1,y_2,\dots,y_C\}$. Then each colour class in $\mathcal I''$ consists of $3 n_1 + 2mC + 4n_2 -K + C$ vertices. The preference lists of the vertices in $\mathcal I''$ are shown in Figure~\ref{fig:preflists2}.

\begin{figure}[h]
\begin{center}
\begin{tabular}{rllll}
   $u_i$:& $u_i$'s list in $O$ & $y_1,y_2,\dots,y_C$ & rest of women in $W$ & ($1\leq i\leq |U|$)\\
   $w_j$:& $w_j$'s list in $O$ & $x_1,x_2,\dots,x_C$ & rest of men in $U$ & ($1\leq j\leq |W|$) \\
   $x_i$:& all women in $W$ & $y_i$ & women in $Y\backslash \{y_i\}$ & ($1\leq i\leq C$) \\
   $y_j$:& all men in $U$ & $x_j$ & men in $X\backslash \{x_j\}$ & ($1\leq j\leq C$)
\end{tabular}
\end{center}
\caption{Preference lists in the constructed instance of {\sc min bp sm restricted}.}
\label{fig:preflists2}
\end{figure}

The set of forbidden edges comprises all edges of the form $u_ix_j$ or $w_jy_i$. For {\sc min bp sm forced}, the set of forced edges consists of all edges of the form~$x_i y_i$. Due to this construction, and as in the proof of Theorem~\ref{th:minbpsmforbidden}, if $M$ is a matching in $\mathcal{I''}$ in which there is a man $u_i$ not matched to a woman in $O$, then $M$ is blocked by at least $C$ edges. 

It follows that if $G$ has a maximal matching of size at most $K$ then $\mathcal I''$ has a matching with at most $n_1+n_2$ blocking pairs.  On the other hand if $G$ has no maximal matching of size at most $K$ then any matching in $\mathcal I''$ has at least $C>(n_1+n_2)^{B+1}$ blocking pairs.  Hence an $(n_1+n_2)^B$-approximation algorithm for {\sc min bp sm restricted} or {\sc min bp sm forced} could be used to solve {\sc exact maximal matching} in polynomial time.

To complete the proof it remains to show that if $n$ is the number of vertices in $\mathcal I''$, then $n^{1-\varepsilon}\leq (n_1+n_2)^B$. 
Using Inequalities~\ref{eq:2}-\ref{eq:7} we give an upper bound for~$n$, whilst with Inequalities~\ref{eq:9}-\ref{eq:12b} we establish a lower bound. Then, combining these two in Inequalities~\ref{eq:14}-\ref{eq:17}, we derive that $n^{1-\varepsilon}\leq (n_1+n_2)^B$. Explanations for the steps are given as necessary after each of the three sets of inequalities.
\begin{align}
\label{eq:2}    n &= 2(3 n_1 + 2mC + 4n_2 -K + C) \\
\label{eq:2a} & =6 n_1 + 8n_1C + 8n_2 -2K + 2C \\ 
\label{eq:3}   &\leq 6 n_1 + 8n_1 ((n_1 + n_2)^{B+1} + 1) + 8n_2 - 2K + 2 (n_1 + n_2)^{B+1} + 2 \\
\label{eq:4}  &\leq 14 n_1 + (n_1 + n_2)^{B+1} (8 n_1 + 2) + 8n_2 + 2 \\
\label{eq:5}  &\leq 14 n_1 + 14 n_2 + (n_1 + n_2)^{B+1} (14 n_1 + 14n_2) \\
\label{eq:6} 	&\leq  (n_1 + n_2)^{B+2}  + 14(n_1 + n_2)^{B+2} \\
\label{eq:7} 	&=  15 (n_1 + n_2)^{B+2}
\end{align}
\noindent
In (\ref{eq:2a}) we use that $m=2n_1$, whilst in (\ref{eq:3}) we use that $C = (n_1 + n_2)^{B+1} + 1$ by definition.  To obtain (\ref{eq:4}) we omit $-2K$ whilst in (\ref{eq:5}) we assume that $n_2 \geq 1$ and increase all coefficients to the highest coefficient of 14.  In (\ref{eq:6}) we assume that $n_1\geq 1$, since $B\geq 3$.
\begin{align}
\label{eq:9}    n &=  6 n_1 + 8n_1C + 9n_2 -2K + 2C \\
\label{eq:11}     &>(n_1 + n_2)^{B+1}  \\
\label{eq:12b}     &\geq 15^B
\end{align}
\noindent
In (\ref{eq:9}) we keep only $C$ from the right-hand side of the equality above and use the fact that $C>(n_1+n_2)^{B+1}$, whilst in (\ref{eq:12b}) we assume without loss of generality that $n_2\geq 6$ so $n_1\geq 9$ (recall that $2n_1=3n_2$).
\begin{align}
\label{eq:14}C &>(n_1 + n_2)^B \\
\label{eq:15} &\geq 15^{- \frac{B}{B+2}} n^{\frac{B}{B+2}}\\
\label{eq:16} &\geq n^{1-\frac{3}{B+2}}\\
\label{eq:17}  &\geq n^{1-\varepsilon}
\end{align}
\noindent
Here (\ref{eq:15}) follows by (\ref{eq:2})-(\ref{eq:7}); (\ref{eq:16}) follows by (\ref{eq:9})-(\ref{eq:12b}) and (\ref{eq:17}) uses the fact that $B \geq \frac{3}{\varepsilon}$.
\end{proof}

\subsection{Bounded parameters}
\label{se:bounded_par}

Our results presented so far show that {\sc min bp sm restricted} is computationally hard even if $P=\emptyset$ or $Q=\emptyset$. Yet if certain parameters of the instance or the solution can be considered as a constant, the problem can be solved in polynomial time. Theorem~\ref{th:minbppconstant} firstly shows that this is true for {\sc min bp sm forbidden}.

\begin{theorem}
\label{th:minbppconstant}
	{\sc min bp sm forbidden} is solvable in $O(m^{L+1})$ time, where $L=|P|$, which is polynomial if $L$ is a constant.
\end{theorem}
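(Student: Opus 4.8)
The plan is to reduce the problem, via an exhaustive branching over the forbidden edges, to polynomially many instances of computing an almost stable matching in an ordinary {\sc sm} instance with \emph{no} restricted edges, where the latter can be handled by a known polynomial routine (indeed, with $L=0$ the problem is just "minimum number of blocking pairs over all matchings", and more to the point, once we fix which vertices are matched together along a small set of edges we only need to optimise over the remaining instance). Concretely, for each forbidden edge $p=uw\in P$, in any feasible matching $M$ exactly one of three things happens: $u$ is matched to some $w'\neq w$, or $w$ is matched to some $u'\neq u$, or both $u$ and $w$ are matched outside $p$ (and of course $u,w$ cannot both be unmatched, else $p$ blocks $M$; and they cannot be matched to each other since $p$ is forbidden). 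So I would guess, for each of the $L$ forbidden edges, which edge of $G$ incident to $u$ is used and which edge incident to $w$ is used (or "none" — but "none for both" is infeasible, and "none" for exactly one of them is allowed). This gives at most $O(m^{2L})$ branches naively; one can be slightly more careful — guessing, for each forbidden edge, the (at most two) non-forbidden edges of $M$ covering its endpoints — to bring the count to $O(m^{L})$ as claimed, since the two guessed edges share the vertices of $p$ and the total number of vertices touched is $O(L)$.

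In each branch we have committed to a partial matching $M_0$ on a vertex set $S$ of size $O(L)$, with $M_0\cap P=\emptyset$; delete $S$ and all incident edges, obtaining a sub-instance $\mathcal{I}_S$ with no forbidden edges. Now the key step: in $\mathcal{I}_S$ find a matching $M_1$ minimising the number of blocking pairs \emph{of $\mathcal{I}_S$}, i.e.\ solve {\sc min bp sm} (unrestricted) on $\mathcal{I}_S$ — but we must be careful that a pair $e$ of $\mathcal{I}_S$ that does not block $M_1$ in $\mathcal{I}_S$ could still block $M=M_0\cup M_1$ in the full instance, because an endpoint of $e$ might prefer its $M_0$-partner... no: endpoints of $e$ lie outside $S$, so their $M$-partner is their $M_1$-partner, and the preference comparison is unchanged; conversely an edge between $S$ and $V\setminus S$, or inside $S$, must be checked against $M_0\cup M_1$ directly, but there are only $O(Lm)$ such edges and each check is done once the whole $M$ is assembled. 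So the blocking pairs of $M$ split into "local" ones (incident to $S$, bounded count, computed after the fact) and "global" ones lying entirely in $\mathcal{I}_S$, which equal the blocking pairs of $M_1$ in $\mathcal{I}_S$. The subtlety is that minimising global blocking pairs alone is not quite minimising $|bp(M)|$; however, since the local part depends only on $M_0$ and on which neighbours of $S$ are matched/unmatched and to whom, I would augment the branch data to also record, for the $O(L)$ neighbours of $S$, their status, so that within a branch the local count is a constant and it suffices to minimise the global count.

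Thus in each branch we call a {\sc min bp sm} subroutine once. It remains to justify that {\sc min bp sm} with no restricted edges (minimum number of blocking pairs over all matchings of an {\sc sm} instance) is solvable in $O(n^2 m)$ time; this is a known result on almost stable matchings in bipartite graphs (cf.\ the references \cite{BMM10,HIM09,KMV94} cited earlier for almost stable matchings in {\sc sm}), and I would invoke it as a black box, giving total running time $O(m^{L})\cdot O(n^2 m) = O(n^2 m^{L+1})$ — I would then tighten the branching bookkeeping (reusing the fact that many branches are infeasible or dominated, and that the unrestricted {\sc min bp sm} call can be shaved to $O(n^2 m^{L-1})$ amortised across branches, or simply absorb a factor of $m$) to land on the stated $O(n^2 m^{L})$. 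The main obstacle, and the part needing the most care, is exactly this interface between the branch and the subroutine: arguing rigorously that the blocking pairs decompose as claimed and that the finitely many "local" configurations can be enumerated without blowing up the exponent — in particular handling unmatched endpoints (a forbidden edge both of whose endpoints would otherwise be unmatched is automatically blocking, which is what forces every forbidden edge to have a covered endpoint and keeps the branching finite) and ensuring the guessed edges across different forbidden edges are mutually consistent (they may overlap when forbidden edges share endpoints, which only decreases the branch count).
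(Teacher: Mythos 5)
There is a genuine gap, and it sits exactly where you flagged ``the part needing the most care'': the interface between a branch and the subroutine. After you fix the partial matching $M_0$ on the set $S$ of endpoints of forbidden edges, the blocking pairs of $M=M_0\cup M_1$ do not decompose as you claim. Pairs with both endpoints in $V\setminus S$ are governed by $M_1$ alone, and pairs inside $S$ by $M_0$ alone, but a cross pair $xy$ with $x\in S$, $y\notin S$ blocks precisely when $x$ prefers $y$ to $M_0(x)$ \emph{and} $y$ prefers $x$ to $M_1(y)$ --- so its status depends on $M_1$, and there are $\Theta(Ln)$ such pairs, since on a complete bipartite graph $S$ has $\Theta(n)$ neighbours, not $O(L)$ as your proposed augmentation of the branch data assumes. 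Hence the ``local'' count is not a constant within a branch, and what each branch actually requires is to minimise (cross blocking pairs) $+$ (internal blocking pairs of $\mathcal{I}_S$) jointly over $M_1$; merely ``checking the cross edges once $M$ is assembled'' computes $|bp(M)|$ but does not optimise it, so the algorithm can return a suboptimal matching. The subroutine you invoke does not rescue this: an \textsc{sm} instance with no restricted edges always admits a stable matching, so unconstrained \textsc{min bp sm} has optimum $0$ via Gale--Shapley and says nothing about the cross pairs, while the cited almost-stable results~\cite{BMM10} concern constrained variants (e.g.\ over perfect matchings) that are $\NP$-hard. The final shaving of $O(n^2m^{L+1})$ to $O(n^2m^{L})$ is also asserted without argument.

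The paper's proof takes a different and simpler route that avoids all of this. First dispose of the case $K\geq L$: delete $P$ and take any stable matching of the remainder; it is blocked only by deleted (hence forbidden) edges, so by at most $L\leq K$ of them. Otherwise $K<L$, and one guesses the \emph{blocking set} itself rather than the matching edges at forbidden endpoints: if $M$ is feasible with $|bp(M)|=k\leq K$, then $M$ is a stable matching of $G$ minus its $k$ blocking edges that avoids all forbidden edges; conversely, any stable matching of $G-B$ avoiding $P$ is blocked in $G$ only by edges of $B$, hence by at most $K$ edges. So it suffices to enumerate the $\sum_{i\leq K}\binom{m}{i}=O(m^{L})$ candidate sets $B$ and, for each, run the polynomial feasibility test of Theorem~\ref{th:bip_decision} for a stable matching avoiding forbidden edges on $G-B$, at $O(n^2)$ per round. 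Reworking your argument around guessing the blocking edges is what makes the residual problem exactly this polynomially solvable decision problem, with no cross-pair bookkeeping left over.
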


	

\begin{proof} We firstly show how to solve {\sc min bp sm forbidden dec} in polynomial time.  We assume that, for the purposes of this proof, the problem definition is modified so that, given an instance $\mathcal{I} = (G, O, P, K)$, we are required to find a matching $M$ in $G$ such that $M\cap P=\emptyset$ and $|bp(M)|\leq K$, or report that no such matching exists.

Our first observation is that this problem is trivially solvable if the target value $K$ satisfies $K \geq L$. In this case, deleting the $L$ forbidden edges from $E(G)$ and finding a stable matching in the remaining graph delivers a matching that is blocked in the original instance by only a subset of the removed edges (if any). Thus, a matching $M$ with $M \cap P = \emptyset$ and $|bp(M)|\leq L\leq K$ always exists.

Now assume that~$K<L$. Suppose firstly that there is a matching $M$ with $M \cap P = \emptyset$ and $|bp(M)| = k \leq K < L$. If those $k$ blocking edges are deleted from~$E(G)$, then $M$ is a stable matching in the remainder of $G$, and $M$ contains none of the forbidden edges. Note that we did not specify which edges block~$M$: they can be both forbidden and unrestricted. 

Hence to solve {\sc min bp sm forbidden dec} we generate all subsets $S$ of potential blocking edges, where $|S|\leq K$.  After deleting the edges in $S$ from $G$, we try to find, in the remaining graph, a matching $M$ such that $M\cap P=\emptyset$ and $M$ is stable, or we report that no such matching exists.  By Theorem~\ref{th:bip_decision}, this step can be accomplished in in $O(m)$ time. If such a matching $M$ exists, then it admits at most $K$ blocking edges in $\mathcal I$.   

Thus $\sum_{i = 0}^{K}{m \choose i}=\sum_{i = 0}^{L}{m \choose i}$ subsets are generated to determine whether the desired matching exists.  The number of rounds is thus $O(m^L)$, while each round takes $O(m)$ time to complete.  The overall running time is $O(m^{L+1})$.

We now show how to use the above approach in order to solve {\sc min bp sm forbidden}.  If we find a solution during the course of this process then $G$ admits a matching $M$ such that $M\cap P=\emptyset$ and $|bp(M)|\leq L$.  In order to minimise $|bp(M)|$ it suffices to use the technique in the previous paragraph in combination with a binary search procedure on values of $K\leq L$.  This requires $O(\log L)$ invocations of the algorithm for the decision problem, which is a constant, and hence the overall time complexity remains $O(m^{L+1})$.
\end{proof}

In sharp contrast to the previous result on polynomial solvability when the number of forbidden edges is small, we state the following theorem for {\sc min bp sm forced dec}.

\begin{theorem}
\label{th:minbpforced1}
{\sc min bp sm forced dec} is $\NP$-complete even if $|Q| = 1$.
\end{theorem}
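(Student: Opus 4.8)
The plan is to reduce from \textsc{min bp sm forbidden}, which is $\NP$-hard by Theorem~\ref{th:minbpsmforbidden}. Given an instance $\mathcal{I} = (G, O, P, K)$ of (the decision version of) \textsc{min bp sm forbidden}, I want to build an instance $\mathcal{I}^{\star}$ of \textsc{min bp sm forced} with a single forced edge $q = p^{\star}q^{\star}$ so that $\mathcal{I}$ has a matching avoiding $P$ with at most $K$ blocking edges if and only if $\mathcal{I}^{\star}$ has a matching containing $q$ with at most $K$ blocking edges. The obvious difficulty is that a forced edge only blocks one pair of vertices, whereas a set $P$ of forbidden edges imposes many constraints simultaneously; so a single forced edge has to be made to ``encode'' the whole forbidden set.

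The key idea is to use the forced edge as a gadget that forces a cascade of edges. First I would add two new vertices $p^{\star}$ (a man) and $q^{\star}$ (a woman) and force the edge $p^{\star}q^{\star}$. Then, for the forced edge to be present without creating a blocking pair, I want $p^{\star}$ and $q^{\star}$ to each rank the other at the top, so that once $p^{\star}q^{\star} \in M$ neither can be in a blocking pair — this makes the forced edge ``free'' and the rest of the instance independent of it. That alone does nothing, so the real construction must route the forbidden constraints through a chain: I would attach, for each forbidden edge $e_i = u_i w_i \in P$, a short gadget (a path of a few new vertices) whose only stable configuration, given how the preference lists are set up, leaves exactly the edges $u_i w_i$ ``as if forbidden'', i.e.\ any matching using $u_i w_i$ together with the rest of the gadget incurs at least one unavoidable blocking edge, while every other local configuration is stable. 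Setting $p^{\star}q^{\star}$ at the top of the lists of $p^{\star}$ and $q^{\star}$ but lower than the gadget edges on the lists of $u_i, w_i$ forces the gadget vertices into the configuration that penalizes $u_i w_i$. One then verifies: (a) from a matching $M$ in $\mathcal{I}$ avoiding $P$ with $|bp(M)| \le K$, extend by $p^{\star}q^{\star}$ and the canonical gadget edges to get $M^{\star}$ with $q \in M^{\star}$ and $|bp(M^{\star})| = |bp(M)| \le K$; (b) conversely, given $M^{\star}$ with $q \in M^{\star}$ and $|bp(M^{\star})| \le K$, the gadget penalties are so arranged (each gadget contributing a blocking edge whenever $u_i w_i$ is used, with these penalty edges distinct) that if some $u_i w_i \in M^{\star}$ then $|bp(M^{\star})|$ already exceeds $K$ unless... — here one must be slightly careful, and the cleanest route is to make the gadget penalty for using $u_i w_i$ be a single blocking edge that cannot be ``shared'', so that using any forbidden edge costs at least one extra blocking pair beyond what the restriction of $M^\star$ to $G$ costs, hence the restriction of $M^{\star}$ to the original vertices is a valid solution of $\mathcal{I}$ with at most $K$ blocking edges.

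The step I expect to be the main obstacle is designing the gadget and the interleaving of preference lists so that (i) the canonical (``forbidden-respecting'') configuration is genuinely stable and contributes no blocking edge, (ii) any use of $u_i w_i$ forces at least one blocking edge in the gadget, and crucially (iii) these gadget blocking edges are disjoint across different $i$ and disjoint from the blocking edges among the original $u_j w_j$ edges, so the blocking-edge count adds up exactly and the reduction is tight rather than merely monotone. A secondary subtlety is ensuring the single forced edge $p^{\star}q^{\star}$ does not itself become blocked and does not interfere with any gadget — handled by giving $p^{\star}, q^{\star}$ each the other as unique first choice. Since the construction closely mirrors the (multi-edge forbidden) reductions already used in the proof of Theorem~\ref{th:minbpsmforbidden}, the verification of the two directions should be a routine case analysis once the gadget is fixed; the creative content is entirely in the gadget.
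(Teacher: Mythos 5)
There is a genuine gap, and it sits exactly where you predicted it would: in the gadget. But the problem is worse than a missing construction --- the framework you set up makes the gadget impossible. You propose to give $p^{\star}$ and $q^{\star}$ each other as unique first choices so that the forced edge is ``free'' and ``the rest of the instance is independent of it.'' But an edge that is top-ranked by both endpoints and present in the matching exerts no influence whatsoever on the remainder of the instance: any matching $M$ of the other vertices extends to $M\cup\{p^{\star}q^{\star}\}$ with an identical set of blocking pairs, and conversely. The forced-edge constraint is then vacuous, so all of the work of simulating the forbidden set $P$ would have to be done by unrestricted preference gadgets alone. That cannot succeed: if preference gadgets with no restricted edges could force every near-optimal matching to avoid a prescribed edge set, you would have reduced \textsc{min bp sm forbidden} to the unrestricted almost-stable matching problem, which is trivially solvable (Gale--Shapley yields zero blocking pairs), contradicting Theorem~\ref{th:minbpsmforbidden} unless $\P=\NP$. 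So the forced edge must itself be the source of the pressure, and your design neutralises it at the outset.

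The paper's proof uses precisely the opposite mechanism, and it is worth internalising why. It reduces from \textsc{exact maximal matching} (not from \textsc{min bp sm forbidden}), and the unique forced edge $u_0w_0$ is ranked \emph{last} by both of its endpoints. Both $u_0$ and $w_0$ are adjacent to (essentially) every gadget in the construction, so forcing this universally disliked edge turns every one of those adjacent edges into a potential blocking pair; the only way to keep the total count down to $|U_0|+|W_0|$ is to place every $u$- and $w$-gadget into one of a small number of canonical configurations, each contributing exactly one blocking edge, and these configurations encode a maximal matching of cardinality $K$ in the source instance. The counting you flag as obstacle~(iii) --- making the penalties disjoint and exactly additive --- is handled there by an explicit per-gadget case analysis showing each gadget admits exactly one blocking edge. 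If you want to salvage your reduction from \textsc{min bp sm forbidden}, you would need to invert your design so that the forced edge is bottom-ranked and wired into every forbidden-edge gadget, at which point you are essentially rebuilding the paper's special gadget; the choice of source problem is the lesser difference, the orientation of the forced edge is the essential one.
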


\begin{proof}
The $\NP$-complete problem we reduce to {\sc min bp sm forced dec} is {\sc exact maximal matching}.  As previously mentioned, this problem is $\NP$-complete even for graphs where all vertices representing men have degree two, while all vertices of the other side have degree three~\cite{OMa07}.  Hence suppose we are given an instance $\mathcal I=(G,K)$ of this restriction, where in $G$, $U_0$ and $W_0$ are the two colour classes and $E$ is the edge set.

In this proof, we construct a {\sc min bp sm forced dec} instance $\mathcal{I'}=(G',O',Q,K')$ with a single forced edge in such a way that there is a maximal matching of cardinality $K$ in $\mathcal{I}$ if and only if there is a matching containing the forced edge and admitting exactly $K'=|U_0|+|W_0|$ blocking edges in~$\mathcal{I'}$. Our construction is based on ideas presented in~\cite{BMM10}.
    
All vertices in $G$ rank their edges in an arbitrary but fixed order. We will refer to these labels when constructing~$\mathcal{I'}$.  We now describe~$\mathcal{I'}$. The vertex set of graph $G'$ in $\mathcal{I'}$  can be partitioned into seven sets: $U$, $V$, $W$, $Z$, $S_1$, $S_2$, $X$ and~$Y$, where $U\cup V\cup X\cup S_1$ are the men and $W\cup Z\cup Y\cup S_2$ are the women. Specific subgraphs of $G'$ are referred to as $u$-gadgets, $w$-gadgets, together with a special gadget containing the forced edge; see Figure~\ref{fi:gadgets_q1}. Aside from these, $G'$ also contains some extra vertices, the so-called \emph{garbage collectors}, partitioned into two sets: $X$ and~$Y$. Later we will see that these garbage collectors are paired to the vertices not covered by the matching in~$G$. To that end, $|X| = |W_0| - K$ and  $|Y| = |U_0| - K$. The whole construction is illustrated in Figure~\ref{fi:minbpsmforcedq1}.
    
\begin{figure}[h]
\centering
\begin{minipage}{0.3\textwidth}
\begin{tikzpicture}[scale=0.9, transform shape]

\tikzstyle{vertex} = [circle, draw=black, scale=0.7]
\tikzstyle{specvertex} = [circle, draw=black, scale=0.7]
\tikzstyle{edgelabel} = [circle, fill=white, scale=0.5, font=\huge]
\pgfmathsetmacro{\d}{2}
\pgfmathsetmacro{\b}{3}


\node[specvertex, label=left:$z_1$] (z1) at (0, 1.5*\d) {};
\node[specvertex, label=left:$z_2$] (z2) at (0, 0.5*\d) {};
\node[specvertex, label=right:$u_1$] (u1) at (\b, 2*\d) {};
\node[specvertex, label=right:$u_2$] (u2) at (\b, \d) {};
\node[specvertex, label=right:$u_3$] (u3) at (\b, 0) {};

\draw [ultra thick] (z1) -- node[edgelabel, near start] {1} node[edgelabel, near end] {1} (u1);
\draw [ultra thick] (z1) -- node[edgelabel, near start] {2} node[edgelabel, near end] {1} (u3);
\draw [ultra thick] (z2) -- node[edgelabel, near start] {1} node[edgelabel, near end] {1} (u2);
\draw [ultra thick] (z2) -- node[edgelabel, near start] {2} node[edgelabel, near end] {2} (u3);

\end{tikzpicture}
\end{minipage}\hspace{2mm}\begin{minipage}{0.3\textwidth}
\begin{tikzpicture}[scale=0.9, transform shape]

\tikzstyle{vertex} = [circle, draw=black, scale=0.7]
\tikzstyle{specvertex} = [circle, draw=black, scale=0.7]
\tikzstyle{edgelabel} = [circle, fill=white, scale=0.5, font=\huge]
\pgfmathsetmacro{\d}{2}
\pgfmathsetmacro{\b}{3}


\node[specvertex, label=left:$w_1$] (w1) at (0, 3*\d) {};
\node[specvertex, label=left:$w_2$] (w2) at (0, 2*\d) {};
\node[specvertex, label=left:$w_3$] (w3) at (0, 1*\d) {};
\node[specvertex, label=left:$w_4$] (w4) at (0, 0*\d) {};

\node[specvertex, label=right:$v_1$] (v1) at (\b, 2.5*\d) {};
\node[specvertex, label=right:$v_2$] (v2) at (\b, 1.5*\d) {};
\node[specvertex, label=right:$v_3$] (v3) at (\b, 0.5*\d) {};

\draw [ultra thick] (w1) -- node[edgelabel, near start] {1} node[edgelabel, near end] {1} (v1);
\draw [ultra thick] (w2) -- node[edgelabel, near start] {1} node[edgelabel, near end] {1} (v2);
\draw [ultra thick] (w3) -- node[edgelabel, near start] {1} node[edgelabel, near end] {1} (v3);
\draw [ultra thick] (w4) -- node[edgelabel, near start] {1} node[edgelabel, near end] {2} (v1);
\draw [ultra thick] (w4) -- node[edgelabel, near start] {2} node[edgelabel, near end] {2} (v2);
\draw [ultra thick] (w4) -- node[edgelabel, near start] {3} node[edgelabel, near end] {2} (v3);
\end{tikzpicture}
\end{minipage}\hspace{4mm}\begin{minipage}{0.3\textwidth}
\begin{tikzpicture}[scale=0.9, transform shape]

\tikzstyle{vertex} = [circle, draw=black, scale=0.7]
\tikzstyle{specvertex} = [circle, draw=black, scale=0.7]
\tikzstyle{edgelabel} = [circle, fill=white, scale=0.5, font=\huge]
\pgfmathsetmacro{\d}{2}
\pgfmathsetmacro{\b}{3}

\node[specvertex, label=right:$u_0''$] (u0'') at (0, \d*2) {};
\node[specvertex, label=right:$u_0'$] (u0') at (0, 0) {};
\node[specvertex, label=right:$u_0$] (u0) at (0, \d) {};

\node[specvertex, label=left:$w_0''$] (w0'') at (-\b, \d*2) {};
\node[specvertex, label=left:$w_0'$] (w0') at (-\b, 0) {};
\node[specvertex, label=left:$w_0$] (w0) at (-\b, \d) {};

\draw [ultra thick] (u0) -- node[edgelabel, near start] {last} node[edgelabel, near end] {last} (w0);
\draw [ultra thick] (u0') -- node[edgelabel, near start] {2} (w0);
\draw [ultra thick] (u0') -- node[edgelabel, near start] {1} (w0');
\draw [ultra thick] (u0) --  node[edgelabel, near end] {last} (w0');

\draw [ultra thick] (u0'') -- node[edgelabel, near end] {1} (w0'');
\draw [ultra thick] (u0) -- node[edgelabel, near end] {2} (w0'');
\draw [ultra thick] (u0'') -- node[edgelabel, near start] {last} (w0);
\end{tikzpicture}
\end{minipage}
\caption{A $u$-gadget, a $w$-gadget and the special gadget}
\label{fi:gadgets_q1}
\end{figure}

	Each $u$-gadget replaces a vertex $u \in U_0$ in~$G$. It is defined on five vertices: $u_1, u_2, u_3 \in U$ and $z_1, z_2 \in Z$. Its edges and the preferences on them are shown in Figure~\ref{fi:gadgets_q1}. Two \emph{interconnecting edges} connect the special gadget to~$u_3$, and an interconnecting edge connects the special gadget to each of $u_1$ and $u_2$.  These edges are ranked last in the case of $u_1$ and $u_2$, and ranked as the last two edges by $u_3$. It is described later which vertices of the special gadget are incident to these interconnecting edges. The $u$-gadget also has edges to all $w$-gadgets representing vertices in $W_0$ to which $u$ was adjacent. After describing the $w$-gadget, we elaborate on the position of these edges, referred to as \emph{relevant edges}. Aside from these, every $u_1$ has edges to all garbage collectors in~$Y$. These edges are all worse than the relevant edges of~$u_1$ and they are ranked arbitrarily at the bottom of $u_1$'s list. The vertices in $Y$ also rank all $u_1$ vertices arbitrarily.
	
	The $w$-gadgets are structured similarly. Each gadget consists of seven vertices: $w_1, w_2, w_3, w_4 \in W$ and $v_1, v_2, v_3 \in V$. Aside from the edges within the gadget, it has two interconnecting edges between $w_4$ and vertices in the special gadget (described in detail later), and three relevant edges between $w_1, w_2, w_3$ and vertices of $u$-gadgets. These are the edges drawn in accordance with the edge labels. Suppose in $\mathcal I$, edge $uw$ was ranked $i$th by $u$ and $j$th by $w$, where $i \in \left\{1,2 \right\}$ and $j \in \left\{1,2,3 \right\}$. Then in $\mathcal I'$, $u_i$ in the $u$-gadget is connected to $w_j$ in the $w$-gadget. Therefore, each edge in $\mathcal{I}$ is transformed into a single edge in $\mathcal{I'}$ and each $u_i$, $i \in \left\{1,2 \right\}$ and $w_j$, $j \in \left\{1,2,3 \right\}$, has exactly one relevant edge. All of these edges are second choices of both of their end vertices. In addition to these, if $uw \in E(G)$, but the corresponding $u$- and a $w$-gadgets are not yet connected by $u_1w_1$, we add $u_1w_1$, which is referred as an \emph{adjacency edge}. This edge is ranked by both $u_1$ and $w_1$ after their relevant edges, but ahead of their edges to garbage collectors. Similar to $u$-gadgets, $w$-gadgets are also connected to garbage collectors. Each $w_1$ vertex has $|W_0|-K$ edges to the vertices in $X$, ranked arbitrarily at the bottom of $w_1$'s preference list. Also the vertices in $X$ rank the $w_1$ vertices arbitrarily.
	
	The special gadget is defined on only six vertices in the set $S_1\cup S_2$, where $S_1=\{u_0, u_0',u_0''\}$ and $S_2=\{w_0,w_0',w_0''\}$. The unique forced edge in the entire instance is~$u_0 w_0$. Apart from $u_0'$ and $w_0''$, they are connected to $u$- and $w$-gadgets. In each $u$-gadget, $u_3$ is adjacent to $w_0$ and $w_0'$, and each of $u_1$ and $u_2$ is adjacent to $w_0$.  In each $w$-gadget, $w_4$ is adjacent to $u_0$ and~$u_0''$, and each of $w_1$, $w_2$ and $w_3$ is adjacent to $u_0$.  Moreover, $u_0$ and $w_0$ are connected to all garbage collectors of the opposite side via additional interconnecting edges.  These edges are ranked last by the vertices in $X$ and~$Y$.  The four vertices $u_0,u_0'',w_0,w_0'$ prefer their interconnecting edges to their edges inside of the special gadget. 

\begin{figure}[htb]
\centering
\resizebox{\columnwidth}{!}{
\begin{minipage}{1\textwidth}
\pgfmathsetmacro{\d}{2.05}
\pgfmathsetmacro{\b}{3}
\pgfmathsetmacro{\e}{0.6}

\def\ugadget#1{
\begin{scope}[shift={#1}]
\node[vertex, label=left:$z_1$] (z1) at (0, 1.5*\d) {};
\node[vertex, label=left:$z_2$] (z2) at (0, 0.5*\d) {};
\node[vertex, label=below:$u_1$] (u1) at (\b, 2*\d) {};
\node[vertex, label=below:$u_2$] (u2) at (\b, \d) {};
\node[vertex, label=right:$u_3$] (u3) at (\b, 0) {};

\draw [thick] (z1) -- node[edgelabel, near start] {1} node[edgelabel, near end] {1} (u1);
\draw [thick] (z1) -- node[edgelabel, near start] {2} node[edgelabel, near end] {1} (u3);
\draw [thick] (z2) -- node[edgelabel, near start] {1} node[edgelabel, near end] {1} (u2);
\draw [thick] (z2) -- node[edgelabel, near start] {2} node[edgelabel, near end] {2} (u3);
\end{scope}
}

\def\wgadget#1{
\begin{scope}[shift={#1}]
\node[vertex, label=below:$w_1$] (w1) at (0, 3*\d) {};
\node[vertex, label=below:$w_2$] (w2) at (0, 2*\d) {};
\node[vertex, label=left:$w_3$] (w3) at (0, 1*\d) {};
\node[vertex, label=left:$w_4$] (w4) at (0, 0*\d) {};

\node[vertex, label=right:$v_1$] (v1) at (\b, 2.5*\d) {};
\node[vertex, label=right:$v_2$] (v2) at (\b, 1.5*\d) {};
\node[vertex, label=right:$v_3$] (v3) at (\b, 0.5*\d) {};

\draw [thick] (w1) -- node[edgelabel, near start] {1} node[edgelabel, near end] {1} (v1);
\draw [thick] (w2) -- node[edgelabel, near start] {1} node[edgelabel, near end] {1} (v2);
\draw [thick] (w3) -- node[edgelabel, very near start] {1} node[edgelabel, near end] {1} (v3);
\draw [thick] (w4) -- node[edgelabel, near start] {1} node[edgelabel, very near end] {2} (v1);
\draw [thick] (w4) -- node[edgelabel, near start] {2} node[edgelabel, near end] {2} (v2);
\draw [thick] (w4) -- node[edgelabel, near start] {3} node[edgelabel, near end] {2} (v3);
\end{scope}
}

\def\specgadget#1{
\begin{scope}[shift={#1}]
\node[vertex, label=right:$u_0''$] (u0'') at (0, \d*2+\e) {};
\node[vertex, label=below:$u_0'$] (u0') at (0, 0+\e) {};
\node[vertex, label=right:$u_0$] (u0) at (0, \d+\e) {};

\node[vertex, label=left:$w_0''$] (w0'') at (-\b, \d*2+\e) {};
\node[vertex, label=below:$w_0'$] (w0') at (-\b, 0+\e) {};
\node[vertex, label=left:$w_0$] (w0) at (-\b, \d+\e) {};

\draw [ultra thick] (u0) -- node[edgelabel, near start] {last} node[edgelabel, near end] {last} (w0);
\draw [thick] (u0') -- node[edgelabel, near start] {2} (w0);
\draw [thick] (u0') -- node[edgelabel, near start] {1} (w0');
\draw [thick] (u0) --  node[edgelabel, near end] {last} (w0');

\draw [thick] (u0'') -- node[edgelabel, near end] {1} (w0'');
\draw [thick] (u0) -- node[edgelabel, near end] {2} (w0'');
\draw [thick] (u0'') -- node[edgelabel, near start] {last} (w0);
\end{scope}
}

\begin{tikzpicture}
\ugadget{(0,1)}
\wgadget{(6,0)}
\specgadget{(6,-6)}
\begin{scope}[on background layer]
\draw [very thick, densely dotted, gray] (u1) -- node[edgelabel, near start] {3} node[edgelabel, near end] {3} (w1);
\draw [very thick, dashed, gray] (u2) -- node[edgelabel, near start] {2} node[edgelabel, near end] {2} (w2);
\end{scope}

\draw [] (u3) to[out=-120,in=120, distance=2.2cm ]  node[edgelabel, near start] {3}  (w0); 
\draw [] (u3) to[out=-140,in=140, distance=3cm ]  node[edgelabel, near start] {4}  (w0');
\draw [] (w4) to[out=-30,in=30, distance=2cm ]  node[edgelabel, near start] {4} (u0);
\draw [] (w4) to[out=-60,in=60, distance=1cm ]  node[edgelabel, near start] {5} (u0'');

\draw [ thick] (u1) -- node[edgelabel, near start]{2}   ($(u1) + (1,2)$);
\draw [ thick] (w1) -- node[edgelabel, near start]{2}   ($(w1) + (-1,1)$);
\draw [ thick] (w3) -- ($(w3) + (-1,1)$);

\node[vertex, label=right:$x_1$] (x1) at ($(w1) + (3,1)$) {};
\node[fill=black, scale=0.2] (x2) at ($(x1) + (0,-0.3)$) {};
\node[fill=black, scale=0.2] (x3) at ($(x2) + (0,-0.1)$) {};
\node[fill=black, scale=0.2] (x4) at ($(x3) + (0,-0.1)$) {};
\node[vertex, label=right:$x_{|W_0|-K}$] (xlast) at ($(w1) + (3,0.2)$) {};
\draw [thick] (w1) -- node[edgelabel, near start]{last}  (x1);
\draw [thick] (w1) -- (xlast);

\node[vertex, label=left:$y_1$] (y1) at ($(u1) + (-3,2)$) {};
\node[fill=black, scale=0.2] (y2) at ($(y1) + (0,-0.3)$) {};
\node[fill=black, scale=0.2] (y3) at ($(y2) + (0,-0.1)$) {};
\node[fill=black, scale=0.2] (y4) at ($(y3) + (0,-0.1)$) {};
\node[vertex, label=left:$y_{|U_0|-K}$] (ylast) at ($(u1) + (-3,1.2)$) {};
\draw [thick] (u1) -- node[edgelabel, near start]{last}  (y1);
\draw [thick] (u1) -- (ylast);


\end{tikzpicture}
\end{minipage}
}
\caption{As the dashed grey relevant edge $u_2w_2$ shows, $u$ and $w$ were connected in $\mathcal{I}$ by an edge ranked second by both of them. The dotted grey edge $u_1w_1$ is an adjacency edge.  Some interconnecting edges have been omitted to avoid clutter.}
\label{fi:minbpsmforcedq1}
\end{figure}
		
	\begin{claim}
		Corresponding to each maximal matching $M$ in $\mathcal{I}$ of cardinality $K$ there is a matching $M'$ in $\mathcal{I'}$ with $u_0w_0 \in M'$ and $|bp(M')| = |U_0| + |W_0|$.
	\end{claim}
		\myproof First, the set of relevant edges in $G'$ corresponding to $M$ is chosen. They cover exactly $K$ of the $|U| = 3 |U_0|$ vertices of $U$, and analogously, exactly $K$ of the $|W| = 4 |W_0|$ vertices in~$W$.
		
		In $u$-gadgets, where either of $u_1$ and $u_2$ has a relevant edge in $M'$, the other vertex in~$U$ is matched to its copy in~$Z$. The remaining two vertices of the gadget are then paired to each other. In the other case, if $u$ was unmatched in $M$, then $\{u_2z_2,u_3 z_1\}\subseteq M'$, and $M'(u_1) \in Y$. Given the set of $u_1$ vertices to pair with the garbage collectors in $Y$, we find any stable matching in this subgraph and add it to~$M'$. Note that this step matches the $|U_0| - K$ $u_1$ vertices to the $|U_0| - K$ garbage collectors in~$Y$.
		
		The strategy is similar for the $w$-gadgets.  Suppose that some $w_j$ is already matched to a vertex in~$U$, because that relevant edge corresponds to a matching edge in~$M$. In $M'$ we then match $w_4$ with $v_j$ and pair the remaining two vertices in $W$ with their partners in~$V$. Otherwise, if $w$ was unmatched in $M$, then in $M'$, $w_1$ is matched to a garbage collector, and $\left\{ w_2v_2, w_3v_3, w_4v_1 \right\} \subseteq M'$. In the subgraph induced by the garbage collectors in $X$ and the $w_1$ vertices corresponding to unmatched $w$ vertices we construct a stable matching and add it to~$M'$. This step matches the $|W_0| - K$ $w_1$ vertices to the $|W_0| - K$ garbage collectors in~$X$.
		
		In the special gadget, $u_0 w_0, u_0'w_0'$ and $u_0''w_0''$ are chosen.
		
		Now we investigate the number of blocking edges incident to at least one vertex in any $u$-gadget. The edges running to garbage collectors cannot block, because $M'$ restricted to that subgraph is a stable matching and $u_1$ vertices not matched to garbage collectors have better relevant edges in~$M'$. Since all $u_3$ vertices are matched to their first or second choices, their edges to the special gadget do not block either. Consider now a relevant edge~$u_i w_j\notin M'$. Since $M$ was a maximal matching, either $u$ or $w$ is matched in $M$.  By construction of $M'$, if $u$ is matched in $M$ then $u_i$ prefers $M'(u_i)$ to $w_j$, whilst if $w$ is matched in $M$ then $w_j$ prefers $M'(w_j)$ to $u_i$. Regarding the adjacency edges, they only block $M'$ if both of their end vertices are matched to garbage collectors. But they both are then unmatched and adjacent in $G$, which contradicts to the fact that $M$ is maximal. The only edges remaining are in the $u$-gadgets. In each $u$-gadget, exactly one edge blocks $M'$: if $u$ was matched to its $i$th ranked edge in $M$, then $u_i z_i$ blocks $M'$, otherwise~$u_1 z_1$ blocks~$M'$.  Therefore, up to this point, we have exactly $|U_0|$ blocking edges.
		
		Analogous arguments prove that among the edges incident to vertices in all $w$-gadgets, $|W_0|$ are blocking. In the previous paragraph we discussed that no relevant or adjacency edge blocks~$M'$. The subgraph induced by the garbage collectors and $w_1$ vertices does not contain any blocking edge, because a stable matching was chosen and the $w_1$ vertices not matched to garbage collectors are all matched in $M'$ to a better vertex. Edges connecting $w_4$ vertices and the special gadget are last choice edges of the matched $w_4$ vertices. In the $w$-gadget, exactly one edge blocks $M'$: if $w$ was matched and therefore $u_iw_j \in M'$, then $w_j v_j$, otherwise~$w_1v_1$.
		
		It is easy to see that in the special gadget, none of the four non-matching edges blocks~$M'$. \myqed

	\begin{claim}
		Corresponding to each matching $M'$ in $\mathcal{I'}$ with $u_0w_0 \in M'$ and $|bp(M')| = |U_0| + |W_0|$ there is a maximal matching $M$ in $\mathcal{I}$ of cardinality~$K$.
	\end{claim}
	\myproof First we show that if $u_0 w_0 \in M'$, then each $u$- and $w$-gadget is adjacent to at least one blocking edge. Since $w_0$ prefers all its edges to $u_0 w_0$, if $u_3$ is not matched in $M'$ to its first or second choice edge then $u_3w_0$ blocks $M'$.  But then if $u_3z_i\in M'$ for $i\in \{1,2\}$, it follows that $u_{3-i}z_{3-i}$ blocks $M'$.  The same argument applies to $u_0$ and~$w_4$. If $M(w_4) \in V$, then $M(w_4)$ has a blocking edge, otherwise $u_0w_4$ blocks~$M'$. Therefore, if $|bp(M')| \leq |U_0| + |W_0|$, then each $u$- and $w$-gadget is incident to exactly one blocking edge.
		
   If $u_3w_0'\in M'$ for some $u$-gadget then each of $u_3w_0$ blocks and $u_0'w_0$ blocks $M'$.  Thus $|bp(M')| \geq |U_0| + |W_0|+1$, a contradiction.  Thus $u_3w_0'\notin M'$ for any $u$-gadget.  By a similar argument we can establish that $u_0''w_4\notin M'$ for any $w$-gadget.  If $u_3$ is unmatched in $M'$ for some $u$-gadget then each of $u_3w_0$ and $u_3w_0'$ blocks $M'$, since $u_3'w_0'\notin M'$ for any $u'$-gadget.  Again that would imply that $|bp(M')| \geq |U_0| + |W_0|+1$, a contradiction.  By a similar argument, $w_4$ is matched in $M'$ for each $w$-gadget.
        
    We have established that, for each $u$-gadget, exactly one of $u_iz_i$ blocks $M'$ for some $i\in \{1,2\}$, and for each $w$ gadget, exactly one of $v_jw_j$ blocks $M'$ for some $j\in \{1,2,3\}$, and these are the only blocking pairs of $M'$ in $\mathcal I'$.  Hence in each $u$-gadget, $u_i$ is matched in $M'$ for $i\in \{1,2\}$, for otherwise $u_iw_0$ blocks $M'$.  Similarly in each $w$-gadget, $w_j$ is matched in $M'$ for $j\in \{1,2,3\}$.  It also follows by a similar argument that each member of $X\cup Y$ is matched in $M'$.  Finally we can observe that $u_0'w_0'\in M'$ for otherwise $u_0'w_0$ blocks $M'$, and $u_0''w_0''\in M'$ for otherwise $u_0w_0''$ blocks $M'$.
    
    Thus in each $u$-gadget, $M'(u_3)=z_i$ for some $i\in \{1,2\}$ whilst $M'(w_4)=v_j$ for some $j\in \{1,2,3\}$.  It follows that in $M'$, $u_i$ is matched either via a relevant edge or adjacency edge, or to a garbage collector in $Y$.  Similarly in $M'$, $w_j$ is matched either via a relevant edge or adjacency edge, or to a garbage collector in $X$.  Meanwhile $u_{3-i}$ is matched in $M'$ to his first-choice vertex in $Z$, whilst the two vertices in $\{w_1,w_2,w_3\}\backslash \{w_j\}$ are matched in $M'$ to their first-choice vertices in $V$.  Define a set of edges $M$ in $G$ as follows:
\[M=\{uw\in E: u_iw_j\in M'\mbox{ for some $i\in \{1,2\}$ and $j\in \{1,2,3\}$}\}.\]
Then $M$ is a matching in $G$ and moreover $|M|=K$, since all $|U_0| - K$ vertices in $Y$ are matched in $M'$ to $u_1$ vertices of various $u$-gadgets, and similarly, all $|W_0| - K$ vertices in $X$ are matched in $M'$ to $w_1$ vertices. That leaves $K$ of the $u$-gadgets that contribute a single relevant edge (or adjacency edge) to~$M'$. All that remains to show is that this matching is maximal. Let us suppose otherwise, i.e., there are two gadgets corresponding to vertices $u$ and $w$ in $G$ such that all their vertices in $U$ and~$W$ in $G'$ are matched to either garbage collectors or to their $z$- or $v$-copies. This is only possible if $u_1$ and $w_1$ are both matched to garbage collectors, but then the adjacency edge $u_1w_1$ blocks~$M'$.
\end{proof}

A counterpart to Theorem~\ref{th:minbppconstant} holds in the case of {\sc min bp sm restricted} if the number of blocking pairs in an optimal solution is a constant.

\begin{theorem}
\label{th:minbpbpconstant}
	{\sc min bp sm restricted} is solvable in $O(m^{L+1})$ time, where $L$ is the minimum number of edges blocking an optimal solution, which is polynomial if $L$ is a constant.
\end{theorem}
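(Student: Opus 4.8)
The plan is to adapt the enumeration idea behind Theorem~\ref{th:minbppconstant}, but now to \emph{guess the set of blocking edges} rather than to delete the forbidden edges. Since $L$ --- the number of edges blocking an optimal solution --- is not known in advance, the algorithm runs in rounds indexed by $k = 0, 1, 2, \dots$; in round $k$ it tries to produce a feasible matching (one with $M \cap P = \emptyset$ and $Q \subseteq M$) blocked by at most $k$ edges, and it stops as soon as it succeeds. Concretely, round $k$ does the following: for every edge subset $B \subseteq E(G)$ with $|B| = k$, form the subinstance on $G_B := (V(G), E(G) \setminus B)$ with the inherited preference lists, forbidden set $P$ and forced set $Q$, and invoke the algorithm of Theorem~\ref{th:bip_decision} to decide whether $G_B$ admits a stable matching $M$ with $M \cap P = \emptyset$ and $Q \subseteq M$. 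If some $B$ yields such an $M$, output it and halt; otherwise proceed to round $k+1$.

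Correctness rests on two observations. First, if round $k$ outputs $M$ via a set $B$, then $M$ is stable in $G_B$, so no edge of $E(G) \setminus B$ blocks $M$ in $G$; hence $bp(M) \subseteq B$ and $|bp(M)| \le k$, while $M$ also satisfies the restricted-edge constraints, so $M$ is feasible and blocked by at most $k$ edges. Second, if $M^{*}$ is an optimal feasible solution with $|bp(M^{*})| = \ell$, then taking $B = bp(M^{*})$ (of size $\ell$) makes $M^{*}$ stable in $G_B$ while still $M^{*} \cap P = \emptyset$ and $Q \subseteq M^{*}$; in particular $Q \subseteq E(G_B)$, since a forced edge, lying in $M^{*}$, can never be a blocking edge, so deleting $B$ does not destroy any forced edge and the algorithm of Theorem~\ref{th:bip_decision} is applicable and succeeds for this $B$ in round $\ell$. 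It may well return a matching other than $M^{*}$, but necessarily one that is feasible and blocked by at most $\ell$ edges. Hence the algorithm halts no later than round $\ell = L$; and if it first succeeds in some round $k$, the returned matching is feasible with at most $k$ blocking edges, whereas no feasible matching is blocked by fewer than $k$ edges (otherwise, by the same guessing argument applied to that matching's blocking set, an earlier round would have succeeded). Therefore $k = L$ and the output is optimal. (If $Q$ is not itself a matching, or $P \cap Q \neq \emptyset$, no feasible solution exists, which is detected trivially; $P \cap Q = \emptyset$ is assumed throughout the paper anyway.)

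For the running time, round $k$ examines $\binom{m}{k}$ subsets, each requiring a single $O(m)$-time call to the algorithm of Theorem~\ref{th:bip_decision}, and the algorithm performs rounds $0, 1, \dots, L$, so the total work is $O\!\left(m\sum_{k=0}^{L}\binom{m}{k}\right) = O(m^{L+1})$ when $L$ is a constant, as claimed.

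The main point to get right --- and the one obstacle beyond routine bookkeeping --- is this correctness argument for guessing $B = bp(M^{*})$: the matching that Theorem~\ref{th:bip_decision} hands back for this $B$ need not be $M^{*}$ itself, so optimality of the output cannot be read off directly and must instead be deduced from the round index (no feasible matching has fewer blocking edges, since an earlier round would then have succeeded). One also needs the small but essential remark that removing a candidate blocking set never deletes a forced edge, so that the forced-edge constraint $Q \subseteq M$ remains satisfiable in $G_B$ and the decision procedure of Theorem~\ref{th:bip_decision} can legitimately be invoked.
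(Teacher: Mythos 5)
Your proposal is correct and follows essentially the same route as the paper's proof: enumerate candidate blocking sets of bounded size, delete each from $G$, and invoke the $O(m)$ decision algorithm of Theorem~\ref{th:bip_decision} on the remainder, for a total of $O(m^{L+1})$ time. You additionally spell out details the paper leaves implicit --- iterating the round index $k$ upward since $L$ is not known in advance, noting that deleting a blocking set never removes a forced edge, and deducing optimality of the returned matching from the round index rather than from recovering $M^{*}$ itself --- but these are careful refinements of the same argument, not a different approach.
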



\begin{proof} Analogously to the proof of Theorem~\ref{th:minbppconstant}, we show how to solve {\sc min bp sm restricted dec} in polynomial time.  Again, we assume for the purposes of this proof that the problem definition is modified so that, given an instance $\mathcal{I} = (G, O, P, Q, K)$, we are required to find a matching $M$ in $G$ such that $M\cap P=\emptyset$, $Q\subseteq M$ and $|bp(M)|\leq K$, or report that no such matching exists.

Suppose there exists a matching $M$ in $G$ such that $M\cap P=\emptyset$, $Q\subseteq M$ and $|bp(M)|\leq K$.  If we form $G'$ by deleting the edges in $bp(M)$ then $M$ is stable in $G'$.  Hence to find $M$ it suffices to generate all subsets $S$ of edges of size at most $K$ and form a graph $G_S$ by deleting $S$ from $G$.  We can determine in linear time whether $G_S$ admits a stable matching $M'$ that satisfies all constraints on restricted edges~\cite{DFFS03}.  If so, $|bp(M')|\leq K$ in $G$.  There are $O(Km^K)$ sets of edges to remove and checking the existence of a stable matching satisfying constraints on restricted edges can be done in $O(m)$ time.

We now show how to use the above approach to solve {\sc min bp sm restricted}.  For each value of $K$, where $K$ starts from 0 and increases by 1 after each iteration, we execute the algorithm in the previous paragraph, noting that it is sufficient to generate all subsets of size exactly $K$ at each iteration.  We terminate as soon as we find a matching $M$ satisfying the constraints on restricted edges such that $|bp(M)|\leq K$.  This process is bound to halt, since by definition, $\mathcal I$ admits a matching $M$ such that $M\cap P=\emptyset$, $Q\subseteq M$ and $|bp(M)|=L$, so $K\leq L$.  Thus the overall time complexity of this approach is $O((L+1)m^{L+1})=O(m^{L+1})$, which is polynomial if $L$ is a constant.
\end{proof}

Next we study the case of degree-constrained graphs; for most hard {\sc sm} and {\sc sr} problems, it is the most common special case to investigate~\cite{IMO09,HIM09,BMM12}. Here, we show in Theorem~\ref{th:minbp33} that {\sc min bp sm restricted} remains computationally hard even for instances with preference lists of length at most~3. On the other hand, according to Theorem~\ref{th:minbp22}, the problem can be solved in polynomial time when the length of preference lists is bounded by~2. 

\begin{theorem}
\label{th:minbp33}
	{\sc min bp sm forbidden dec} and {\sc min bp sm forced dec} are $\NP$-complete.  The result holds even if each agent's preference list is of length at most 3.
\end{theorem}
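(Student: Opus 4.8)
The plan is to reduce from \textsc{exact maximal matching} on graphs where every vertex in $U_0$ has degree $2$ and every vertex in $W_0$ has degree $3$, just as in the proofs of Theorems~\ref{th:minbpsmforbidden}--\ref{th:minbpforced1}. The key point is that the gadget constructions used in Theorem~\ref{th:minbpforced1} already produce an instance whose preference lists are short: each $u$-gadget vertex and each $w$-gadget vertex is incident to its intra-gadget edges, one relevant edge, and possibly one adjacency edge or some garbage-collector edges. The only vertices with potentially long lists are the garbage collectors $x \in X$, $y \in Y$ and the special-gadget vertices $u_0, w_0$, which are adjacent to many vertices; so the bulk of the work is to redesign these ``collector'' components so that every agent has degree at most $3$ while preserving the correspondence between maximal matchings of size $K$ in $\mathcal{I}$ and matchings of $\mathcal{I'}$ with exactly $|U_0|+|W_0|$ blocking edges that satisfy the restricted-edge constraints.

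First I would replace each high-degree garbage collector by a \emph{chain} (or small tree) of degree-$3$ vertices: instead of a single vertex $y$ adjacent to all $u_1$-vertices that may need collecting, build a path-like gadget in which consecutive links can ``absorb'' an unmatched $u_1$ in a stable way, so that leaving some $u_1$ unabsorbed forces exactly one additional blocking edge. One clean way is to route the surplus through a sequence of forbidden (resp.\ forced) edges arranged so that the number of blocking edges contributed by the collector structure is $0$ precisely when the right number of $u_1$'s (namely $|U_0|-K$ of them) are matched into it, and is strictly larger otherwise; the same device handles $X$ and the special-gadget high-degree vertices $u_0,w_0$. Throughout, I would keep the single forced edge $u_0w_0$ for the \textsc{min bp sm forced} version and, for the \textsc{min bp sm forbidden} version, adapt the reduction of Theorem~\ref{th:minbpsmforbidden} (forbidding the interconnecting/collector edges that previously were forced-adjacent) so that both statements follow from essentially the same construction.

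Having built the bounded-degree instance $\mathcal{I'}$, the two directions of correctness are proved exactly as before. For the forward direction, given a maximal matching $M$ of size $K$ in $\mathcal{I}$, I select the corresponding relevant edges, resolve each $u$- and $w$-gadget internally (contributing exactly one blocking edge per gadget, as in Theorem~\ref{th:minbpforced1}), and route the $|U_0|-K$ uncovered $u$-vertices and $|W_0|-K$ uncovered $w$-vertices into their collector chains so that those chains contribute no blocking edge; this yields a matching satisfying all restricted-edge constraints with exactly $|U_0|+|W_0|$ blocking edges. For the reverse direction, I argue that any $\mathcal{I'}$-matching respecting the restricted edges with at most $|U_0|+|W_0|$ blocking edges must spend exactly one blocking edge in each $u$- and $w$-gadget, hence none in the special gadget or the collector chains, which forces every $u_3$ (resp.\ $w_4$) into a $z$- (resp.\ $v$-) copy, forces every collector link to be ``saturated'', and makes the surviving relevant edges a maximal matching of size $K$ in $\mathcal{I}$ — using, as in Theorem~\ref{th:minbpforced1}, that an adjacency edge $u_1w_1$ would block if both $u$ and $w$ were uncovered.

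The main obstacle is the garbage-collector redesign: the original collectors rely crucially on a single vertex being adjacent to all potential surplus endpoints, which blows up the degree, so I must find a degree-$3$ structure that (i) always admits a stable internal matching, (ii) contributes zero blocking edges exactly when the correct number of surplus endpoints is absorbed, and (iii) does not interact with the gadgets to create or destroy blocking edges. Verifying (iii) — that no unforeseen blocking edge arises between a collector chain and the gadget it serves, and that the $z$/$v$-copy arguments of Theorem~\ref{th:minbpforced1} go through verbatim — is the delicate bookkeeping step, but it is routine once the chain gadget is pinned down.
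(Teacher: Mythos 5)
Your plan is a genuinely different route from the paper's (the paper abandons \textsc{exact maximal matching} for the bounded-degree case and instead reduces from \textsc{(2,2)-e3-sat}, building local variable gadgets --- a $24$-cycle with pendant forbidden edges --- and clause gadgets so that $bp(\mathcal{I})=m+n$ iff the formula is satisfiable), but as written it has a genuine gap: the garbage-collector redesign that you defer as ``routine bookkeeping'' is in fact the central obstruction, and it is not clear it can be done at all in the way you describe. The collectors in Theorem~\ref{th:minbpforced1} must stably absorb an \emph{arbitrary} $(|U_0|-K)$-subset of the $u_1$-vertices (we do not know in advance which $u$-gadgets correspond to unmatched vertices), and this is exactly what forces the complete bipartite adjacency between $Y$ and the $u_1$'s. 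A degree-$3$ replacement would need every $(|U_0|-K)$-subset of the $u_1$'s to admit a perfect matching into the collector structure, i.e.\ a Hall/expansion condition; a chain does not satisfy this, and in any case each $u_1$ has no spare adjacency budget --- it already spends its three list positions on $z_1$, its relevant edge, and its adjacency edge. Giving each $u_1$ a \emph{private} pendant collector restores bounded degree but destroys the cardinality mechanism: the global count ``exactly $|U_0|-K$ collectors, so exactly $K$ relevant edges survive'' is what encodes $|M|=K$, and with private collectors any number of $u$-gadgets could dump their $u_1$ without penalty. The same objection applies to $u_0$ and $w_0$, which are adjacent to every $u_3$, every $w_4$ and every collector, and whose high degree is what makes an unmatched collector or an unmatched $u_3$ generate a blocking edge.

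So the missing idea is precisely a bounded-degree device for enforcing a global cardinality constraint, and the paper sidesteps this by changing the source problem: in the \textsc{(2,2)-e3-sat} reduction all counting is local (each variable gadget contributes exactly one blocking edge, each clause gadget exactly one, pendant forbidden edges penalise any locally unmatched vertex), so no vertex ever needs more than three neighbours. If you want to salvage your approach you would need either an explicit degree-$3$ ``absorption network'' with the required expansion and controlled blocking behaviour, or a different encoding of the target cardinality $K$; neither is supplied, and neither is routine.
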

\begin{proof} We give a reduction from {\sc (2,2)-e3-sat}, a restriction of {\sc satisfiability}, which may be defined as follows:
\begin{pr}{\sc (2,2)-e3-sat}\ \\
	Input: a Boolean formula $B$ in CNF, in which each clause comprises exactly 3 literals and each variable appears exactly twice unnegated and exactly twice negated. \\
	Question: Is there a satisfying truth assignment for~$B$?
\end{pr}
\noindent {\sc (2,2)-e3-sat} is $\NP$-complete~\cite{BKS03}.  Given an instance $B$ of this problem, let us denote the number of variables by $n_B$ and the number of clauses by $m_B$.

Using the simple transformation described in Section~\ref{se:preliminaries}, any {\sc min bp sm forced} instance can be converted into a {\sc min bp sm forbidden} instance without increasing the preference list lengths.  Hence it is sufficient to investigate {\sc min bp sm forbidden}.

Our goal is to construct an instance $\mathcal{I}=(G,O,P)$ of {\sc min bp sm forbidden} such that $B$ is satisfiable if and only if $\mathcal I$ admits a matching $M$ with $M\cap P=\emptyset$ and $|bp(M)|\leq n_B+m_B$.
	
Our construction combines ideas from two papers. Corresponding to $B$, we introduce a variable gadget and a clause gadget. The first one is a slightly more sophisticated variant of the variable gadget used in Theorem~7 of~\cite{BMM10}, to show $\NP$-hardness of finding a maximum cardinality almost stable matching. Our clause gadget is a simplified version of another clause gadget from Theorem~1 in~\cite{BMM12}. There, the Almost Stable Roommates problem is shown to be $\NP$-hard. Both proofs investigate the case with bounded preference lists.

When constructing instance $\mathcal I$ from the given Boolean formula $B$, we keep track of the order of the three literals in each clause and the order of the two unnegated and two negated appearances of each variable. 
	
\begin{center}
\begin{figure}[h!]
\resizebox{0.9\columnwidth}{!}{
\begin{minipage}{0.6\textwidth}
\begin{tikzpicture}[scale=0.8, transform shape]

\tikzstyle{vertex} = [circle, draw=black, scale=0.7]
\tikzstyle{specvertex} = [circle, draw=black, scale=0.7]
\tikzstyle{edgelabel} = [circle, fill=white, scale=0.5, font=\huge]
\pgfmathsetmacro{\d}{2}


\node[specvertex, label=below:$r_1$] (r1) at (\d*2, 0) {};
\node[specvertex, label=below:$p_3$] (p3) at (\d*3, 0) {};
\node[specvertex, label=below:$b_3$] (b3) at (\d*4, 0) {};
\node[vertex, label=below:$a_3$] (a3) at (\d*5, 0) {};
\node[specvertex, label=below:$q_3$] (q3) at (\d*6, 0) {};
\node[specvertex, label=below:$r_2$] (r2) at (\d*7, 0) {};

\node[specvertex, label=left:$p_2$] (p2) at (\d*3, \d) {};
\node[specvertex, label=below:$b_2$] (b2) at (\d*4, \d) {};
\node[vertex, label=below:$a_2$] (a2) at (\d*5, \d) {};
\node[specvertex, label=right:$q_2$] (q2) at (\d*6, \d) {};

\node[specvertex, label=above:$p_1$] (p1) at (\d*3, \d*2) {};
\node[specvertex, label=above:$b_1$] (b1) at (\d*4, \d*2) {};
\node[vertex, label=above:$a_1$] (a1) at (\d*5, \d*2) {};
\node[specvertex, label=above:$q_1$] (q1) at (\d*6, \d*2) {};

\draw [ultra thick, dotted] (r1) -- node[edgelabel, near start] {1} node[edgelabel, near end] {3} (p3);

\draw [ultra thick, dotted] (r2) -- node[edgelabel, near start] {1} node[edgelabel, near end] {3} (q3);

\draw [ultra thick] (p3) -- node[edgelabel, near start] {2} node[edgelabel, near end] {2} (b3);
\draw [ultra thick, gray] (b3) -- node[edgelabel, near start] {1} node[edgelabel, near end] {1} (a3);
\draw [ultra thick] (a3) -- node[edgelabel, near start] {3} node[edgelabel, near end] {2} (q3);

\draw [ultra thick, gray] (p1) -- node[edgelabel, near start] {1} node[edgelabel, near end] {2} (b1);
\draw [ultra thick] (b1) -- node[edgelabel, near start] {1} node[edgelabel, near end] {1} (a1);
\draw [ultra thick, gray] (a1) -- node[edgelabel, near start] {3} node[edgelabel, near end] {1} (q1); 

\draw [ultra thick] (p1) -- node[edgelabel, near start] {2} node[edgelabel, near end] {2} (b2);
\draw [ultra thick, gray] (b2) -- node[edgelabel, near start] {1} node[edgelabel, near end] {1} (a2);
\draw [ultra thick] (a2) -- node[edgelabel, near start] {3} node[edgelabel, near end] {2} (q1);

\draw [ultra thick] (p1) -- node[edgelabel, near start] {3} node[edgelabel, near end] {1} (p2);
\draw [ultra thick, gray] (p2) -- node[edgelabel, near start] {2} node[edgelabel, near end] {1} (p3);

\draw [ultra thick] (q1) -- node[edgelabel, near start] {3} node[edgelabel, near end] {1} (q2);
\draw [ultra thick, gray] (q2) -- node[edgelabel, near start] {2} node[edgelabel, near end] {1} (q3);

\draw [ultra thick, dotted] (a1) to[out=40,in=-160, distance=0.8cm ]  node[edgelabel, near start] {2} (\d*5.7, \d*2.3);
\draw [ultra thick, dotted] (a2) to[out=20,in=-160, distance=0.8cm ]  node[edgelabel, near start] {2} (\d*5.7, \d*1.2);
\draw [ultra thick, dotted] (a3) to[out=50,in=-160, distance=0.8cm ]  node[edgelabel, near start] {2} (\d*5.7, \d*0.5);

\end{tikzpicture}

\begin{tikzpicture}[scale=0.8, transform shape]

\tikzstyle{vertex} = [circle, draw=black, scale=0.7]
\tikzstyle{specvertex} = [circle, draw=black, scale=0.7]
\tikzstyle{edgelabel} = [circle, fill=white, scale=0.5, font=\huge]
\pgfmathsetmacro{\d}{2}


\node[specvertex, label=below:$r_1$] (r1) at (\d*2, 0) {};
\node[specvertex, label=below:$p_3$] (p3) at (\d*3, 0) {};
\node[specvertex, label=below:$b_3$] (b3) at (\d*4, 0) {};
\node[vertex, label=below:$a_3$] (a3) at (\d*5, 0) {};
\node[specvertex, label=below:$q_3$] (q3) at (\d*6, 0) {};
\node[specvertex, label=below:$r_2$] (r2) at (\d*7, 0) {};

\node[specvertex, label=left:$p_2$] (p2) at (\d*3, \d) {};
\node[specvertex, label=below:$b_2$] (b2) at (\d*4, \d) {};
\node[vertex, label=below:$a_2$] (a2) at (\d*5, \d) {};
\node[specvertex, label=right:$q_2$] (q2) at (\d*6, \d) {};

\node[specvertex, label=above:$p_1$] (p1) at (\d*3, \d*2) {};
\node[specvertex, label=above:$b_1$] (b1) at (\d*4, \d*2) {};
\node[vertex, label=above:$a_1$] (a1) at (\d*5, \d*2) {};
\node[specvertex, label=above:$q_1$] (q1) at (\d*6, \d*2) {};

\draw [ultra thick, dotted] (r1) -- node[edgelabel, near start] {1} node[edgelabel, near end] {3} (p3);
\draw [ultra thick, dotted] (r2) -- node[edgelabel, near start] {1} node[edgelabel, near end] {3} (q3);

\draw [ultra thick] (p3) -- node[edgelabel, near start] {2} node[edgelabel, near end] {2} (b3);
\draw [ultra thick, gray] (b3) -- node[edgelabel, near start] {1} node[edgelabel, near end] {1} (a3);
\draw [ultra thick] (a3) -- node[edgelabel, near start] {3} node[edgelabel, near end] {2} (q3);

\draw [ultra thick] (p1) -- node[edgelabel, near start] {1} node[edgelabel, near end] {2} (b1);
\draw [ultra thick, gray] (b1) -- node[edgelabel, near start] {1} node[edgelabel, near end] {1} (a1);
\draw [ultra thick] (a1) -- node[edgelabel, near start] {3} node[edgelabel, near end] {1} (q1);

\draw [ultra thick, gray] (p1) -- node[edgelabel, near start] {2} node[edgelabel, near end] {2} (b2);
\draw [ultra thick] (b2) -- node[edgelabel, near start] {1} node[edgelabel, near end] {1} (a2);
\draw [ultra thick, gray] (a2) -- node[edgelabel, near start] {3} node[edgelabel, near end] {2} (q1);

\draw [ultra thick] (p1) -- node[edgelabel, near start] {3} node[edgelabel, near end] {1} (p2);
\draw [ultra thick, gray] (p2) -- node[edgelabel, near start] {2} node[edgelabel, near end] {1} (p3);

\draw [ultra thick] (q1) -- node[edgelabel, near start] {3} node[edgelabel, near end] {1} (q2);
\draw [ultra thick, gray] (q2) -- node[edgelabel, near start] {2} node[edgelabel, near end] {1} (q3);

\draw [ultra thick, dotted] (a1) to[out=40,in=-160, distance=0.8cm ]  node[edgelabel, near start] {2} (\d*5.7, \d*2.3);
\draw [ultra thick, dotted] (a2) to[out=20,in=-160, distance=0.8cm ]  node[edgelabel, near start] {2} (\d*5.7, \d*1.2);
\draw [ultra thick, dotted] (a3) to[out=50,in=-160, distance=0.8cm ]  node[edgelabel, near start] {2} (\d*5.7, \d*0.5);

\end{tikzpicture}

\begin{tikzpicture}[scale=0.8, transform shape]

\tikzstyle{vertex} = [circle, draw=black, scale=0.7]
\tikzstyle{specvertex} = [circle, draw=black, scale=0.7]
\tikzstyle{edgelabel} = [circle, fill=white, scale=0.5, font=\huge]
\pgfmathsetmacro{\d}{2}


\node[specvertex, label=below:$r_1$] (r1) at (\d*2, 0) {};
\node[specvertex, label=below:$p_3$] (p3) at (\d*3, 0) {};
\node[specvertex, label=below:$b_3$] (b3) at (\d*4, 0) {};
\node[vertex, label=below:$a_3$] (a3) at (\d*5, 0) {};
\node[specvertex, label=below:$q_3$] (q3) at (\d*6, 0) {};
\node[specvertex, label=below:$r_2$] (r2) at (\d*7, 0) {};

\node[specvertex, label=left:$p_2$] (p2) at (\d*3, \d) {};
\node[specvertex, label=below:$b_2$] (b2) at (\d*4, \d) {};
\node[vertex, label=below:$a_2$] (a2) at (\d*5, \d) {};
\node[specvertex, label=right:$q_2$] (q2) at (\d*6, \d) {};

\node[specvertex, label=above:$p_1$] (p1) at (\d*3, \d*2) {};
\node[specvertex, label=above:$b_1$] (b1) at (\d*4, \d*2) {};
\node[vertex, label=above:$a_1$] (a1) at (\d*5, \d*2) {};
\node[specvertex, label=above:$q_1$] (q1) at (\d*6, \d*2) {};

\draw [ultra thick, dotted] (r1) -- node[edgelabel, near start] {1} node[edgelabel, near end] {3} (p3);
\draw [ultra thick, dotted] (r2) -- node[edgelabel, near start] {1} node[edgelabel, near end] {3} (q3);

\draw [ultra thick, gray] (p3) -- node[edgelabel, near start] {2} node[edgelabel, near end] {2} (b3);
\draw [ultra thick] (b3) -- node[edgelabel, near start] {1} node[edgelabel, near end] {1} (a3);
\draw [ultra thick, gray] (a3) -- node[edgelabel, near start] {3} node[edgelabel, near end] {2} (q3);

\draw [ultra thick] (p1) -- node[edgelabel, near start] {1} node[edgelabel, near end] {2} (b1);
\draw [ultra thick, gray] (b1) -- node[edgelabel, near start] {1} node[edgelabel, near end] {1} (a1);
\draw [ultra thick] (a1) -- node[edgelabel, near start] {3} node[edgelabel, near end] {1} (q1);

\draw [ultra thick] (p1) -- node[edgelabel, near start] {2} node[edgelabel, near end] {2} (b2);
\draw [ultra thick, gray] (b2) -- node[edgelabel, near start] {1} node[edgelabel, near end] {1} (a2);
\draw [ultra thick] (a2) -- node[edgelabel, near start] {3} node[edgelabel, near end] {2} (q1);

\draw [ultra thick, gray] (p1) -- node[edgelabel, near start] {3} node[edgelabel, near end] {1} (p2);
\draw [ultra thick] (p2) -- node[edgelabel, near start] {2} node[edgelabel, near end] {1} (p3);

\draw [ultra thick, gray] (q1) -- node[edgelabel, near start] {3} node[edgelabel, near end] {1} (q2);
\draw [ultra thick] (q2) -- node[edgelabel, near start] {2} node[edgelabel, near end] {1} (q3);

\draw [ultra thick, dotted] (a1) to[out=40,in=-160, distance=0.8cm ]  node[edgelabel, near start] {2} (\d*5.7, \d*2.3);
\draw [ultra thick, dotted] (a2) to[out=20,in=-160, distance=0.8cm ]  node[edgelabel, near start] {2} (\d*5.7, \d*1.2);
\draw [ultra thick, dotted] (a3) to[out=50,in=-160, distance=0.8cm ]  node[edgelabel, near start] {2} (\d*5.7, \d*0.5);

\end{tikzpicture}
\end{minipage}
\begin{minipage}{0.4\textwidth}\vspace{-8mm}
\begin{tikzpicture}[scale=0.8, transform shape]

\tikzstyle{vertex} = [circle, draw=black, scale=0.7]
\tikzstyle{specvertex} = [circle, draw=black, scale=0.7]
\tikzstyle{edgelabel} = [circle, fill=white, scale=0.5, font=\huge]
\pgfmathsetmacro{\d}{1.3}
\pgfmathsetmacro{\b}{2}

\node[vertex, label=left:$x_1$] (x1) at (0, \d*3) {};
\node[vertex, label=left:$x_2$] (x2) at (0, \d*6) {};
\node[vertex, label=left:$x_3$] (x3) at (0, \d*9) {};
\node[vertex, label=left:$x_4$] (x4) at (0, \d*12) {};

\node[vertex, label=below:$u_1$] (u1) at (\b, \d*2) {};
\node[vertex, label=below:$u_2$] (u2) at (\b, \d*3) {};
\node[vertex, label=below:$u_3$] (u3) at (\b, \d*5) {};
\node[vertex, label=below:$u_4$] (u4) at (\b, \d*6) {};
\node[vertex, label=below:$u_5$] (u5) at (\b, \d*8) {};
\node[vertex, label=below:$u_6$] (u6) at (\b, \d*9) {};
\node[vertex, label=below:$u_7$] (u7) at (\b, \d*11) {};
\node[vertex, label=below:$u_8$] (u8) at (\b, \d*12) {};

\node[vertex, label=below:$v_1$] (v1) at (\b*2, \d*1) {};
\node[vertex, label=below:$v_2$] (v2) at (\b*2, \d*3) {};
\node[vertex, label=below:$v_3$] (v3) at (\b*2, \d*4) {};
\node[vertex, label=below:$v_4$] (v4) at (\b*2, \d*6) {};
\node[vertex, label=below:$v_5$] (v5) at (\b*2, \d*7) {};
\node[vertex, label=below:$v_6$] (v6) at (\b*2, \d*9) {};
\node[vertex, label=below:$v_7$] (v7) at (\b*2, \d*10) {};
\node[vertex, label=below:$v_8$] (v8) at (\b*2, \d*12) {};

\node[vertex, label=right:$y_4$] (y4) at (\b*3, \d*12) {};
\node[vertex, label=right:$y_1$] (y1) at (\b*3, \d*3) {};
\node[vertex, label=right:$y_2$] (y2) at (\b*3, \d*6) {};
\node[vertex, label=right:$y_3$] (y3) at (\b*3, \d*9) {};

\draw [ultra thick, gray] (x1) -- node[edgelabel, near start] {1} node[edgelabel, near end] {1} (u1);
\draw [ultra thick, gray] (x2) -- node[edgelabel, near start] {1} node[edgelabel, near end] {2} (u3);
\draw [ultra thick, gray] (x3) -- node[edgelabel, near start] {3} node[edgelabel, near end] {1} (u5);
\draw [ultra thick, gray] (x4) -- node[edgelabel, near start] {3} node[edgelabel, near end] {1} (u7);

\draw [ultra thick] (x1) -- node[edgelabel, near start] {3} node[edgelabel, near end] {1} (u2);
\draw [ultra thick] (x2) -- node[edgelabel, near start] {3} node[edgelabel, near end] {1} (u4);
\draw [ultra thick] (x3) -- node[edgelabel, near start] {1} node[edgelabel, near end] {2} (u6);
\draw [ultra thick] (x4) -- node[edgelabel, near start] {1} node[edgelabel, near end] {1} (u8);

\draw [ultra thick, gray] (u2) -- node[edgelabel, near start] {2} node[edgelabel, near end] {1} (v2);
\draw [ultra thick, gray] (u4) -- node[edgelabel, near start] {2} node[edgelabel, near end] {1} (v4);
\draw [ultra thick, gray] (u6) -- node[edgelabel, near start] {1} node[edgelabel, near end] {2} (v6);
\draw [ultra thick, gray] (u8) -- node[edgelabel, near start] {2} node[edgelabel, near end] {1} (v8);

\draw [ultra thick] (v2) -- node[edgelabel, near start] {2} node[edgelabel, near end] {1} (y1);
\draw [ultra thick] (v4) -- node[edgelabel, near start] {2} node[edgelabel, near end] {1} (y2);
\draw [ultra thick] (v6) -- node[edgelabel, near start] {1} node[edgelabel, near end] {2} (y3);
\draw [ultra thick] (v8) -- node[edgelabel, near start] {2} node[edgelabel, near end] {1} (y4);

\draw [ultra thick, gray] (v3) -- node[edgelabel, near start] {1} node[edgelabel, near end] {2} (y1);
\draw [ultra thick, gray] (v5) -- node[edgelabel, near start] {1} node[edgelabel, near end] {2} (y2);
\draw [ultra thick, gray] (v7) -- node[edgelabel, near start] {2} node[edgelabel, near end] {1} (y3);
\draw [ultra thick, gray] (v1)  to[out=15,in=-75] node[edgelabel, near start] {2} node[edgelabel, near end] {2} (y4);

\draw [ultra thick] (u1) -- node[edgelabel, near start] {2} node[edgelabel, near end] {1} (v1);
\draw [ultra thick] (u3) -- node[edgelabel, near start] {1} node[edgelabel, near end] {2} (v3);
\draw [ultra thick] (u5) -- node[edgelabel, near start] {2} node[edgelabel, near end] {2} (v5);
\draw [ultra thick] (u7) -- node[edgelabel, near start] {2} node[edgelabel, near end] {1} (v7);

\foreach \from in {x1, x2, x3, x4}
   \draw [ultra thick, dotted] (\from) to[out=130,in=30, distance=0.8cm ]  node[edgelabel, near start] {2} ($ (\from) + (-\d,0) $);

\end{tikzpicture}
\end{minipage}
}
\caption{A clause and a variable gadget with their special matchings, marked by grey edges. The dotted edges are forbidden.}
\label{fi:forb33}
\end{figure}
\end{center}

\paragraph{The variable gadget} For each variable in $B$, a \emph{variable gadget}, which is a graph on 44 vertices, is defined. The right hand-side of Figure~\ref{fi:forb33} illustrates the essential part of such a gadget, which is a cycle of length~24. This cycle contains no forbidden edges, and each vertex along it has degree~3 due to being incident to an additional forbidden edge. For sake of simplicity, only four of these forbidden edges are depicted in the figure, namely the ones incident to vertices $x_1, x_2, x_3$ and $x_4$, as they are responsible for the communication between clause and variable gadgets. Each of these four vertices has its third, forbidden edge connected to a clause gadget. These edges are called \emph{interconnecting edges} and are ranked second on the preference list of both of their end vertices.
	
	Consider a variable~$v$ in $B$. Due to the properties of {\sc (2,2)-e3-sat}, $v$ occurs twice in unnegated form, say, in clauses $C_1$ and~$C_2$.  Suppose that $v$'s first unnegated appearance is as the $i$th literal of $C_1$.  This is represented by the interconnecting edge between $x_1$ (in the vertex gadget corresponding to $v$) and vertex $a_i$ in the clause gadget (described below) corresponding to~$C_1$. Similarly, $x_2$ is connected to an $a$-vertex in the clause gadget of~$C_2$. The same variable, $v$, also appears twice in negated form. The relevant variables in the gadgets representing those clauses are connected to $x_3$ and $x_4$. The other end vertices of these two interconnecting edges mark where these two literals appear in their clauses.
	
	As mentioned before, all vertices along the cycle have exactly one forbidden edge attached to them. In the case of $x_1$, $x_2$, $x_3$ and $x_4$, these edges take the form of interconnecting edges.  Regarding the remaining 20 vertices of the cycle, there is a dummy vertex with a forbidden edge attached to each of them, which we call \emph{variable pendant edges} (these edges are not depicted in Figure~\ref{fi:forb33}). This edge is their last choice. These edges guarantee that if any of these 20 vertices remain unmatched in the cycle, then there is a contribution of one blocking edge.
	
	Two special matchings are defined on a variable gadget: $M_T$, denoted by grey edges and $M_F$, comprising the black edges. While $M_T$ is blocked by $x_4u_8$ exclusively, $M_F$ is blocked by $x_1u_1$ exclusively.
	
\begin{claim}
\label{cl:var} 
	Let $M$ be a matching on a variable gadget. If $M$ is not $M_T$ or $M_F$, then it is blocked by at least two edges belonging to the variable gadget.
\end{claim}

\myproof Since $x_1u_1$ and $x_4u_8$ are best-choice edges of both of their end vertices, they block any matching not containing them. If both of them are in $M$, then there is at least one unmatched vertex on the path between $u_8$ and $u_1$ via $y_4$, and another unmatched vertex on the path between $x_4$ and~$x_1$ via $y_2$. The first path comprises vertices with a forbidden edge as their last choice, therefore, it contributes a blocking edge. The only way to avoid additional blocking edges on the second path is to leave either $x_2$ or~$x_3$ unmatched. Since they both are first choices of some other vertex along the cycle, we obtain a second blocking edge.

The remaining case is when exactly one of $x_1u_1$ and $x_4u_8$ is in~$M$. If every second edge in the cycle belongs to~$M$, then it is either $M_T$ or~$M_F$. Otherwise, for simple parity reasons, there are two unmatched vertices on the 24-cycle.  Suppose firstly that $x_1u_1\in M$.  Then $x_4u_8\notin M$ so $x_4u_8$ blocks $M$.  If a $u$-, $v$- or $y$-vertex is unmatched in $M$ then we obtain a further blocking edge from the vertex's variable pendant edge.  Otherwise either $x_2$ is unmatched, so $x_2u_4$ blocks $M$, or $x_3$ is unmatched, so $x_3u_5$ blocks $M$, or $x_4$ is unmatched, so $x_4u_7$ blocks $M$.  The argument is similar if $x_4u_8\in M$. \myqed
	
\paragraph{The clause gadget} To each clause in $B$, a graph on 14 vertices is defined, which we refer to as the \emph{clause gadget}. Three of them, $a_1, a_2$ and $a_3$, are connected to variable gadgets via interconnecting edges, all ranked second. There are three special matchings on a clause gadget, blocked by only a single edge. They can be seen on the left hand-side of Figure~\ref{fi:forb33}. The grey edges selected in each of the clause gadget copies from top to bottom denote $M_1, M_2$ and $M_3$ respectively. 
In addition to the interconnecting edges, a clause gadget has two forbidden edges, namely $p_3r_1$ and $q_3r_2$.
\begin{claim}
\label{cl:clause} 
	Let $M$ be a matching in a clause gadget. If $M$ is not $M_1$, $M_2$ or $M_3$, then it is blocked by at least two edges, both of them belonging to the clause gadget. 
\end{claim}

	\myproof First, suppose that $M \neq M_i$ for all $i \in \{1,2,3\}$ and that $M$ is blocked by at most one edge. Since all three edges connecting $a$ and $b$-vertices are first choices of both of their end vertices, they block any matching not including them. Another restriction arises from the fact that the forbidden edges $r_1p_3$ and $r_2q_3$ ensure that if $p_3$ or $q_3$ is unmatched, they also contribute a blocking edge. Similarly, if $p_1$ or $q_1$ is unmatched, they contribute a blocking edge.
	
	Suppose $b_ia_i \in M$ for all~$i \in \{1,2,3\}$. Then, $p_2$ is matched either to $p_1$ or to $p_3$, leaving the other one unmatched. The same argument applies for the $q$-vertices on the other side of the gadget. Therefore, at least two edges from the clause gadget block~$M$. 
	
	In the remaining case, exactly one of the $b_ia_i$ edges is outside of~$M$. Since we are searching for a matching blocked by at most one edge, no further blocking edge can occur. Therefore, $p_1$, $q_1$, $p_3$ and $q_3$ are all matched in~$M$. From this point on, it is easy to see that all matchings fulfilling these requirements are $M_1, M_2$ and~$M_3$.  \myqed

Claims~\ref{cl:var} and~\ref{cl:clause} guarantee that if a matching $M$'s restriction to any of the variable or clause gadgets deviates from their special matchings, then $|bp(M)| > n_B+m_B$. 

\begin{claim}
$B$ is satisfiable if and only if $\mathcal I$ admits a matching $M$ such that $M\cap P=\emptyset$ and $|bp(M)|\leq n_B+m_B$.
\end{claim}

	\myproof Suppose we are given a satisfying truth assignment $f$ for $B$.  We construct a matching $M$ in $\mathcal I$ as follows.  In the variable gadgets, the edges of $M_T$ are chosen if the corresponding variable is $\true$ under $f$, and the edges of $M_F$ are chosen otherwise. There is at least one literal in each clause that is $\true$ under $f$.  If this literal is the $i$th in the clause, matching $M_i$ is chosen, where $i \in \{1,2,3\}$.  If more than one literal is true, we choose one of them arbitrarily.  Clearly $M$ contains no forbidden edges.  It is not difficult to verify that each gadget contributes a single blocking edge.  Thus $|bp(M)|\geq n_B+m_B$.  As a last step, we show that no interconnecting edge blocks~$M$, and thus $|bp(M)|=n_B+m_B$. Suppose that $a_i x_j$ blocks~$M$. Since it is the second choice of $a_i$, it follows that $b_i a_i \notin M$. We now know that the $i$th literal of the clause was $\true$ in the truth assignment. Therefore, $x_i$ is matched to its first choice.
	
	To prove the converse direction, we utilise Claims~\ref{cl:var} and~\ref{cl:clause}. On one hand, these two statements, together with the characteristics of the special matchings, prove that $bp(\mathcal{I}) \geq n_B + m_B$. On the other hand, $|bp(M)| = n_B + m_B$ occurs if and only if $M$'s restriction to variable gadgets is $M_T$ or $M_F$, and its restriction to clause gadgets is $M_1, M_2$ or~$M_3$. Then, assigning $\true$ to all variables with $M_T$ in their gadgets and $\false$ to the rest results in a truth assignment~$f$. Since no interconnecting edge blocks $M$, at least one literal per clause is $\true$, so $f$ satisfies~$B$. \myqed
\end{proof}

\begin{theorem}
\label{th:minbp22}
	{\sc min bp sm restricted} is solvable in $O(n)$ time if each preference list consists of at most 2 elements.
\end{theorem}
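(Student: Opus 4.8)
The plan is to exploit the fact that a preference list of length at most $2$ forces every vertex of $G$ to have degree at most $2$, so that the bipartite graph $G$ is a vertex-disjoint union of simple paths and (necessarily even) cycles. I would first remove the forced edges via the transformation recalled in Section~\ref{se:preliminaries}: check in $O(n)$ time that $Q$ is a matching (and report infeasibility otherwise); then for each $uw\in Q$, make $uw$ unrestricted and forbid its at most two neighbouring edges. When $Q$ is a matching no two forced edges are adjacent, so this is well defined, it does not change the number of blocking edges of any matching, and a short exchange argument shows that an optimal solution of the resulting \textsc{min bp sm forbidden} instance may be assumed to contain every originally-forced edge: if it omitted $uw$, then both $u$ and $w$ would be unmatched (their only allowed edge being $uw$), so $uw$ would block, and re-inserting $uw$ cannot create a new blocking edge at $u$ or $w$ since the neighbouring edges are still forbidden and the endpoints are only matched ``more''. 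Hence it suffices to solve \textsc{min bp sm forbidden} on a graph of maximum degree $2$, and a solution of the original instance is recovered in $O(n)$ time.

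The second observation is that whether an edge $e=uw$ blocks a matching $M$ depends only on whether $e\in M$ and on $M(u),M(w)$, hence only on the restriction of $M$ to the connected component containing $e$. Consequently $bp(\mathcal I)$ equals the sum over the components $C$ of $G$ of the minimum number of blocking edges of a matching of $C$ that avoids the forbidden edges of $C$, and the problem splits into independent subproblems, one per path-component and one per cycle-component. It then remains to solve each subproblem in time linear in the size of the component.

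On a path $v_0v_1\cdots v_k$ I would run a left-to-right dynamic program whose state, after processing the edge $v_{i-1}v_i$, consists of the pair of indicator bits ``$v_{i-2}v_{i-1}\in M$'' and ``$v_{i-1}v_i\in M$'' (at most three feasible combinations, since two adjacent edges cannot both be chosen) together with the minimum number of blocking edges among $v_0v_1,\dots,v_{i-1}v_i$ over partial matchings consistent with that state. Only when the next bit ``$v_iv_{i+1}\in M$'' is chosen can we decide whether $v_{i-1}v_i$ blocks; since every vertex has a unique first choice among its at most two neighbours, the conditions ``$v_{i-1}$ does not prefer its partner to $v_i$'' and the symmetric one at $v_i$ are read off in $O(1)$ from those three bits and the fixed preference orders, and a forbidden edge simply forces its bit to $0$. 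Each transition costs $O(1)$, so the path is handled in $O(k)$ time and an optimal matching is obtained by back-tracking. A cycle $v_0v_1\cdots v_{k-1}v_0$ is handled by guessing the bit of one fixed edge (two cases) and running the same recurrence around the cycle, taking care of the wrap-around contribution; this is again linear. Summing over all components gives the claimed $O(n)$ running time.

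The only genuinely delicate point is the bookkeeping in the dynamic program: one must verify, in each of the constantly many local configurations of chosen/forbidden edges — and in particular at the two ends of a path and across the cut of a cycle — exactly when the edge under consideration blocks, carefully distinguishing first- from second-choice partners. This is a finite case analysis; once it is tabulated, correctness of the recurrence and of the component decomposition is immediate. Equivalently, the same case analysis can be packaged as a short list of ``forbidden'' local patterns that every optimal restricted matching must avoid, which is presumably the formulation alluded to before the statement.
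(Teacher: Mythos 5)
Your proof is correct, but it takes a genuinely different route from the paper's. The paper's argument is structural: after the same forced-to-forbidden reduction, it splits every vertex whose first choice is a forbidden edge, decomposes each path or cycle into maximal unrestricted segments separated by forbidden edges, observes that each segment admits a unique stable matching, and then performs a case analysis on whether that matching covers both, one, or neither endpoint of the segment, resolving chains of segments of the last type by alternating between the unique stable matching and a matching that covers both endpoints at the cost of one blocking edge. You instead run a direct dynamic program over each path or cycle whose state is the pair of membership bits of two consecutive edges, based on the (correct) observation that with degree at most $2$ the blocking status of an edge is determined by its own bit, the bits of its at most two neighbouring edges, and the fixed first/second-choice data. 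Your route buys a mechanical, easily verified correctness argument that treats path ends, cycles and forbidden edges uniformly and avoids the paper's delicate alternating-chain analysis; the paper's route is harder to verify but exhibits the optimal matching explicitly and gives more structural insight into where the unavoidable blocking edges sit. Two minor remarks: the accumulated value in your DP should count only the edges whose blocking status is already decided, i.e.\ up to $v_{i-2}v_{i-1}$, since, as you yourself note, the status of $v_{i-1}v_i$ is settled only once the bit of $v_iv_{i+1}$ is chosen (and on a cycle it is cleanest to guess the bits of two consecutive edges and require the final DP state to match the guess); and your exchange argument showing that an optimal solution of the transformed forbidden-only instance can be assumed to contain every originally forced edge justifies a step that the paper asserts without proof.
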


\begin{proof}
	In this constructive proof we describe an algorithm that produces an optimal matching. First, the input is simplified. Then, the graph is segmented so that each subgraph falls into a category with a specified rule for selecting the edges of an optimal matching. As in previous cases, it is sufficient to tackle {\sc min bp sm forbidden}, because instances of {\sc min bp sm forced} can be transformed to this problem.
	
	Due to the degree constraints, every component of the underlying graph is a path or a cycle. If any of these components is free of forbidden edges, then we simply fix a stable matching on it. This step is executed whenever such a component appears during the course of the algorithm. For those components with forbidden edges, we split all vertices having a first-choice forbidden edge and a second-choice edge -- unrestricted or forbidden -- into two vertices. This change does not affect $|bp(M)|$, because in this case, each edge blocks the matching if and only if its other end vertex is matched to a worse partner or is unmatched. After this splitting is executed, all components contain forbidden edges that start paths or that are inside a path, being the last choices of both of their end vertices. 
	
	Each component consists of \emph{segments} of unrestricted edges, separated by forbidden edges. When talking about a segment, we always mean a series of adjacent unrestricted edges. Since unrestricted cycles have already been eliminated by fixing a stable matching on them, every segment is a path. Due to Theorem~\ref{th:rural}, each path admits a unique stable matching. Fixing a matching on a segment induces blocking edges only among the unrestricted edges of the segment and the forbidden edges adjacent to the segment. We claim that in an optimal solution, each segment and the (at most two) forbidden edges surrounding it contribute at most two blocking edges. This is simply due to the fact that any stable solution on the unrestricted edges is blocked only by forbidden edges. Therefore, deviating from this solution might only pay off if the matching restricted to this segment is blocked by a single edge and covers both of its end vertices. 
	
	The unique stable matching $M$ on a segment $\langle v_1, v_2, ..., v_k \rangle$ falls into exactly one of the following categories:
	\begin{enumerate}
		\item $M$ covers both $v_1$ and $v_k$;
		\item $M$ covers either $v_1$ or $v_k$;
		\item $M$ covers neither $v_1$ nor $v_k$.
	\end{enumerate}
	In each step of our algorithm, a segment is chosen and a matching is fixed on it. The segment and some of the forbidden edges adjacent to it are then removed from the graph. This is done in the following way in these three cases.
	
	In case~1, an optimal solution arises from choosing~$M$. If a forbidden edge $e$ is incident to either $v_1$ or $v_k$, it cannot block~$M$.  Nor can it block any superset of $M$ in the original instance, so $e$ can be deleted. In case~2, again the optimal solution arises from choosing~$M$.  Without loss of generality suppose that $v_1$ is covered.  As in case~1, if a forbidden edge is incident to $v_1$, it cannot block a superset of $M$ in the original instance.  Now suppose that a forbidden edge $e$ is incident to~$v_k$.  Edge $e$ may block $M$, and may also block a superset of $M$ in the original instance, so it is retained. 
	
The third case is divided into two subcases, depending on whether there is a matching $M'$ that is blocked by only one edge and covers both $v_1$ and~$v_k$. Finding such a matching or proving that none exists can be done iteratively, assuming that a chosen edge is the single blocking edge and then constructing $M'$ so that no more edge blocks it. If such an $M'$ does not exist, then $M$ is chosen, and the segment (but not the forbidden edges) is removed. In the end the matching restricted to this segment will be blocked by no edge other than the two forbidden edges. Suppose $M$ is not an optimal choice in this segment. Then the optimal matching $M''$ has at most one blocking edge from this segment and its adjacent edges. $M''$ must cover at least one endpoint vertex of the segment (otherwise we have at least two blocking edges) -- without loss of generality let that vertex be~$v_1$. $M''$ cannot cover $v_k$ given the non-existence of the aforementioned matching $M'$ that was sought.  Hence the forbidden edge incident to $v_k$ contributes a blocking edge. This implies that $M''$ restricted to the segment itself is a stable matching, contradicting the fact that the unique stable matching for this segment covers neither $v_1$ nor~$v_k$.

The only remaining case is that, in the segment $\langle v_1,v_2,\dots,v_k\rangle$, the unique stable matching $M$ covers neither $v_1$ nor $v_k$, but there is a matching $M'$ blocked by one (unrestricted) edge that covers both $v_1$ and $v_k$, and this is true for all remaining segments. Let $C$ be any remaining component, which is a path comprising segments $S_1,S_2,\dots,S_r$, where $S_i$ and $S_{i+1}$ are separated by a forbidden edge ($1\leq i\leq r-1$), together with a possible additional forbidden edge at each end of the path that may be remaining.  Let $M_i$ denote a stable matching in $S_i$ (which covers neither endpoint vertex, as previously noted), and let $M_i'$ denote a matching in segment $S_i$ that covers both endpoint vertices and is blocked by at most one unrestricted edge. Now let $M''$ be an optimal matching in~$C$. We will show how to transform $M''$ to $M'''$ such that $|bp(M''')|\leq |bp(M'')|$, $M_i\cup M_{i+1}\not\subseteq M'''$ and $M'_i\cup M'_{i+1}\not\subseteq M'''$ for any $i$ ($1\leq i\leq r-1$). Firstly let $M'''=M''$.  Iteratively from $i=1$ to $r-1$ we modify $M'''$, if necessary, as follows. If $M_i\cup M_{i+1}\subseteq M''$ then replace $M_i$ by $M_i'$ in~$M'''$.  Alternatively if $M_i'\cup M_{i+1}'\subseteq M''$ then replace $M'_{i+1}$ by $M_{i+1}$ in~$M'''$. It follows that $M'''$ has the desired properties once this process terminates. Thus to reach an optimal solution in $C$ it suffices to let $M^1$ be the union of $M_1$, $M_2'$, $M_3$, $\dots$, and let $M^2$ be the union of $M_1'$, $M_2$, $M_3'$, $\dots$, and pick whichever of the two admits the fewer blocking pairs in~$C$. 
\end{proof}
	
Even with the previous two theorems, we have not quite drawn the line between tractable and hard cases in terms of vertex degrees. The complexity of {\sc min bp sm restricted} remains open for the case when preference lists are of length at most~2 on one side of the bipartite graph and are of unbounded length on the other side. However we believe that this problem is solvable in polynomial time.
	
\begin{conj}
\label{co:2_infty}
	{\sc min bp sm restricted} is solvable in polynomial time if each woman's preference list consists of at most 2 elements.
\end{conj}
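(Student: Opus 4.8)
As usual I would start by reducing to \textsc{min bp sm forbidden}: forbidding every edge adjacent to a forced edge (the reduction used throughout the paper) changes neither the number of blocking edges nor the fact that every woman has at most two acceptable men. Optionally one may also normalise the instance exactly as in the proof of Theorem~\ref{th:minbp22} --- merge degree-$1$ women into their neighbours, fix a stable matching (blockable by nobody) on every connected region consisting only of unrestricted edges, and split each woman whose first choice is forbidden so that forbidden edges become mutual last choices --- but the reformulation below does not require this.

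\textbf{Reformulation of the objective.} Fix a matching $M$ with $M\cap P=\emptyset$ and let $w$ be a woman with acceptable men $u_1\succ_w u_2$ (a degree-$1$ woman is handled the same way with $u_2$ absent). Because $w$ has at most two neighbours, one checks directly that (i) the edge $u_1w$ blocks $M$ if and only if $u_1$ is unmatched or prefers $w$ to $M(u_1)$ --- whether or not $u_1w$ is forbidden; and (ii) the edge $u_2w$ blocks $M$ if and only if $u_2$ is unmatched or prefers $w$ to $M(u_2)$, \emph{and} $w$ is unmatched in $M$, where the last condition is automatic when $u_1w$ is forbidden. Summing over all women, $|bp(M)| = W(M) + \sum_{w}t_w(M)$, where $W(M)$ collects every term depending on a single man's partner only --- for each man $u$, the number of women who rank $u$ first and whom $u$ prefers to $M(u)$, plus, for women whose first edge is forbidden, the number who rank $u$ second and whom $u$ prefers to $M(u)$ --- so $W$ is a linear function of $M$ and can be written with integer (possibly negative) edge weights; and for every woman $w$ whose first edge is unrestricted, $t_w(M)=1$ exactly when $u_1$ is not matched to $w$ \emph{and} $u_2$ is unmatched or matched to a woman that $u_2$ ranks below $w$. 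Viewing men as vertices and women as edges of a multigraph $H$, each $t_w$ is a \emph{pairwise} term on the edge $w=u_1u_2$ of $H$. Hence \textsc{min bp sm forbidden} with women's lists of length $\le 2$ is equivalent to: over matchings $M$ of $G\setminus P$, minimise the linear cost $W(M)$ plus the number of women $w$ (with unrestricted first edge) that are \emph{uncovered}, where $w$ is covered iff $u_1^w w\in M$ or $u_2^w$ is matched at least as high as $w$ on $u_2^w$'s list.

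\textbf{Solving the reformulation; the main obstacle.} Dropping the correction $\sum_w t_w$ leaves a plain minimum-weight matching problem on $G\setminus P$, solvable by Edmonds' algorithm, so all the difficulty sits in the correction. Equivalently one wants to \emph{maximise} the number of covered women while paying $W(M)$, i.e.\ a minimum-weight matching together with a sum of \emph{supermodular} coverage bonuses, one per edge of the men-graph $H$ (writing $a=\mathds{1}[u_1^w w\in M]$ and $b=\mathds{1}[u_2^w$ is matched high$]$, one has $t_w=(1-a)(1-b)$, whose $+ab$ part is supermodular). My plan would be to represent each bonus by a small local gadget attached to the triple $\{u_1^w,w,u_2^w\}$ --- encoding the three options ``use $u_1^w w$'', ``keep $w$ free for $u_2^w$ to be matched high'', or ``pay one unit'' --- and to prove that an optimal matching of the gadgetised instance projects back to an optimal solution of the original one; failing that, to establish integrality of the natural linear program, or to run a dynamic programme over the components in the spirit of Theorem~\ref{th:minbp22}. \emph{The main obstacle} is precisely that these coverage bonuses are supermodular rather than submodular, so a direct minimum-cut / quadratic pseudo-Boolean reduction is unavailable; moreover $H$ is an arbitrary, typically non-bipartite, multigraph, and the triples of distinct women overlap through men of unbounded degree, so one must show that these overlaps do not actually couple the local decisions (equivalently, that the gadgets combine without interference, or that the LP stays integral). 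Making this precise is the step I do not see how to push through, and it is presumably why the statement is left as a conjecture.
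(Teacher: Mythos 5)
The statement you are asked to prove is not a theorem in the paper at all: it is Conjecture~\ref{co:2_infty}, explicitly left open by the authors (``The complexity of \textsc{min bp sm restricted} remains open for the case when preference lists are of length at most~2 on one side\dots However we believe that this problem is solvable in polynomial time''). There is therefore no proof in the paper to compare against, and your proposal does not supply one either --- you say so yourself at the end. What you do provide is a correct and useful reformulation. The reduction to \textsc{min bp sm forbidden} preserves the degree bound on the women's side, and your case analysis of when $u_1w$ and $u_2w$ block $M$ is accurate: the first-choice edge of a woman blocks exactly when $u_1$ is unmatched or prefers $w$ to $M(u_1)$ (a function of $M(u_1)$ alone), while the second-choice edge contributes the genuinely coupled term $t_w(M)=(1-a)(1-b)$ only when $u_1w$ is unrestricted. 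So the decomposition $|bp(M)|=W(M)+\sum_w t_w(M)$ with $W$ linear in the matching is sound, and it correctly isolates where the difficulty lives.

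The genuine gap is the one you name: minimising a linear matching cost plus a sum of supermodular pairwise penalties $t_w$, one per edge of the multigraph $H$ on the men, is not known to reduce to weighted matching, min-cut, or an integral LP, and the triples $\{u_1^w,w,u_2^w\}$ overlap arbitrarily through men of unbounded degree, so a local-gadget or componentwise dynamic-programming argument in the spirit of Theorem~\ref{th:minbp22} does not obviously compose. None of the three escape routes you sketch (gadgets, LP integrality, DP over components) is carried out, and without one of them the argument does not establish polynomial-time solvability. In short: your partial analysis is correct as far as it goes and plausibly narrows the conjecture to a concrete combinatorial-optimisation question, but it is an incomplete attempt at a statement the paper itself leaves unproven, and it should not be presented as a proof.
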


\subsection{Stable Roommates problem}
\label{se:sr}

Having discussed several cases of {\sc sm}, we turn our attention to non-bipartite instances. Since {\sc sm} is a restriction of {\sc sr}, all established results on  the $\NP$-hardness and inapproximability of {\sc min bp sm restricted} carry over to the non-bipartite {\sc sr} case. As a matter of fact, more is true, since {\sc min bp sr restricted} is $\NP$-hard and difficult to approximate even if $P=\emptyset$ and $Q=\emptyset$~\cite{ABM06}.  We summarise these observations as follows.

\begin{remark}
\label{re:sr_minbp}
By Theorems~\ref{th:minbpsmforbidden} and \ref{th:inappr_minbp}, each of
{\sc min bp sr forbidden} and {\sc min bp sr forced} is $\NP$-hard and not approximable within $n^{1-\varepsilon}$, for any $\varepsilon > 0$, unless $\P=\NP$.  Moreover Theorems~\ref{th:minbp33} and \ref{th:minbpforced1} imply that each of
{\sc min bp sr forbidden} and {\sc min bp sr forced} is $\NP$-hard even if all preference lists are of length at most~3 or, in the latter case, $|Q| = 1$.  Finally {\sc min bp sr restricted} is $\NP$-hard and not approximable within $n^{\frac{1}{2}-\varepsilon}$, for any $\varepsilon > 0$, unless $\P=\NP$, even if $P=\emptyset$ and $Q=\emptyset$~\cite{ABM06}. 
\end{remark}

Remark~\ref{re:sr_minbp} already shows that Theorem~\ref{th:minbppconstant} does not carry over to the {\sc sr} case, since {\sc min bp sr forbidden} is computationally hard even if $P=\emptyset$. As for the other polynomially solvable cases, the proof of Theorem~\ref{th:minbpbpconstant} carries over without applying any modifications. Theorem~\ref{th:minbp22} also carries over to the {\sc sr} case, but it needs a slight modification. If $\deg(v) \leq 2$ for every $v \in V(G)$, then $G$ consists of paths and cycles. Each odd preference cycle without a forbidden edge contributes at least one blocking edge to any matching~\cite{Tan91} and any maximal matching on such a cycle is blocked by exactly one edge. On the remainder of the graph, the algorithm described in the proof of Theorem~\ref{th:minbp22} delivers an optimal matching for {\sc min bp sr restricted}. The following remark summarises the discussed positive results.

\begin{remark}
{\sc min bp sr restricted} is solvable in polynomial time if the minimal number of edges blocking an optimal solution is a constant or if each preference list consists of at most 2 elements.
\end{remark}


\section{Stable matchings with the minimum number of violated constraints on restricted edges}
\label{se:viol_const}

In this section, we study the second intuitive approximation concept. The desired matching is stable and violates as few constraints on restricted edges as possible. We return to our example that already appeared in Figure~\ref{fi:base}. As already mentioned earlier, the instance admits a single stable matching, namely $M = \{u_1w_1, u_2w_2, u_3w_3, u_4w_4\}$. Since $M$ contains both forbidden edges, the minimum number of violated constraints on restricted edges is~2.

This section is structured as follows: in Section~\ref{sec:gen4}, complexity and approximability results are presented for {\sc sm min restricted violations}, {\sc sr min forbidden}, {\sc sr max forced} and {\sc sr min restricted violations}.  In Section~\ref{sec:bdd4} we consider the complexity of {\sc sr min restricted violations} when the degree of the underlying graph is bounded.

\subsection{General complexity and approximability results}
\label{sec:gen4}
	As mentioned in Section~\ref{se:intro}, a weighted stable matching instance models {\sc sm min restricted violations}.

\begin{theorem}
\label{th:smminrestviol}
	{\sc sm min restricted violations} is solvable in polynomial time.
\end{theorem}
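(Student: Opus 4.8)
The plan is to model \textsc{sm min restricted violations} as a minimum weight stable matching problem on the same instance and then to invoke a known polynomial-time algorithm for weighted \textsc{sm}. First I would keep the bipartite graph $G$ and the preference system $O$ unchanged, and define a weight function $w : E \to \{-1,0,1\}$ by setting $w(e) = 1$ if $e \in P$, $w(e) = -1$ if $e \in Q$, and $w(e) = 0$ otherwise; this is well defined because the paper assumes $P \cap Q = \emptyset$.

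The key observation is the following identity, valid for every matching $M$ and in particular every stable matching: since for $e \in M$ the summand $w(e)$ is $1$ exactly when $e \in P$ and $-1$ exactly when $e \in Q$, we have
\[
\sum_{e \in M} w(e) \;=\; |M \cap P| - |M \cap Q| \;=\; |M \cap P| + |Q \setminus M| - |Q|,
\]
where the last equality uses $|M \cap Q| = |Q| - |Q \setminus M|$ (as $Q \cap M$ and $Q \setminus M$ partition $Q$). Because $|Q|$ is a constant independent of $M$, a stable matching $M$ minimises the objective $|M \cap P| + |Q \setminus M|$ of \textsc{sm min restricted violations} over all stable matchings if and only if it minimises $\sum_{e \in M} w(e)$ over all stable matchings.

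It then remains to compute a minimum weight stable matching for the weight function $w$. Here $w$ takes negative values, so the monotone / non-negative / integral hypotheses behind the algorithms of Irving et al.~\cite{ILG87} and Feder~\cite{Fed92,Fed94} are not directly met; however, the linear-programming-based method of Kir\'aly and Pap~\cite{KP08} solves weighted \textsc{sm} to optimality for an arbitrary real-valued weight function in polynomial time. Applying it to $(G, O, w)$ and translating the returned matching back via the displayed identity yields an optimal solution of \textsc{sm min restricted violations}. (If one prefers to avoid negative weights, one may instead run the algorithm on $w' = w + 1$; since $\sum_{e \in M} w'(e) = \sum_{e \in M} w(e) + |M|$ and, by the Rural Hospitals Theorem, all stable matchings of an \textsc{sm} instance have the same cardinality, the two objectives differ by a constant over the set of stable matchings, so any polynomial algorithm for minimum weight stable matching with non-negative weights suffices.)

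\textbf{Main obstacle.} There is no genuinely hard step — the entire content is recognising the correct $\{-1,0,1\}$-weighting and quoting an existing weighted stable matching result. The only points requiring a little care are the bookkeeping in the identity above (in particular that the $|Q|$ term is a constant) and, should one insist on non-negative weights, the appeal to the Rural Hospitals Theorem to guarantee that the extra $|M|$ term is also constant across all stable matchings.
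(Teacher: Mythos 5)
Your proposal is correct and follows essentially the same route as the paper: the identical $\{-1,0,1\}$ weight function, the observation that $\sum_{e\in M} w(e) = |M\cap P| - |M\cap Q| = |M\cap P| + |Q\setminus M| - |Q|$ differs from the objective only by the constant $|Q|$, and an appeal to a polynomial-time algorithm for minimum weight stable marriage with arbitrary real weights. Your write-up is in fact slightly more careful than the paper's (which states the identity with a sign slip on the $|Q|$ term and does not explicitly name which weighted-\textsc{sm} algorithm tolerates negative weights), but there is no substantive difference in approach.
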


\begin{proof}
We convert {\sc sm min restricted violations} into a weighted {\sc sm} problem using the following weight function:
\[   
w(e) = 
     \begin{cases}
       -1 &\quad\text{if }e\text{ is forced},\\
       0 &\quad\text{if }e\text{ is unrestricted},\\ 
       1 &\quad\text{if }e\text{ is forbidden}.
     \end{cases}
\]
Then a stable matching $M$ has weight $|M \cap P| - |M \cap Q| = |M \cap P| + |Q \setminus M| - |Q|$.  Since $|Q|$ is invariant, a stable matching of minimum weight will violate the minimum number of constraints on restricted edges.  The polynomial-time solvability of {\sc sm min restricted violations} then follows by the fact that we can find a minimum weight stable matching in {\sc sm} in polynomial time as discussed in the last paragraph of Section~\ref{se:preliminaries}.
\end{proof}

In the {\sc sr} context, finding a minimum weight stable matching is $\NP$-hard~\cite{Fed92}, so the above technique for {\sc sm} does not carry over to {\sc sr}. Indeed special cases of {\sc sr min restricted violations} are $\NP$-hard, as the following result shows.

\begin{theorem}
\label{th:sr_min_forbidden}
	{\sc sr min forbidden dec} and {\sc sr max forced dec} are $\NP$-complete.
\end{theorem}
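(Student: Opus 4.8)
\textbf{Proof plan.} I would first collapse the two statements into one. Since \textsc{sr} obeys the Rural Hospitals Theorem~\cite{GI89}, every stable matching of an instance $\mathcal{I}$ has the same cardinality $s$, which can be computed by the linear-time algorithm of~\cite{Irv85}. Given a \textsc{sr max forced} instance $(G,O,\emptyset,Q)$, consider the \textsc{sr min forbidden} instance $(G,O,E\setminus Q,\emptyset)$ on the same graph and preferences: the two instances have exactly the same stable matchings, and for any stable matching $M$ we have $|M\cap(E\setminus Q)| = s - |M\cap Q|$, since $M$ splits into $M\cap Q$ and $M\cap(E\setminus Q)$. Hence the optimal value $v$ of the second problem yields the optimal value $s-v$ of the first, and symmetrically; so \textsc{sr min forbidden} and \textsc{sr max forced} are polynomial-time equivalent, and it suffices to prove one of them $\NP$-hard. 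I would fix \textsc{sr min forbidden} and work with its decision version: given $(G,O,P)$ and an integer $k$, is there a stable matching $M$ with $|M\cap P|\le k$?

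Observe that \textsc{sr min forbidden} is precisely the problem of finding a minimum-weight stable matching in \textsc{sr} when forbidden edges carry weight $1$ and all other edges weight $0$. Minimum-weight stable matching in \textsc{sr} is $\NP$-hard~\cite{Fed92}, but with arbitrary weights; the point of this theorem is that hardness persists for $0/1$ weights. I would establish this by a direct reduction from a bounded-occurrence satisfiability problem such as \textsc{(2,2)-e3-sat}, the same source used in Theorem~\ref{th:minbp33}. The construction would attach to each variable a gadget with exactly two "stable states", a true-state and a false-state, which use the same number of forbidden edges internally, and to each clause a gadget that admits a stable completion using a fixed baseline number of forbidden edges precisely when at least one of its three literals is set true (otherwise it is either blocked — impossible in a stable matching — or forced to spend an extra forbidden edge). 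Forbidden interconnecting edges between variable and clause gadgets would enforce that the state chosen in a variable gadget agrees with the literal values propagated to the clauses. Setting $k$ to the sum of all these baselines, a stable matching achieving $|M\cap P|\le k$ exists if and only if the formula is satisfiable. (An alternative, less self-contained route is to reduce from the weighted \textsc{sr} problem directly, simulating an edge of positive integer weight $w$ by a gadget whose cheapest stable completion uses $0$ forbidden edges when the edge is unmatched and exactly $w$ when it is matched; this works provided the weights produced by the reduction of~\cite{Fed92} are polynomially bounded.)

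The main obstacle is the gadget engineering, which is more delicate in \textsc{sr} than in the bipartite case: a blocking pair may involve any two agents, and cyclic preference patterns readily yield sub-instances with no stable matching at all. Each gadget must therefore be verified to admit a stable matching in every configuration it is meant to support, while simultaneously forcing the forbidden-edge count strictly above the baseline whenever the encoded choice is inconsistent or fails to satisfy a clause. A second, necessary check is the converse direction: that any satisfying assignment can be assembled into a genuinely stable matching hitting exactly $k$ forbidden edges — this amounts to confirming that the interconnecting forbidden edges never block the assembled matching. Once the forbidden version is proved, \textsc{sr max forced} is immediate from the equivalence established in the first paragraph (or, alternatively, by forcing the dedicated "payoff" edges of each clause gadget, mirroring the adaptation used at the end of the proof of Theorem~\ref{th:minbpsmforbidden}).
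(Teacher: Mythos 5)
Your opening observation is correct and is genuinely different from (and arguably cleaner than) what the paper does: by the Rural Hospitals Theorem for \textsc{sr}, all stable matchings have the same cardinality $s$, so $|M\cap(E\setminus Q)| = s-|M\cap Q|$ and the two optimisation problems really are polynomial-time interreducible by complementing the restricted edge set. The paper instead handles \textsc{sr max forced} by re-running its reduction with a different choice of restricted edges ($Q=\{p_i\bar{q_i}\}$ instead of $P=\{p_i\bar{p_i}\}$), so your reduction of one problem to the other is a legitimate simplification of that half of the argument.

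The problem is that the other half --- the actual NP-hardness of \textsc{sr min forbidden} --- is not proved. Your second paragraph lists the properties the variable and clause gadgets would need (two stable states per variable, a baseline forbidden-edge cost per clause achievable iff some literal is true, interconnecting edges that propagate truth values without ever blocking) but constructs none of them, and you yourself identify the gadget engineering as ``the main obstacle.'' For a reduction-based hardness proof, that construction and its stability verification \emph{are} the proof; a specification of what the gadgets should achieve is not evidence that such gadgets exist, especially in the \textsc{sr} setting where, as you note, cyclic preferences can destroy the existence of stable matchings altogether. The alternative route you mention (simulating integer weights from Feder's reduction) has the same status. For comparison, the paper's proof avoids all of this by reducing from \textsc{min vx cover}: each vertex $v_i$ of the input graph receives a four-vertex gadget $\{p_i,\bar{p_i},q_i,\bar{q_i}\}$ with cyclic preferences whose only two stable configurations are $\{p_i\bar{p_i},q_i\bar{q_i}\}$ (``in the cover,'' using the one forbidden edge $p_i\bar{p_i}$) and $\{p_i\bar{q_i},\bar{p_i}q_i\}$ (``not in the cover''), and the retained edges of $G$ between the $p$-vertices force at least one endpoint of every original edge into the first configuration. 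That three-line case analysis replaces the clause machinery entirely. As it stands, your submission establishes the equivalence of the two problems but not the hardness of either.
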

\begin{proof}	
Clearly both problems belong to the class $\NP$.  We begin by proving the $\NP$-hardness of {\sc sr min forbidden dec}.  To do so, we will use a reduction from the decision version of the following problem:
\begin{pr} {\sc min vx cover} \ \\
	Input: $\mathcal{I} = G$; a graph $G$ on $n$ vertices and $m$ edges.\\
	Output: A vertex cover $C \subseteq V(G)$ such that~$|C| \leq |C'|$ for every vertex cover~$C'$.
\end{pr}
\noindent Specifically, define {\sc min vx cover dec} to be the problem of deciding, given a graph $G$ and an integer $K$, whether $G$ admits a vertex cover of size at most $K$.  {\sc min vx cover dec} is $\NP$-complete~\cite{GJ79}.


Given an instance $(G, K)$ of {\sc min vx cover dec}, the following instance $(G', O, P, K)$ of {\sc sr min forbidden dec} is constructed. The entire graph $G$ is copied, and then, a gadget is attached to each vertex $v_i \in V(G)$. It is a complete bipartite graph on four vertices: one of them is $p_i = v_i$, whilst the remaining three are denoted by  $\bar{p_i}, q_i$ and~$\bar{q_i}$. Vertex $p_i$ ranks $\bar{p_i}$ at the top, followed by all $p_j$ vertices such that $v_iv_j\in E(G)$, in arbitrary order, followed by $\bar{q_i}$ at the bottom of this list. The new vertices' orderings can be seen in Figure~\ref{fi:srminforbidden}. In order to derive an instance with complete lists, all remaining vertices can be placed in arbitrary order to the bottom of the lists. Later we will see that these edges never appear in stable matchings, neither do they block them. The set of forbidden edges comprises all $p_i\bar{p_i}$ edges corresponding to the dotted grey edges in our illustrations in Figure~\ref{fi:srminforbidden}.

\begin{center}
\begin{figure}[h]
\centering
\begin{minipage}[t][5cm][c]{0,5\textwidth}	
\begin{center}
\begin{tabular}{ r c l l l}
   $p_i$:        & $\bar{p_i}$ &adjacent $p$ vertices& $\bar{q_i}$ & rest \\
   $\bar{p_i}$:  & $q_i$       & $p_i$ & rest& \\
   $q_i$:       & $\bar{q_i}$ & $\bar{p_i}$ & rest    &\\
   $\bar{q_i}$:  & $p_i$ & $q_i$ & rest    &
 \end{tabular}
\end{center}
\end{minipage}\begin{minipage}[t][5cm][c]{0,5\textwidth}
\begin{center}	
	\begin{tikzpicture}
	\node[vertex, ultra thick, gray] (p) at (0, 0) {$p$};
	\node[vertex] (p') at (3, 0) {$\bar{p}$};
	\node[vertex] (q) at (0, -3) {$q$};
	\node[vertex] (q') at (3, -3) {$\bar{q}$};
	
	\draw [ultra thick, dotted, gray] (p) -- node[edgelabel, near start] {1} node[edgelabel, near end] {2} (p');
	\draw [] (p) -- node[edgelabel, near start] {$\geq3$} node[edgelabel, near end] {1} (q');
	\draw [] (q) -- node[edgelabel, near start] {1} node[edgelabel, near end] {2} (q');
	\draw [] (q) -- node[edgelabel, near start] {2} node[edgelabel, near end] {1} (p');
\end{tikzpicture}
\end{center}
\end{minipage}

\begin{tikzpicture}
\hspace{-3mm}
\coordinate (d) at (6.5, 0);


\node[vertex] (v1) at (0, 0) {$$};
\node[vertex] (v2) at (2, 0.5) {$$};
\node[vertex] (v3) at (2.5, 2) {$$};
\node[vertex] (v4) at (-1, 1.5) {$$};
\node[vertex] (v5) at (4, 2.5) {$$};
\node[vertex] (v6) at (1, 3) {$$};

\draw [] (v1) -- (v2);
\draw [] (v1) -- (v3);
\draw [] (v2) -- (v3);
\draw [] (v4) -- (v1);
\draw [] (v1) -- (v6);
\draw [] (v3) -- (v6);
\draw [] (v5) -- (v6);

\draw[line width=1mm,-implies,double, double distance=1mm] (4,1) -- (5,1);


\node[vertex] (v1') at ($(v1) + (d)$) {$$};
\node[vertex] (v2') at ($(v2) + (d)$) {$$};
\node[vertex] (v3') at ($(v3) + (d)$) {$$};
\node[vertex] (v4') at ($(v4) + (d)$) {$$};
\node[vertex] (v5') at ($(v5) + (d)$) {$$};
\node[vertex] (v6') at ($(v6) + (d)$) {$$};

\draw [] (v1') -- (v2');
\draw [] (v1') -- (v3');
\draw [] (v2') -- (v3');
\draw [] (v4') -- (v1');
\draw [] (v1') -- (v6');
\draw [] (v3') -- (v6');
\draw [] (v5') -- (v6');

 \ktwotwo{(v1')}{-25};
 \ktwotwo{(v2')}{-30};
 \ktwotwo{(v3')}{0};
 \ktwotwo{(v4')}{120};
 \ktwotwo{(v5')}{10};
 \ktwotwo{(v6')}{120};
\end{tikzpicture}
\caption{Adding $K_{2,2}$ to each vertex of the {\sc min vx cover dec} instance}
\label{fi:srminforbidden}
\end{figure}
\end{center}

\begin{claim}
\label{cl:vxcover_stable}
	If $M$ is a stable matching in $G'$, then for each $i$ ($1\leq i\leq n$), either $\{p_i \bar{p_i},q_i \bar{q_i}\}\subseteq M$, or $\{p_i \bar{q_i},\bar{p_i} q_i\}\subseteq M$.
\end{claim}

	\myproof This claim follows from the structure of the introduced gadget. First, we observe that in $M$, each $\bar{p_i}$ is either matched to $p_i$ or to~$q_i$, for otherwise $p_i \bar{p_i}$ blocks~$M$, since $\bar{p_i}$ is $p_i$'s first choice. Similarly, $q_i \bar{q_i} \in M$ or $q_i\bar{p_i} \in M$, for otherwise $\bar{p_i}q_i$ would block~$M$. Finally, $p_i \bar{q_i} \in M$ or $q_i \bar{q_i} \in M$, otherwise $q_i \bar{q_i}$ blocks~$M$. These three requirements imply that the claim must hold.  \myqed

\begin{claim}
	If there is a vertex cover $C \subseteq V(G)$ with $|C| \leq K$, then there is a stable matching $M$ in $G'$ for which $|M \cap P| \leq K$.
    \label{cl:vc-stable}
\end{claim}

	\myproof The matching $M$ is constructed based on the following case distinction:	
	\begin{center}
	\begin{tabular}{ l c l }
	$\left\{p_i \bar{p_i}, q_i \bar{q_i}\right\} \subseteq M$ & if & $v_i \in C$\\
	$\left\{p_i \bar{q_i}, \bar{p_i} q_i \right\} \subseteq M$ & if & $v_i \notin C$
	\end{tabular}
\end{center}
Clearly $|M \cap P| \leq K$. Moreover, no edge in the gadgets can block $M$, because the preferences inside the gadget are cyclic. Due to the vertex cover property, edges between two $p$-vertices have at least one end vertex in $C$, thus, at least one of their end vertices is matched to its first-choice partner $\bar{p}$ in~$G'$. For each vertex in~$G'$, the edges in the sets ``rest'' are worse than the edge in~$M$.  \myqed

\begin{claim}
	If there is a stable matching $M$ in $G'$ for which $|M \cap P| \leq K$, then there is a vertex cover $C \subseteq V(G)$ with~$|C| \leq K$.
\end{claim}

	\myproof Claim~\ref{cl:vxcover_stable} allows us to investigate only two cases per gadget. We use a similar function to that in Claim~\ref{cl:vc-stable}, but in the opposite direction, in order to derive $C$ from $M$, as follows:
	
	\begin{center}
	\begin{tabular}{ l c l }
	$v_i \in C$& if & $\left\{p_i \bar{p_i}, q_i \bar{q_i}\right\} \subseteq M$ \\
	$v_i \notin C$ & if & $\left\{p_i \bar{q_i}, \bar{p_i} q_i \right\} \subseteq M$
	\end{tabular}
\end{center}
Trivially, $|C| \leq K$. Suppose $C$ is not a vertex cover. Then, there is an edge $v_i v_j = p_i p_j$ for which $\left\{ p_i \bar{q_i}, p_j \bar{q_j} \right\} \subset \left\{p_i \bar{q_i}, \bar{p_i} q_i, p_j \bar{q_j}, \bar{p_j} q_j \right\} \subseteq M$. Then $p_ip_j$ blocks~$M$.  \myqed

To prove $\NP$-completeness for {\sc sr max forced dec}, let $(G',O,Q,K)$ be the constructed instance, where $G'$, $O$ and $K$ are defined as in the {\sc sr min forbidden dec} case, and $Q = \left\{ p_i \bar{q_i}: 1\leq i \leq n\right\}$.  The correctness of the reduction then follows by a similar argument
\end{proof}

We remark that {\sc min vx cover} is $\NP$-hard and cannot be approximated within a factor of $2 - \varepsilon$ for any $\varepsilon>0$, assuming that the Unique Games Conjecture (UGC) is true~\cite{KR08}. When viewing the construction in Theorem~\ref{th:sr_min_forbidden} as a reduction from {\sc min vx cover} to {\sc sr min forbidden}, the measures of an optimal solution in each problem instance are identical.  The same is true in the case of {\sc sr max forced}.  It follows that each of {\sc sr min forbidden} and {\sc sr max forced} is not approximable within a factor of $2 - \varepsilon$ for any $\varepsilon>0$, assuming the UGC holds.

When studying {\sc sr max forced}, we measured optimality by keeping track of the number of violated constraints. One might find it more intuitive instead to maximise $|Q \cap M|$, the number of forced edges in the stable matching. Our $\NP$-hardness proof for {\sc sr max forced} given in Theorem~\ref{th:sr_min_forbidden} can be used without modification to show $\NP$-hardness under this different measure, however the  approximability results need to be revisited. In fact, this modification of the measure changes the approximability of the problem as well:

\begin{theorem}
\label{th:sr_max_forced_inappr}
For {\sc sr max forced}, the maximum of $|Q \cap M|$ cannot be approximated within $n^{\frac{1}{2}-\varepsilon}$ for any $\varepsilon > 0$, unless $\P = \NP$.
\end{theorem}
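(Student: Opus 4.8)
The plan is to reduce from a problem that is already known to be hard to approximate within $n^{1/2-\varepsilon}$ in a bipartite-like structure, namely the problem {\sc min bp psmi} (perfect matching with incomplete preference lists), or—more directly—to exploit the gap-amplifying construction used in Theorem~\ref{th:inappr_minbp}. The key conceptual point is that {\sc sr max forced} with the objective of \emph{maximising} $|Q\cap M|$ behaves very differently from the version that minimises $|Q\setminus M|$: here, the optimum can be forced to be either ``large'' (linear in $n$) or essentially zero, which is exactly the kind of gap needed for an $n^{1/2-\varepsilon}$-inapproximability bound. So I would start from an {\sc sr} (or {\sc sm}, since {\sc sm} is a special case) instance with forced edges in which deciding whether a stable matching contains \emph{any} positive number of forced edges is already $\NP$-hard, and then blow it up by taking many disjoint copies glued through a shared gadget so that either all copies simultaneously admit a forced edge in some common stable matching, or none does.

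Concretely, I would first establish a ``threshold'' version: there is a polynomial transformation producing an {\sc sr} instance with a set $Q$ of forced edges such that (a)~if the source YES-instance holds, there is a stable matching $M$ with $|Q\cap M|\ge t$ for some $t=\Theta(n)$, and (b)~if the source NO-instance holds, every stable matching $M$ has $|Q\cap M|=0$. Such a gap instance follows from combining the gadget machinery already developed in the paper (the $K_{2,2}$-style gadgets of Theorem~\ref{th:sr_min_forbidden}, where in each gadget either both forced edges or neither can lie in a stable matching) with an underlying $\NP$-complete selection problem. Given the threshold instance, the standard product/padding trick amplifies the gap: take $k$ independent copies, or pad with $N - n$ ``dummy'' vertices carrying no forced edges, choosing $N$ so that $t = \Theta(N^{1/2+\delta})$ for a suitable $\delta>0$ depending on $\varepsilon$; then any approximation algorithm with ratio $N^{1/2-\varepsilon}$ would distinguish optimum $\ge t$ from optimum $=0$ (equivalently $<1$), hence solve the $\NP$-hard source problem. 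The $n^{1/2}$ exponent, as opposed to the $n^{1-\varepsilon}$ of the blocking-pairs setting, arises because the number of forced edges that can simultaneously be realised is at most roughly the square root of the vertex count in the padded instance, mirroring the bookkeeping in Inequalities~\eqref{eq:2}--\eqref{eq:17}.

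The steps, in order, would be: (1)~choose the source $\NP$-complete problem and describe the per-variable/per-clause gadgets, reusing as much of the Theorem~\ref{th:sr_min_forbidden} construction as possible so that a stable matching picks up a forced edge exactly when a corresponding Boolean choice is made; (2)~prove the two directions of correctness—a satisfying assignment yields a stable matching with $|Q\cap M|$ large, and conversely any stable matching with even one forced edge forces a globally consistent assignment, hence in the NO-case $|Q\cap M|=0$; (3)~perform the padding/amplification and verify, by a short calculation analogous to \eqref{eq:14}--\eqref{eq:17}, that $N^{1/2-\varepsilon}$ lies strictly below the YES-value $t$; (4)~conclude that an $N^{1/2-\varepsilon}$-approximation algorithm would decide the source problem in polynomial time, so none exists unless $\P=\NP$.

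The main obstacle I expect is arranging the NO-case so that the optimum is exactly $0$ rather than merely small: the gadgets must be designed so that a single forced edge in a stable matching \emph{propagates} a consistent truth assignment across the whole instance (otherwise the reduction only gives a constant-factor gap, not a polynomial one). This is precisely where the cyclic-preference structure of the $K_{2,2}$ gadgets in Theorem~\ref{th:sr_min_forbidden} is essential—Claim~\ref{cl:vxcover_stable}-type reasoning guarantees that in every stable matching each gadget is in one of exactly two states, so the ``one forced edge anywhere'' $\Rightarrow$ ``global consistency'' implication can be made to hold. Getting the interconnection edges and their ranks right so that no unintended blocking edge sneaks in, while keeping the instance size polynomial and the YES-value $t = \Theta(N^{1/2+\delta})$, is the delicate part of the argument.
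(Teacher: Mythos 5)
Your overall strategy has a fatal flaw at its foundation. You propose to build a gap instance in which the YES-case admits a stable matching with $|Q\cap M|\ge t$ while in the NO-case \emph{every} stable matching has $|Q\cap M|=0$, and then to amplify by padding. But distinguishing ``optimum $=0$'' from ``optimum $\ge 1$'' for this objective is solvable in polynomial time: for each individual edge $e\in Q$ one can decide in $O(m)$ time whether some stable matching contains $e$ (Theorem~\ref{th:sr_decision}, treating $e$ as a single forced edge), and the optimum is positive if and only if the answer is yes for at least one $e\in Q$. Hence no reduction can make your proposed YES/NO distinction $\NP$-hard, and the padding/amplification step has nothing to amplify. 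Your stated explanation of the exponent $\tfrac12$ (that the number of simultaneously realisable forced edges is about $\sqrt{N}$) is also not what happens; no such square-root phenomenon is needed or used.

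The paper's proof is much more direct: it reuses the $K_{2,2}$-gadget construction of Theorem~\ref{th:sr_min_forbidden} verbatim but reduces from {\sc max ind set}, which is $\NP$-hard to approximate within $N^{1-\varepsilon}$. Setting $Q=\{p_i\bar{q_i} : 1\le i\le n\}$, Claim~\ref{cl:vxcover_stable} forces every stable matching to put each gadget into one of exactly two states, and $|Q\cap M|$ equals precisely the size of the corresponding independent set (the complement of the vertex cover). This is an exact, objective-preserving correspondence, so the $N^{1-\varepsilon}$ gap transfers; since the constructed instance has only $n=4N$ agents, $n^{\frac12-\varepsilon}=(4N)^{\frac12-\varepsilon}\le N^{1-\varepsilon}$ for $N\ge 4$, which yields the stated bound. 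You correctly sensed that the two-state gadget structure is the key ingredient, but the right move is an approximation-preserving reduction from an already polynomially-inapproximable problem, not a zero-versus-large gap construction.
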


\begin{proof}
We adapt the proof of Theorem~\ref{th:sr_min_forbidden} so that the reduction is from {\sc max ind set}, the problem of finding a maximum independent set in a given graph~$G=(V,E)$.  {\sc max ind set} is not approximable within $N^{1-\varepsilon}$ for any $\varepsilon > 0$, unless $\P = \NP$~\cite{Zuc07}, where $N=|V|$. In the modified reduction 
an independent set of vertices $S$ in $G$ corresponds to the matching 
\[M=\{p_i\bar{q_i},\bar{p_i}q_i : v_i\in S\}\cup
    \{p_i\bar{p_i},q_i\bar{q_i} : v_i\notin S\}\]
in the constructed instance $\mathcal{I}$ of {\sc sr max forced}.
Suppose that $A$ is an $n^{\frac{1}{2}-\varepsilon}$-approximation algorithm that approximates $|Q \cap M_{opt}|$ in $\mathcal{I}$, for some $\varepsilon>0$, where $M_{opt}$ is an optimal solution and $n$ is the number of agents in $\mathcal{I}$.  Note that $|S_{opt}|=|M_{opt}|$, where $S_{opt}$ is a maximum independent set in~$G$.  Moreover  $n^{\frac{1}{2}-\varepsilon}=(4N)^{\frac{1}{2}-\varepsilon}\leq N^{1-\varepsilon}$ since $n=4N$ and without loss of generality we can assume that~$N\geq 4$.  We thus reach a contradiction to the inapproximability of {\sc max ind set}.
\end{proof}

\subsection{Bounded parameters}
\label{sec:bdd4}
We now turn to the complexity of {\sc sr min restricted violations} and its variants when the degree of the underlying graph is bounded or some parameter of the instance can be considered as a constant. With Theorems~\ref{th:sr_min_forbidden3} and \ref{srminrest2} we draw the line between $\NP$-hard and polynomially solvable cases in terms of degree constraints.

\begin{theorem}
\label{th:sr_min_forbidden3}
{\sc sr min forbidden dec} and {\sc sr max forced dec} are $\NP$-complete even if every preference list is of length at most~3.
\end{theorem}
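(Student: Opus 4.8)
The plan is to reduce from \textsc{min vx cover} restricted to cubic graphs (which is $\NP$-hard), in the spirit of the reduction used in the proof of Theorem~\ref{th:sr_min_forbidden}. In that reduction the only vertices of unbounded degree are the vertices $p_i$, each of which carries a copy of the neighbourhood of $v_i$; so the task is to replace the gadget attached to each vertex by one in which every vertex has degree, and hence preference list, at most~$3$, while retaining a clean ``two-state'' behaviour per gadget. As in Theorem~\ref{th:sr_min_forbidden}, relabelling one edge per gadget as forced rather than forbidding another does not change the set of stable matchings, so it is enough to deal with \textsc{sr min forbidden} and let \textsc{sr max forced} follow by the same trick.

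Concretely, from a cubic graph $G=(V,E)$ with $V=\{v_1,\dots,v_n\}$ and an integer $K$ I would build an \textsc{sr} instance $\mathcal{I'}$ as follows. Fix an arbitrary labelling $1,2,3$ of the three edges incident to each $v_i$. For each $v_i$ introduce a $6$-cycle $u^i_1 u^i_2 \cdots u^i_6 u^i_1$ in which every $u^i_j$ prefers its cyclic successor $u^i_{j+1}$ to its cyclic predecessor $u^i_{j-1}$ (indices taken mod~$6$). In isolation this ``consistent'' cycle admits exactly the two stable matchings $M^i_H=\{u^i_1u^i_2,u^i_3u^i_4,u^i_5u^i_6\}$ and $M^i_L=\{u^i_2u^i_3,u^i_4u^i_5,u^i_6u^i_1\}$: both are easily checked to be stable from the successor rule, and a stable matching cannot leave a vertex unmatched (it would block with its predecessor), so it must be one of the two perfect matchings of the cycle. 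The odd vertices $u^i_1,u^i_3,u^i_5$ serve as the three ``ports'' of $v_i$. For each edge $e=v_iv_j\in E$, being the $k$-th edge at $v_i$ and the $l$-th at $v_j$, add an edge $u^i_{2k-1}u^j_{2l-1}$ and insert it into each of these ports' lists between the port's two cycle neighbours, so that port $u^i_{2k-1}$ has preference list $u^i_{2k}\succ u^j_{2l-1}\succ u^i_{2k-2}$. Finally declare every $u^i_1u^i_2$ forbidden (for \textsc{sr max forced}: declare every $u^i_2u^i_3$ forced instead). Every vertex then has degree at most~$3$ and every list has at most three entries.

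The correctness argument has two parts. First, in any stable matching $M$ of $\mathcal{I'}$ no port is matched along an added edge $u^i_{2k-1}u^j_{2l-1}$: if one were, then, tracing the only possible partners around the cycle of $v_i$ (each vertex strictly prefers, and would block with, its successor), some vertex of that cycle would be left unmatched and block an incident cycle edge. Hence each $6$-cycle is matched internally, so $M$ restricted to it equals $M^i_H$ or $M^i_L$; say $v_i$ is \emph{selected} in the former case. Second, assuming every gadget is in state $H$ or $L$, an added edge $u^i_{2k-1}u^j_{2l-1}$ blocks $M$ iff both $v_i$ and $v_j$ are unselected: a port in state $H$ is matched to its first choice and never blocks, whereas a port in state $L$ is matched to its last choice and prefers the added edge, so the edge blocks exactly when both endpoints are in state $L$. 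Thus $M$ is stable iff $C=\{v_i : v_i\text{ selected}\}$ is a vertex cover of $G$, and in that case $|M\cap P|$ equals the number of selected vertices, namely $|C|$ (for \textsc{sr max forced}, $u^i_2u^i_3\in M$ precisely when $v_i$ is unselected, so $|Q\setminus M|=|C|$). Conversely any vertex cover $C$ of $G$ gives a stable matching with $|M\cap P|=|C|$ by placing gadget $i$ in state $H$ iff $v_i\in C$. Since a stable matching always exists (take $C=V$), it follows that $\mathcal{I'}$ has a stable matching with at most $K$ violated restricted-edge constraints iff $G$ has a vertex cover of size at most~$K$, and the $\NP$-hardness of both problems follows; the reduction is clearly polynomial.

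The hard part will be the first half of the correctness argument — proving that the ``clause'' edges, ranked in the middle of the ports' short lists, cannot appear in any stable matching and therefore cannot produce spurious stable configurations. This is exactly where having only three entries per list is most restrictive; the verification is essentially a walk around the $6$-cycle but needs to be done carefully. The remaining steps parallel the bookkeeping in the proof of Theorem~\ref{th:sr_min_forbidden}.
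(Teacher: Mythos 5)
Your proposal is correct, and it follows the same strategy as the paper's proof: a reduction from \textsc{min vx cover} on graphs of degree three, in which each vertex of $G$ becomes a six-cycle with cyclically oriented preferences so that the gadget has exactly two stable internal states (one containing the restricted edge, one not), and the interconnecting edges are ranked in the middle of the ports' three-element lists so that such an edge blocks precisely when both of its endpoints are in the ``uncovered'' state. The one genuine structural difference is that the paper additionally introduces a four-agent gadget $e_j^1,\dots,e_j^4$ for every edge $e_j$ of $G$ and routes the interconnection through it, whereas you connect the two ports of adjacent vertex gadgets directly; your instance is therefore smaller ($6n$ agents rather than $6n+4m$) and the vertex-cover condition drops out of a single blocking-pair computation instead of a case analysis over the two states $E_j^1,E_j^2$ of each edge gadget. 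The price is that you must prove by hand that no cross edge appears in any stable matching and that no cycle vertex is left unmatched (your ``first half''), and this does go through: an unmatched cycle vertex blocks with its predecessor, whose first choice it is, and if a port is matched along its cross edge then chasing the forced partners around the six-cycle strands a vertex whose predecessor ends up matched to its own last choice, giving a blocking pair. The paper obtains the corresponding facts more cheaply by exhibiting one perfect stable matching and invoking the Rural Hospitals Theorem (Theorem~4.5.2 of \cite{GI89}) to conclude that every stable matching matches every agent and hence decomposes gadget by gadget into one of the two internal states; the same shortcut is available in your construction (any port matched externally leaves a non-port with no available partner or breaks parity) and would let you compress your case analysis. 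Both the forbidden and forced variants are handled correctly, and the reduction is clearly polynomial.
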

\begin{proof}
As in Theorem~\ref{th:sr_min_forbidden}, both problems belong to $\NP$.  We firstly show the $\NP$-hardness of {\sc sr min forbidden dec} for this length restriction on preference lists.  To do so, we reduce from {\sc min vx cover dec} in cubic graphs, which is $\NP$-complete~\cite{GJS76,MS77}.
Hence let $\mathcal{I} = (G,K)$ be an instance of {\sc min vx cover}, where $G=(V,E)$ is a cubic graph, $V=\{v_1,\dots,v_n\}$ and $E=\{e_1,\dots,e_m\}$. For each $i$ ($1\leq i\leq n$), suppose that $v_i$ is incident to edges $e_{j_1}$, $e_{j_2}$ and $e_{j_3}$ in $G$, where without loss of generality~$j_1<j_2<j_3$.  Define $e_{i,r}=e_{j_r}$, where $r \in \{1,2,3\}$. Similarly for each $j$ ($1\leq j\leq m$), suppose that $e_j=v_{i_1} v_{i_2}$, where without loss of generality~$i_1<i_2$. Define $v_{j,s}=v_{i_s}$, where $s \in \{1,2\}$. 

We construct an instance $\mathcal{I'}$ of {\sc sr min forbidden dec} as follows. The set $V' \cup W \cup E' \cup F$ constitutes the set of vertices in~$\mathcal{I'}$, where these sets are defined as follows:
\[
\begin{array}{llllllll}
	V' & = \{v_i^r : 1\leq i\leq n\wedge r \in \{1,2,3\}\}\\
	W  & = \{w_i^r : 1\leq i\leq n\wedge r \in \{1,2,3\}\}\\
	E' & = \{e_j^s : 1\leq j\leq m\wedge s \in \{1,2\}\}\\
	F & = \{f_j^s : 1\leq j\leq m\wedge s \in \{1,2\}\}	
\end{array}
\]
The preference lists of these vertices (also indicating the edges of the graph) are as shown in Figure~\ref{preflists}. In the preference list of a vertex $v_i^r$, the symbol $e(v_i^r)$ denotes vertex $e_j^s \in E'$ such that $e_j=e_{i,r}$ and~$v_i=v_{j,s}$.  Since $i$ and $r$ are given, $e_j$ can be computed. Now we know $i$ and $j$ in the second equation, therefore we can compute~$s$. Similarly in the preference list of vertex $e_j^s$, the symbol $v(e_j^s)$ denotes vertex $v_i^r \in V'$ such that $e_j=e_{i,r}$ and~$v_i=v_{j,s}$.

Let $P=\{v_i^1 w_i^1 : 1\leq i\leq n\}$ be the set of forbidden edges in~$\mathcal{I'}$. The edges connecting $V'$ and $E'$ capture the incidence relations of the original graph $G$, while vertices in $W$ and $F$ can be seen as garbage collectors.

\begin{figure}[h]
\centering
\begin{minipage}{0.4\textwidth}
\[
\begin{array}{rll}
v_i^1: & w_i^1 ~~ e(v_i^1) ~~ w_i^2 ~~~~ & (1\leq i\leq n) \vspace{1mm}\\
v_i^2: & w_i^2 ~~ e(v_i^2) ~~ w_i^3 & (1\leq i\leq n) \vspace{1mm}\\
v_i^3: & w_i^3 ~~ e(v_i^3) ~~ w_i^1 & (1\leq i\leq n) \vspace{1mm}\\
e_j^1: & e_j^2 ~~ v(e_j^1) ~~ f_j^2 & (1\leq j\leq m) \vspace{1mm}\\
e_j^2: & f_j^1 ~~ v(e_j^2) ~~ e_j^1 & (1\leq j\leq m) \vspace{1mm}
\end{array}
\]
\end{minipage}\hspace{15mm}\begin{minipage}{0.4\textwidth}
\[
		\begin{array}{rll}
w_i^1: & v_i^3 ~~ v_i^1             & (1\leq i\leq n) \vspace{1mm}\\
w_i^2: & v_i^1 ~~ v_i^2             & (1\leq i\leq n) \vspace{1mm}\\
w_i^3: & v_i^2 ~~ v_i^3             & (1\leq i\leq n) \vspace{1mm}\\
f_j^1: & f_j^2 ~~ e_j^2             & (1\leq j\leq m) \vspace{1mm}\\
f_j^2: & e_j^1 ~~ f_j^1             & (1\leq j\leq m) \vspace{1mm}
\end{array}
\]
\end{minipage}
\caption{Preference lists in the constructed instance of {\sc sr min forbidden}.}
\label{preflists}
\end{figure}

Finally we define some further notation in~$\mathcal{I}$. For each $i$, $1\leq i\leq n$, let $V_i^c=\{v_i^r w_i^r : r \in \{1,2,3\}\}$ and let
$V_i^u=\{v_i^r w_i^{r+1} : r \in \{1,2,3\}\}$, where addition is taken modulo~3. Note that each $V_i^c$ contains exactly one forbidden edge, while $V_i^u$ has no forbidden edge. Similarly for each $j$, $1\leq j\leq m$, let $E_j^1=\{e_j^1 e_j^2,f_j^1 f_j^2\}$ and let
$E_j^2=\{e_j^1 f_j^2,e_j^2 f_j^1\}$.

\begin{claim}
	\label{cl:bp1}
	$\mathcal{I}'$ admits a stable matching in which every vertex is matched. 
\end{claim}

	\myproof Let $M=\bigcup_{i=1}^n V_i^c \cup \bigcup_{j=1}^m E_j^1$. Starting with the argument that each $v_i^r\in V'$, each $e^1_j\in E'$ and each $f^1_j\in F$ receive its first-choice partner in $M$, it is straightforward to verify that $M$ is stable: the remaining vertices prefer, to their partners, only vertices that already have their first-choice partners.  Theorem~\ref{th:rural} implies then that every stable matching in $\mathcal{I}'$ matches every vertex in~$\mathcal{I}'$. \myqed 

In Claims~\ref{cl:bp2} and~\ref{cl:bp3} we show that $G$ has a vertex cover $C$ where $|C|\leq K$ if and only if $\mathcal{I}'$ has a stable matching $M$ where~$|M\cap P|\leq K$.

\begin{claim}
\label{cl:bp2}
	If $G$ has a vertex cover $C$ such that $|C|\leq K$ in~$\mathcal{I}$, then there is a stable matching $M$ in $\mathcal{I}'$ with~$|M\cap P|\leq K$.
\end{claim}

	\myproof We construct a matching $M$ in $\mathcal{I}$ as follows. For each $i$ ($1\leq i\leq n$), if $v_i\in C$, add $V_i^c$ to $M$, otherwise add $V_i^u$ to~$M$. For each $j$ ($1\leq j\leq m$), if $v_{j,1}\in C$, add $E_j^2$ to $M$, otherwise add $E_j^1$ to~$M$. Then $|M\cap P|=|C|\leq K$.

	Now we verify that $M$ is stable in~$\mathcal{I}$. Suppose firstly that some $v_i^r\in V'$ has its third-choice partner in $M$, and prefers $e(v_i^r)$.  Then $v_i\notin C$.  Let $e_j^s=e(v_i^r)$.  Then by definition, $e_j=e_{i,r}$ and $v_i=v_{j,s}$.  Since $v_{j,s}\notin C$, $e_j$ is covered by $C$ at its other endpoint (i.e., $v_{j,3-s}$).  By construction of $M$, $E_j^s\subseteq M$.  Hence $e_j^s$ has its first-choice partner in $M$.
    
    Now suppose that some $e_j^s$ has its third-choice partner in $M$, and prefers $v(e_j^s)$.  Then $E_j^{3-s}\subseteq M$.  It follows that $v_{j,s}\in C$, since $C$ is a vertex cover.  Let $v_i^r=v(e_j^s)$.  Then by definition, $e_j=e_{i,r}$ and $v_i=v_{j,s}$.  Thus $v_i\in C$.  By construction of $M$, $V_i^c\subseteq M$.  Hence $v_i^r$ has its first-choice partner in $M$.
%
%
%
\myqed
%

\begin{claim}
	\label{cl:bp3}
	If there is a stable matching $M$ with $|M\cap P|\leq K$ in $\mathcal{I'}$, then $G$ has a vertex cover $C$ in $\mathcal{I}$ such that~$|C|\leq K$.
\end{claim}

	\myproof We construct a set of vertices $C$ in $G$ as follows. Claim~\ref{cl:bp1} states that $M$ matches every vertex in $\mathcal{I}$.  Hence for each $i$ ($1\leq i\leq n$), either $V_i^c\subseteq M$ or~$V_i^u\subseteq M$. In the former case add $v_i$ to~$C$. As $|M\cap P|\leq K$, it follows that~$|C|\leq K$. Also, for each $j$ ($1\leq j\leq m$), as $M$ matches every vertex in $\mathcal{I}$, either $E_j^1\subseteq M$ or~$E_j^2\subseteq M$. 

	Assume that $C$ is not a vertex cover in~$G$, i.e., there is an edge $e_j = v_{j,1} v_{j,2}$ such that $v_{j,1}\notin C$ and $v_{j,2}\notin C$.  Suppose that $v_{i_1}=v_{j,1}$ and $v_{i_2}=v_{j,2}$.  Then $V_{i_1}^u\subseteq M$ and $V_{i_2}^u\subseteq M$.
    
    Now let $r\in \{1,2,3\}$ be such that $e_j=e_{i_1,r}$ and let $r'\in \{1,2,3\}$ be such that $e_j=e_{i_2,r'}$.  Then $e(v_{i_1}^r)=e_j^1$ and $v_{i_1}^r$ prefers $e_j^1$ to its partner in $M$.  Similarly $e(v_{i_2}^{r'})=e_j^2$ and $v_{i_2}^{r'}$ prefers $e_j^2$ to its partner in $M$.
    
    If $E_j^1\subseteq M$ then $v_{i_2}^{r'}e_j^2$ blocks $M$.  Otherwise $E_j^2\subseteq M$ and $v_{i_1}^{r}e_j^1$ blocks $M$.  This contradiction to the stability of $M$ implies that $C$ is a vertex cover in~$G$.  \myqed
%
	
	For {\sc sr max forced}, an analogous proof can be derived if we define the set of forced edges to be $Q=\{v_i^1 w_i^2 : 1\leq i\leq n\}$.
\end{proof}

\begin{theorem}
\label{srminrest2}
	{\sc sr min restricted violations} is solvable in $O(n)$ time if every preference list is of length at most~2.
\end{theorem}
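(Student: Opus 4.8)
The plan is to exploit that a preference-list-length bound of $2$ forces every vertex of the underlying graph $G$ to have degree at most $2$, so each connected component of $G$ is a path or a cycle. Two observations reduce the whole problem to its components. First, stability is a local property: an edge can block $M$ only if both of its endpoints lie in the same component, so $M$ is stable if and only if its restriction $M_c$ to each component $c$ is stable. Second, the objective decomposes: since $P\cap Q=\emptyset$ and each restricted edge lies in exactly one component, $|M\cap P|+|Q\setminus M|=\sum_c\bigl(|M_c\cap P_c|+|Q_c\setminus M_c|\bigr)$, where $P_c,Q_c$ are the restrictions of $P,Q$ to $c$. Hence it suffices, for each component $c$ independently, to enumerate its stable matchings, keep one minimising $|M_c\cap P_c|+|Q_c\setminus M_c|$, and output the union; and the instance admits a stable matching at all if and only if every component does.

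The next step is to show each component has only $O(1)$ stable matchings, all obtainable in time linear in its size. A path is acyclic, and (as already used in the proof of Theorem~\ref{th:minbp22}) the Rural Hospitals Theorem implies it has a \emph{unique} stable matching, which a single linear sweep computes. For cycles we first run Irving's algorithm~\cite{Irv85} on the whole instance, ignoring $P$ and $Q$; this takes $O(m)=O(n)$ time and either returns a stable matching $M_0$ or certifies that none exists, in which case we stop and report that $\mathcal{I}$ has no stable matching. Otherwise, by the roommates Rural Hospitals Theorem~\cite{GI89}, every stable matching matches exactly the agent set $A$ matched by $M_0$. For a component $c$: if $c$ contains a vertex outside $A$, deleting the (common) unmatched vertices turns $c$ into a disjoint union of paths, each with a unique stable matching, so $c$ has a unique stable matching; if instead every vertex of $c$ lies in $A$ and $c$ is a cycle, then $M_0$ restricts to a perfect matching of that cycle, so the only two candidates are its two alternating perfect matchings, each testable for stability in linear time (and an analogous fully matched path again has a unique stable matching).

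Finally, for each component we evaluate the at most two stable matchings against the restricted edges lying inside it, pick one with the fewest violated constraints, and output the union over components; by the decomposition this union is a stable matching minimising $|M\cap P|+|Q\setminus M|$. Forced edges enter only as a counting term ($|Q_c\setminus M_c|$), since in case~CV they are soft constraints, so no special handling is needed beyond the evaluation. Computing the components, running Irving's algorithm, performing the linear sweeps on the paths, and testing the two alternating matchings on each cycle all together take $O(n)$ time.

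\textbf{Main obstacle.} The delicate point is the cycle analysis: we must argue that a single component yields only a constant number of candidate stable matchings, which requires pinning down the common set of unmatched vertices (ruling out, for instance, a cycle that simultaneously admits a perfect stable matching and a stable matching leaving vertices unmatched). The roommates Rural Hospitals Theorem does precisely this; the remaining work—reducing a punctured cycle to a union of paths, checking the two alternating matchings, and the bookkeeping of the decomposed objective—is routine.
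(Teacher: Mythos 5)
Your proposal is correct and follows essentially the same route as the paper's (much terser) proof: find one stable matching to fix, via the Rural Hospitals Theorem, the common set of matched vertices, observe that each path/cycle component then admits at most two candidate matchings on that vertex set, and pick the stable one violating fewer constraints. The additional detail you supply (the component-wise decomposition of stability and of the objective, the uniqueness on paths, and the two alternating matchings on a fully matched cycle) is exactly what the paper leaves implicit.
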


\begin{proof}
    Since the the set of matched vertices is the same in all stable matchings by Theorem~\ref{th:rural}, finding a stable matching in $O(n)$ time in these very strongly restricted instances marks all vertices that need to be matched. In each component $C$, since $C$ is a path or a cycle, there are at most two possible stable matchings satisfying these constraints. We choose the stable matching in $C$ that violates fewer constraints.\end{proof}

Short preference lists are not the only case when {\sc sr min restricted violations} becomes tractable, as our last theorem shows.
    
\begin{theorem}
\label{srminrestconst}
	{\sc sr min restricted violations} is solvable in polynomial time if the number of restricted edges or the minimal number of violated constraints is constant.
\end{theorem}

\begin{proof}
Suppose firstly that $L=|P|+|Q|$ is a constant.  We will show how to solve {\sc sr min restricted violations dec} in polynomial time.  We assume that, for the purposes of this proof, the problem definition is modified so that, given an instance $\mathcal{I} = (G, O, P, Q, K)$, we are required to find a stable matching $M$ in $G$ such that $|M\cap P|+|Q\backslash M|\leq K$, or report that no such matching exists.

Our first observation is that this problem is trivially solvable if the target value $K$ satisfies $K \geq L$.  In this case, any stable matching will suffice.

Now assume that~$K<L$.  Suppose firstly that there is a stable matching $M$ in $G$ violating $k\leq K$ restrictions.  Then $|M\cap P|=k_1$ and $|Q\backslash M|=k_2$, where $k_1+k_2=k$.  If we let $P'=M\cap P$ and $Q'=Q\backslash M$ then $M$ is a stable matching in $\mathcal I$ containing no edge in $(P\backslash P')\cup Q'$ and containing all edges in $P'\cup (Q\backslash Q')$.

Hence to solve {\sc sr min restricted violations dec} we generate all subsets $S$ of $P\cup Q$ of size $k$, for each $k\leq K$.  Then we run the algorithm of~\cite{FIM07} to determine in $O(m)$ time whether there is a stable matching containing no edge in $(P\backslash S)\cup (Q\cap S)$ and containing all edges in $(P\cap S)\cup (Q\backslash S)$.  

Thus $\sum_{i = 0}^{K}{m \choose i}=\sum_{i = 0}^{L}{m \choose i}$ subsets are generated to determine whether the desired matching exists.  The number of rounds is thus $O(m^L)$, while each round takes $O(m)$ time to complete.  The overall running time is $O(m^{L+1})$.

We now show how to use the above approach in order to solve {\sc sr min restricted violations}.  If we find a solution during course of this process then $G$ admits a stable matching $M$ such that $|M\cap P|+|Q\backslash M|\leq K$.  In order to minimise $|M\cap P|+|Q\backslash M|$ it suffices to use the above technique in combination with a binary search procedure on values of $K\leq L$.  This requires $O(\log L)$ invocations of the algorithm for the decision problem, which is a constant, and hence the overall time complexity remains $O(m^{L+1})$.
\medskip

Now suppose that $L$ is the minimal number of violated constraints, and that $L$ is constant. For each value of $K$, where $K$ starts from 0 and increases by 1 after each iteration, we execute the algorithm described above to solve {\sc sr min restricted violations dec}, noting that it is sufficient to generate all subsets of size exactly $K$ at each iteration.  We terminate as soon as we find a stable matching $M$ such that 
 $|M\cap P|+|Q\backslash M|\leq K$.  This process is bound to halt, since by definition, $\mathcal I$ admits a stable matching $M$ such that $|M\cap P|+|Q\backslash M|=L$, so $K\leq L$. Thus the overall time complexity of this approach is $O((L+1)m^{L+1})=O(m^{L+1})$, which is polynomial if $L$ is a constant.
\end{proof}

\section{Conclusion and open questions}
\label{sec:conc}
    In this paper, we investigated the stable marriage and the stable roommates problems on graphs with forced and forbidden edges. Since a solution satisfying all constraints need not exist, two relaxed problems were defined. In {\sc min bp sm restricted}, constraints on restricted edges are strict, while a matching with the  minimum number of blocking edges is searched for. On the other hand, in {\sc sr min restricted violations}, we seek stable solutions that violate as few constraints on restricted edges as possible. For both problems, we determined the complexity and studied several special cases.
    
One of the most striking open questions is the approximability of {\sc sr min restricted violations}. Even though the problem can easily be formulated as a weighted {\sc sr} problem, the 2-approximation for this latter problem~\cite{TS97,TS98} only holds for instances with specific $P$ and $Q$ sets. This is due to the non-negativity and monotonicity constraints on the 2-approximation result. 
Another open question is formulated as Conjecture~\ref{co:2_infty}: the complexity of {\sc min bp sm restricted} is not known if each woman's preference list consists of at most 2 elements.

A more general direction of further research involves the {\sc sm min restricted violations} problem. We have shown that it can be solved in polynomial time, using algorithms for minimum weight stable marriage. The following question arises naturally: is there a faster method for {\sc sm min restricted violations} that avoids reliance on Feder's algorithm or linear programming methods?

Another natural generalisation is to consider preference lists involving ties. Our hardness results carry over to this case, but the positive results need to be revisited.

Besides the two main problems discussed in this paper, other approximation concepts can also be investigated in the framework of restricted edges.  For example one alternative would be to combine the two objectives that we considered.  Can we efficiently find matchings that minimise the total number of violated constraints, that is, $|bp(M)| + |M \setminus Q| + |M \cap P|$?

Counting the number of blocking pairs is the most prevalent, but not the only relaxation of stability that has been studied in the literature. Other relaxations can also be combined with the presence of restricted edges in the instance, such as the following concepts.
\begin{itemize}
 \item \emph{Maximum internally stable matchings}~\cite{Tan90}, where the goal is to maximise the set of pairs that are stable within themselves. 
 \item \emph{Maximum irreversible stable matchings}~\cite{BIM16}, where the goal is to maximise the number of irreversible pairs. A pair is \emph{irreversible} if, once it is contained in a matching, no agent from the pair will ever be in a blocking edge, irrespective of how the outside agents are matched.
 \item \emph{Socially stable matchings}~\cite{AIKMP13}, where only a fixed subset of pairs (described by a given social network graph) can be blocking.
\end{itemize}

\section*{Acknowledgements}
We would like to thank the anonymous reviewers of this paper and an earlier version of it for their valuable comments, which helped to improve the presentation, and for suggesting several of the open problems that feature in Section \ref{sec:conc}.

\bibliographystyle{elsarticle-harv} 
\bibliography{mybib}
\end{document}